\documentclass[11pt, a4paper]{article}
\pdfoutput=1 


\usepackage{paralist}
\usepackage{array}
\usepackage[cmex10]{amsmath}
\usepackage{cite}
\usepackage{graphics}
\usepackage{amsthm}
\usepackage{amssymb}
\usepackage{units}
\usepackage{cases}
\usepackage{subfigure}
\usepackage{hyperref}
\usepackage{makeidx}

\usepackage{xspace}
\usepackage{url} 

\usepackage{bbm}
\usepackage{graphicx}

\usepackage{paralist}
\usepackage{fancyref}
\usepackage[stretch=16,shrink=16,step=4]{microtype}
\usepackage[printonlyused]{acronym}
\newlength{\figwidth}
\setlength{\figwidth}{78mm}


\usepackage{paralist}
\usepackage{array}
\usepackage{cite}
\usepackage{graphics}
\usepackage{amsthm}
\usepackage{amsmath}
\usepackage{amssymb}
\usepackage{units}
\usepackage{cases}
\usepackage{subfigure}
\usepackage{hyperref}
\usepackage{makeidx}



%
%
%
%
%
%
%
%
%


\makeatletter
\def \footnotelabel{%
\edef\@currentlabel{\p@footnote\thefootnote}%
\label} \makeatother



\newcommand{\Figref}[1]{Figure~\ref{#1}}

\newcommand{\Ftref}[1]{Footnote~\ref{#1}}


\newcommand{\DMO}[1]{\expandafter\DeclareMathOperator*{#1}}

\def\input{notation_and_symbols} \clearpage{\input{notation_and_symbols} \clearpage}
\def\addsymbol #1: #2#3{$#1$ \> \parbox{5in}{#2 \dotfill \pageref{#3}}\\}
 



\def\additionletter{A}

\newcommand{\addit}{\mathcal{\expandafter\uppercase\expandafter{\additionletter}}}

\def\normletter{N}
\newcommand{\normit}{\mathcal{\expandafter\uppercase\expandafter{\normletter}}}



\newcommand{\DLM}[3]{\ensuremath{\left\expandafter#1#2\right\expandafter#3}}






\newcommand{\V}[1]{\ensuremath{\expandafter\lowercase\expandafter{\mathbf{#1}}}} 

\DMO{\linearspan}{span} 


\newcommand{\M}[1]{\ensuremath{\expandafter\uppercase\expandafter{\mathbf{#1}}}} 

\newcommand{\abs}[1]{\ensuremath{\left\vert#1\right\vert}} 




\DeclareMathOperator*{\kernel}{\mathcal{K}er} 


\DeclareMathOperator*{\rank}{rank} 



\newcommand{\adj}[1]{\ensuremath{\adjoint\PP{#1}}} 









\DeclareMathOperator*{\variance}{var} 


\theoremstyle{plain}
\newtheorem{thm}{Theorem}[section]
\newtheorem{lem}[thm]{Lemma}
\newtheorem{cor}[thm]{Corollary}

\theoremstyle{definition}
\newtheorem{dfn}[thm]{Definition}
\newtheorem{xca}[thm]{Exercise}
\newtheorem{example}[thm]{Example}
	


%
%
%
%

\usepackage{amsfonts}
\usepackage{mathrsfs}
\usepackage{xspace}
\usepackage{bm}
\usepackage{fancyref}
\usepackage{textcomp}
\usepackage[Symbol]{upgreek}

\newcommand{\safemath}[2]{\newcommand{#1}{\ensuremath{#2}\xspace}}


\newcommand{\ssa}{\mathsf{a}}
\newcommand{\ssb}{\mathsf{b}}
\newcommand{\ssc}{\mathsf{c}}
\newcommand{\ssd}{\mathsf{d}}
\newcommand{\sse}{\mathsf{e}}
\newcommand{\ssf}{\mathsf{f}}
\newcommand{\ssg}{\mathsf{g}}
\newcommand{\ssh}{\mathsf{h}}
\newcommand{\ssi}{\mathsf{i}}
\newcommand{\ssj}{\mathsf{j}}
\newcommand{\ssk}{\mathsf{k}}
\newcommand{\ssl}{\mathsf{l}}
\newcommand{\ssm}{\mathsf{m}}
\newcommand{\ssn}{\mathsf{n}}
\newcommand{\sso}{\mathsf{o}}
\newcommand{\ssp}{\mathsf{p}}
\newcommand{\ssq}{\mathsf{q}}
\newcommand{\ssr}{\mathsf{r}}
\newcommand{\sss}{\mathsf{s}}
\newcommand{\sst}{\mathsf{t}}
\newcommand{\ssu}{\mathsf{u}}
\newcommand{\ssv}{\mathsf{v}}
\newcommand{\ssw}{\mathsf{w}}
\newcommand{\ssx}{\mathsf{x}}
\newcommand{\ssy}{\mathsf{y}}
\newcommand{\ssz}{\mathsf{z}}



\safemath{\bmsa}{\bm{\ssa}}
\safemath{\bmsb}{\bm{\ssb}}
\safemath{\bmsc}{\bm{\ssc}}
\safemath{\bmsd}{\bm{\ssd}}
\safemath{\bmse}{\bm{\sse}}
\safemath{\bmsf}{\bm{\ssf}}
\safemath{\bmsg}{\bm{\ssg}}
\safemath{\bmsh}{\bm{\ssh}}
\safemath{\bmsi}{\bm{\ssi}}
\safemath{\bmsj}{\bm{\ssj}}
\safemath{\bmsk}{\bm{\ssk}}
\safemath{\bmsl}{\bm{\ssl}}
\safemath{\bmsm}{\bm{\ssm}}
\safemath{\bmsn}{\bm{\ssn}}
\safemath{\bmso}{\bm{\sso}}
\safemath{\bmsp}{\bm{\ssp}}
\safemath{\bmsq}{\bm{\ssq}}
\safemath{\bmsr}{\bm{\ssr}}
\safemath{\bmss}{\bm{\sss}}
\safemath{\bmst}{\bm{\sst}}
\safemath{\bmsu}{\bm{\ssu}}
\safemath{\bmsv}{\bm{\ssv}}
\safemath{\bmsw}{\bm{\ssw}}
\safemath{\bmsx}{\bm{\ssx}}
\safemath{\bmsy}{\bm{\ssy}}
\safemath{\bmsz}{\bm{\ssz}}

\bmdefine{\bmualphad}{\upalpha}
\bmdefine{\bmubetad}{\upbeta}
\bmdefine{\bmuchid}{\upchi}
\bmdefine{\bmudeltad}{\updelta}
\bmdefine{\bmuepsilond}{\upepsilon}
\bmdefine{\bmuvarepsilond}{\upvarepsilon}
\bmdefine{\bmuetad}{\upeta}
\bmdefine{\bmugammad}{\upgamma}
\bmdefine{\bmuiotad}{\upiota}
\bmdefine{\bmukappad}{\upkappa}
\bmdefine{\bmulambdad}{\uplambda}
\bmdefine{\bmumud}{\upmu}
\bmdefine{\bmunud}{\upnu}
\bmdefine{\bmuomegad}{\upomega}
\bmdefine{\bmuphid}{\upphi}
\bmdefine{\bmuvarphid}{\upvarphi}
\bmdefine{\bmupid}{\uppi}
\bmdefine{\bmuvarpid}{\upvarpi}
\bmdefine{\bmupsid}{\uppsi}
\bmdefine{\bmurhod}{\uprho}
\bmdefine{\bmuvarrhod}{\upvarrho}
\bmdefine{\bmusigmad}{\upsigma}
\bmdefine{\bmuvarsigmad}{\upvarsigma}
\bmdefine{\bmutaud}{\uptau}
\bmdefine{\bmuthetad}{\uptheta}
\bmdefine{\bmuvarthetad}{\upvartheta}
\bmdefine{\bmuupsilond}{\upupsilon}
\bmdefine{\bmuxid}{\upxi}
\bmdefine{\bmuzetad}{\upzeta}

\safemath{\bmua}{\mathbf{a}}
\safemath{\bmub}{\mathbf{b}}
\safemath{\bmuc}{\mathbf{c}}
\safemath{\bmud}{\mathbf{d}}
\safemath{\bmue}{\mathbf{e}}
\safemath{\bmuf}{\mathbf{f}}
\safemath{\bmug}{\mathbf{g}}
\safemath{\bmuh}{\mathbf{h}}
\safemath{\bmui}{\mathbf{i}}
\safemath{\bmuj}{\mathbf{j}}
\safemath{\bmuk}{\mathbf{k}}
\safemath{\bmul}{\mathbf{l}}
\safemath{\bmum}{\mathbf{m}}
\safemath{\bmun}{\mathbf{n}}
\safemath{\bmuo}{\mathbf{o}}
\safemath{\bmup}{\mathbf{p}}
\safemath{\bmuq}{\mathbf{q}}
\safemath{\bmur}{\mathbf{r}}
\safemath{\bmus}{\mathbf{s}}
\safemath{\bmut}{\mathbf{t}}
\safemath{\bmuu}{\mathbf{u}}
\safemath{\bmuv}{\mathbf{v}}
\safemath{\bmuw}{\mathbf{w}}
\safemath{\bmux}{\mathbf{x}}
\safemath{\bmuy}{\mathbf{y}}
\safemath{\bmuz}{\mathbf{z}}

\safemath{\bmualpha}{\bmualphad}
\safemath{\bmubeta}{\bmubetad}
\safemath{\bmuchi}{\bumchid}
\safemath{\bmudelta}{\bmudeltad}
\safemath{\bmuepsilon}{\bmuepsilond}
\safemath{\bmuvarepsilon}{\bmuvarepsilond}
\safemath{\bmueta}{\bmuetad}
\safemath{\bmugamma}{\bmugammad}
\safemath{\bmuiota}{\bmuiotad}
\safemath{\bmukappa}{\bmukappad}
\safemath{\bmulambda}{\bmulambdad}
\safemath{\bmumu}{\bmumud}
\safemath{\bmunu}{\bmunud}
\safemath{\bmuomega}{\bmuomegad}
\safemath{\bmuphi}{\bmuphid}
\safemath{\bmuvarphi}{\bmuvarphid}
\safemath{\bmupi}{\bmupid}
\safemath{\bmuvarpi}{\bmuvarpid}
\safemath{\bmupsi}{\bmupsid}
\safemath{\bmurho}{\bmurhod}
\safemath{\bmuvarrho}{\bmuvarrhod}
\safemath{\bmusigma}{\bmusigmad}
\safemath{\bmuvarsigma}{\bmuvarsigmad}
\safemath{\bmutau}{\bmutaud}
\safemath{\bmutheta}{\bmuthetad}
\safemath{\bmuvartheta}{\bmuvarthetad}
\safemath{\bmuupsilon}{\bmuupsilond}
\safemath{\bmuxi}{\bmuxid}
\safemath{\bmuzeta}{\bmuzetad}

\bmdefine{\bmiad}{a}
\bmdefine{\bmibd}{b}
\bmdefine{\bmicd}{c}
\bmdefine{\bmidd}{d}
\bmdefine{\bmied}{e}
\bmdefine{\bmifd}{f}
\bmdefine{\bmigd}{g}
\bmdefine{\bmihd}{h}
\bmdefine{\bmiid}{i}
\bmdefine{\bmijd}{j}
\bmdefine{\bmikd}{k}
\bmdefine{\bmild}{l}
\bmdefine{\bmimd}{m}
\bmdefine{\bmind}{n}
\bmdefine{\bmiod}{o}
\bmdefine{\bmipd}{p}
\bmdefine{\bmiqd}{q}
\bmdefine{\bmird}{r}
\bmdefine{\bmisd}{s}
\bmdefine{\bmitd}{t}
\bmdefine{\bmiud}{u}
\bmdefine{\bmivd}{v}
\bmdefine{\bmiwd}{w}
\bmdefine{\bmixd}{x}
\bmdefine{\bmiyd}{y}
\bmdefine{\bmizd}{z}

\bmdefine{\bmialphad}{\alpha}
\bmdefine{\bmibetad}{\beta}
\bmdefine{\bmichid}{\chi}
\bmdefine{\bmideltad}{\delta}
\bmdefine{\bmiepsilond}{\epsilon}
\bmdefine{\bmivarepsilond}{\varepsilon}
\bmdefine{\bmietad}{\eta}
\bmdefine{\bmigammad}{\gamma}
\bmdefine{\bmiiotad}{\iota}
\bmdefine{\bmikappad}{\kappa}
\bmdefine{\bmivarkappad}{\varkappa}
\bmdefine{\bmilambdad}{\lambda}
\bmdefine{\bmimud}{\mu}
\bmdefine{\bminud}{\nu}
\bmdefine{\bmiomegad}{\omega}
\bmdefine{\bmiphid}{\phi}
\bmdefine{\bmivarphid}{\varphi}
\bmdefine{\bmipid}{\pi}
\bmdefine{\bmivarpid}{\varpi}
\bmdefine{\bmipsid}{\psi}
\bmdefine{\bmirhod}{\rho}
\bmdefine{\bmivarrhod}{\varrho}
\bmdefine{\bmisigmad}{\sigma}
\bmdefine{\bmivarsigmad}{\varsigma}
\bmdefine{\bmitaud}{\tau}
\bmdefine{\bmithetad}{\theta}
\bmdefine{\bmivarthetad}{\vartheta}
\bmdefine{\bmiupsilond}{\upsilon}
\bmdefine{\bmixid}{\xi}
\bmdefine{\bmizetad}{\zeta}

\safemath{\bmia}{\bmiad}
\safemath{\bmib}{\bmibd}
\safemath{\bmic}{\bmicd}
\safemath{\bmid}{\bmidd}
\safemath{\bmie}{\bmied}
\safemath{\bmif}{\bmifd}
\safemath{\bmig}{\bmigd}
\safemath{\bmih}{\bmihd}
\safemath{\bmii}{\bmiid}
\safemath{\bmij}{\bmijd}
\safemath{\bmik}{\bmikd}
\safemath{\bmil}{\bmild}
\safemath{\bmim}{\bmimd}
\safemath{\bmin}{\bmind}
\safemath{\bmio}{\bmiod}
\safemath{\bmip}{\bmipd}
\safemath{\bmiq}{\bmiqd}
\safemath{\bmir}{\bmird}
\safemath{\bmis}{\bmisd}
\safemath{\bmit}{\bmitd}
\safemath{\bmiu}{\bmiud}
\safemath{\bmiv}{\bmivd}
\safemath{\bmiw}{\bmiwd}
\safemath{\bmix}{\bmixd}
\safemath{\bmiy}{\bmiyd}
\safemath{\bmiz}{\bmizd}

\safemath{\bmialpha}{\bmialphad}
\safemath{\bmibeta}{\bmibetad}
\safemath{\bmichi}{\bmichid}
\safemath{\bmidelta}{\bmideltad}
\safemath{\bmiepsilon}{\bmiepsilond}
\safemath{\bmivarepsilon}{\bmivarepsilond}
\safemath{\bmieta}{\bmietad}
\safemath{\bmigamma}{\bmigammad}
\safemath{\bmiiota}{\bmiiotad}
\safemath{\bmikappa}{\bmikappad}
\safemath{\bmivarkappa}{\bmivarkappad}
\safemath{\bmilambda}{\bmilambdad}
\safemath{\bmimu}{\bmimud}
\safemath{\bminu}{\bminud}
\safemath{\bmiomega}{\bmiomegad}
\safemath{\bmiphi}{\bmiphid}
\safemath{\bmivarphi}{\bmivarphid}
\safemath{\bmipi}{\bmipid}
\safemath{\bmivarpi}{\bmivarpid}
\safemath{\bmipsi}{\bmipsid}
\safemath{\bmirho}{\bmirhod}
\safemath{\bmivarrho}{\bmivarrhod}
\safemath{\bmisigma}{\bmisigmad}
\safemath{\bmivarsigma}{\bmivarsigmad}
\safemath{\bmitau}{\bmitaud}
\safemath{\bmitheta}{\bmithetad}
\safemath{\bmivartheta}{\bmivarthetad}
\safemath{\bmiupsilon}{\bmiupsilond}
\safemath{\bmixi}{\bmixid}
\safemath{\bmizeta}{\bmizetad}

\bmdefine{\bmuDeltad}{\Updelta}
\bmdefine{\bmuGammad}{\Upgamma}
\bmdefine{\bmuLambdad}{\Uplambda}
\bmdefine{\bmuOmegad}{\Upomega}
\bmdefine{\bmuPhid}{\Upphi}
\bmdefine{\bmuPid}{\Uppi}
\bmdefine{\bmuPsid}{\Uppsi}
\bmdefine{\bmuSigmad}{\Upsigma}
\bmdefine{\bmuThetad}{\Uptheta}
\bmdefine{\bmuUpsilond}{\Upupsilon}
\bmdefine{\bmuXid}{\Upxi}

\safemath{\bmuA}{\mathbf{A}}
\safemath{\bmuB}{\mathbf{B}}
\safemath{\bmuC}{\mathbf{C}}
\safemath{\bmuD}{\mathbf{D}}
\safemath{\bmuE}{\mathbf{E}}
\safemath{\bmuF}{\mathbf{F}}
\safemath{\bmuG}{\mathbf{G}}
\safemath{\bmuH}{\mathbf{H}}
\safemath{\bmuI}{\mathbf{I}}
\safemath{\bmuJ}{\mathbf{J}}
\safemath{\bmuK}{\mathbf{K}}
\safemath{\bmuL}{\mathbf{L}}
\safemath{\bmuM}{\mathbf{M}}
\safemath{\bmuN}{\mathbf{N}}
\safemath{\bmuO}{\mathbf{O}}
\safemath{\bmuP}{\mathbf{P}}
\safemath{\bmuQ}{\mathbf{Q}}
\safemath{\bmuR}{\mathbf{R}}
\safemath{\bmuS}{\mathbf{S}}
\safemath{\bmuT}{\mathbf{T}}
\safemath{\bmuU}{\mathbf{U}}
\safemath{\bmuV}{\mathbf{V}}
\safemath{\bmuW}{\mathbf{W}}
\safemath{\bmuX}{\mathbf{X}}
\safemath{\bmuY}{\mathbf{Y}}
\safemath{\bmuZ}{\mathbf{Z}}

\safemath{\bmuZero}{\mathbf{0}}
\safemath{\bmuOne}{\mathbf{1}}

\safemath{\bmuDelta}{\bmuDeltad}
\safemath{\bmuGamma}{\bmuGammad}
\safemath{\bmuLambda}{\bmuLambdad}
\safemath{\bmuOmega}{\bmuOmegad}
\safemath{\bmuPhi}{\bmuPhid}
\safemath{\bmuPi}{\bmuPid}
\safemath{\bmuPsi}{\bmuPsid}
\safemath{\bmuSigma}{\bmuSigmad}
\safemath{\bmuTheta}{\bmuThetad}
\safemath{\bmuUpsilon}{\bmuUpsilond}
\safemath{\bmuXi}{\bmuXid}

\bmdefine{\bmiAd}{A}
\bmdefine{\bmiBd}{B}
\bmdefine{\bmiCd}{C}
\bmdefine{\bmiDd}{D}
\bmdefine{\bmiEd}{E}
\bmdefine{\bmiFd}{F}
\bmdefine{\bmiGd}{G}
\bmdefine{\bmiHd}{H}
\bmdefine{\bmiId}{I}
\bmdefine{\bmiJd}{J}
\bmdefine{\bmiKd}{K}
\bmdefine{\bmiLd}{L}
\bmdefine{\bmiMd}{M}
\bmdefine{\bmiOd}{N}
\bmdefine{\bmiPd}{O}
\bmdefine{\bmiQd}{P}
\bmdefine{\bmiRd}{R}
\bmdefine{\bmiSd}{S}
\bmdefine{\bmiTd}{T}
\bmdefine{\bmiUd}{U}
\bmdefine{\bmiVd}{V}
\bmdefine{\bmiWd}{W}
\bmdefine{\bmiXd}{X}
\bmdefine{\bmiYd}{Y}
\bmdefine{\bmiZd}{Z}

\bmdefine{\bmiDeltad}{\Delta}
\bmdefine{\bmiGammad}{\Gamma}
\bmdefine{\bmiLambdad}{\Lambda}
\bmdefine{\bmiOmegad}{\Omega}
\bmdefine{\bmiPhid}{\Phi}
\bmdefine{\bmiPid}{\Pi}
\bmdefine{\bmiPsid}{\Psi}
\bmdefine{\bmiSigmad}{\Sigma}
\bmdefine{\bmiThetad}{\Theta}
\bmdefine{\bmiUpsilond}{\Upsilon}
\bmdefine{\bmiXid}{\Xi}

\safemath{\bmiA}{\bmiAd}
\safemath{\bmiB}{\bmiBd}
\safemath{\bmiC}{\bmiCd}
\safemath{\bmiD}{\bmiDd}
\safemath{\bmiE}{\bmiEd}
\safemath{\bmiF}{\bmiFd}
\safemath{\bmiG}{\bmiGd}
\safemath{\bmiH}{\bmiHd}
\safemath{\bmiI}{\bmiId}
\safemath{\bmiJ}{\bmiJd}
\safemath{\bmiK}{\bmiKd}
\safemath{\bmiL}{\bmiLd}
\safemath{\bmiM}{\bmiMd}
\safemath{\bmiN}{\bmiNd}
\safemath{\bmiO}{\bmiOd}
\safemath{\bmiP}{\bmiPd}
\safemath{\bmiQ}{\bmiQd}
\safemath{\bmiR}{\bmiRd}
\safemath{\bmiS}{\bmiSd}
\safemath{\bmiT}{\bmiTd}
\safemath{\bmiU}{\bmiUd}
\safemath{\bmiV}{\bmiVd}
\safemath{\bmiW}{\bmiWd}
\safemath{\bmiX}{\bmiXd}
\safemath{\bmiY}{\bmiYd}
\safemath{\bmiZ}{\bmiZd}

\safemath{\bmiDelta}{\bmiDeltad}
\safemath{\bmiGamma}{\bmiGammad}
\safemath{\bmiLambda}{\bmiLambdad}
\safemath{\bmiOmega}{\bmiOmegad}
\safemath{\bmiPhi}{\bmiPhid}
\safemath{\bmiPi}{\bmiPid}
\safemath{\bmiPsi}{\bmiPsid}
\safemath{\bmiSigma}{\bmiSigmad}
\safemath{\bmiTheta}{\bmiThetad}
\safemath{\bmiUpsilon}{\bmiUpsilond}
\safemath{\bmiXi}{\bmiXid}


\safemath{\evA}{\mathcal{A}}
\safemath{\evB}{\mathcal{B}}
\safemath{\evC}{\mathcal{C}}
\safemath{\evD}{\mathcal{D}}
\safemath{\evE}{\mathcal{E}}
\safemath{\evF}{\mathcal{F}}
\safemath{\evG}{\mathcal{G}}
\safemath{\evH}{\mathcal{H}}
\safemath{\evI}{\mathcal{I}}
\safemath{\evJ}{\mathcal{J}}
\safemath{\evK}{\mathcal{K}}
\safemath{\evL}{\mathcal{L}}
\safemath{\evM}{\mathcal{M}}
\safemath{\evN}{\mathcal{N}}
\safemath{\evO}{\mathcal{O}}
\safemath{\evP}{\mathcal{P}}
\safemath{\evQ}{\mathcal{Q}}
\safemath{\evR}{\mathcal{R}}
\safemath{\evS}{\mathcal{S}}
\safemath{\evT}{\mathcal{T}}
\safemath{\evU}{\mathcal{U}}
\safemath{\evV}{\mathcal{V}}
\safemath{\evW}{\mathcal{W}}
\safemath{\evX}{\mathcal{X}}
\safemath{\evY}{\mathcal{Y}}
\safemath{\evZ}{\mathcal{Z}}

\safemath{\setA}{\mathcal{A}}
\safemath{\setB}{\mathcal{B}}
\safemath{\setC}{\mathcal{C}}
\safemath{\setD}{\mathcal{D}}
\safemath{\setE}{\mathcal{E}}
\safemath{\setF}{\mathcal{F}}
\safemath{\setG}{\mathcal{G}}
\safemath{\setH}{\mathcal{H}}
\safemath{\setI}{\mathcal{I}}
\safemath{\setJ}{\mathcal{J}}
\safemath{\setK}{\mathcal{K}}
\safemath{\setL}{\mathcal{L}}
\safemath{\setM}{\mathcal{M}}
\safemath{\setN}{\mathcal{N}}
\safemath{\setO}{\mathcal{O}}
\safemath{\setP}{\mathcal{P}}
\safemath{\setQ}{\mathcal{Q}}
\safemath{\setR}{\mathcal{R}}
\safemath{\setS}{\mathcal{S}}
\safemath{\setT}{\mathcal{T}}
\safemath{\setU}{\mathcal{U}}
\safemath{\setV}{\mathcal{V}}
\safemath{\setW}{\mathcal{W}}
\safemath{\setX}{\mathcal{X}}
\safemath{\setY}{\mathcal{Y}}
\safemath{\setZ}{\mathcal{Z}}
\safemath{\emptySet}{\varnothing}

\safemath{\colA}{\mathscr{A}}
\safemath{\colB}{\mathscr{B}}
\safemath{\colC}{\mathscr{C}}
\safemath{\colD}{\mathscr{D}}
\safemath{\colE}{\mathscr{E}}
\safemath{\colF}{\mathscr{F}}
\safemath{\colG}{\mathscr{G}}
\safemath{\colH}{\mathscr{H}}
\safemath{\colI}{\mathscr{I}}
\safemath{\colJ}{\mathscr{J}}
\safemath{\colK}{\mathscr{K}}
\safemath{\colL}{\mathscr{L}}
\safemath{\colM}{\mathscr{M}}
\safemath{\colN}{\mathscr{N}}
\safemath{\colO}{\mathscr{O}}
\safemath{\colP}{\mathscr{P}}
\safemath{\colQ}{\mathscr{Q}}
\safemath{\colR}{\mathscr{R}}
\safemath{\colS}{\mathscr{S}}
\safemath{\colT}{\mathscr{T}}
\safemath{\colU}{\mathscr{U}}
\safemath{\colV}{\mathscr{V}}
\safemath{\colW}{\mathscr{W}}
\safemath{\colX}{\mathscr{X}}
\safemath{\colY}{\mathscr{Y}}
\safemath{\colZ}{\mathscr{Z}}

\safemath{\opA}{\mathbb{A}}
\safemath{\opB}{\mathbb{B}}
\safemath{\opC}{\mathbb{C}}
\safemath{\opD}{\mathbb{D}}
\safemath{\opE}{\mathbb{E}}
\safemath{\opF}{\mathbb{F}}
\safemath{\opG}{\mathbb{G}}
\safemath{\opH}{\mathbb{H}}
\safemath{\opI}{\mathbb{I}}
\safemath{\opJ}{\mathbb{J}}
\safemath{\opK}{\mathbb{K}}
\safemath{\opL}{\mathbb{L}}
\safemath{\opM}{\mathbb{M}}
\safemath{\opN}{\mathbb{N}}
\safemath{\opO}{\mathbb{O}}
\safemath{\opP}{\mathbb{P}}
\safemath{\opQ}{\mathbb{Q}}
\safemath{\opR}{\mathbb{R}}
\safemath{\opS}{\mathbb{S}}
\safemath{\opT}{\mathbb{T}}
\safemath{\opU}{\mathbb{U}}
\safemath{\opV}{\mathbb{V}}
\safemath{\opW}{\mathbb{W}}
\safemath{\opX}{\mathbb{X}}
\safemath{\opY}{\mathbb{Y}}
\safemath{\opZ}{\mathbb{Z}}
\safemath{\opZero}{\mathbb{O}}
\safemath{\identityop}{\opI}


\safemath{\sca}{a}
\safemath{\scb}{b}
\safemath{\scc}{c}
\safemath{\scd}{d}
\safemath{\sce}{e}
\safemath{\scf}{f}
\safemath{\scg}{g}
\safemath{\sch}{h}
\safemath{\sci}{i}
\safemath{\scj}{j}
\safemath{\sck}{k}
\safemath{\scl}{l}
\safemath{\scm}{m}
\safemath{\scn}{n}
\safemath{\sco}{o}
\safemath{\scp}{p}
\safemath{\scq}{q}
\safemath{\scr}{r}
\safemath{\scs}{s}
\safemath{\sct}{t}
\safemath{\scu}{u}
\safemath{\scv}{v}
\safemath{\scw}{w}
\safemath{\scx}{x}
\safemath{\scy}{y}
\safemath{\scz}{z}

\safemath{\scA}{A}
\safemath{\scB}{B}
\safemath{\scC}{C}
\safemath{\scD}{D}
\safemath{\scE}{E}
\safemath{\scF}{F}
\safemath{\scG}{G}
\safemath{\scH}{H}
\safemath{\scI}{I}
\safemath{\scJ}{J}
\safemath{\scK}{K}
\safemath{\scL}{L}
\safemath{\scM}{M}
\safemath{\scN}{N}
\safemath{\scO}{O}
\safemath{\scP}{P}
\safemath{\scQ}{Q}
\safemath{\scR}{R}
\safemath{\scS}{S}
\safemath{\scT}{T}
\safemath{\scU}{U}
\safemath{\scV}{V}
\safemath{\scW}{W}
\safemath{\scX}{X}
\safemath{\scY}{Y}
\safemath{\scZ}{Z}

\safemath{\scalpha}{\alpha}
\safemath{\scbeta}{\beta}
\safemath{\scchi}{\chi}
\safemath{\scdelta}{\delta}
\safemath{\scepsilon}{\epsilon}
\safemath{\scvarepsilon}{\varepsilon}
\safemath{\sceta}{\eta}
\safemath{\scgamma}{\gamma}
\safemath{\sciota}{\iota}
\safemath{\sckappa}{\kappa}
\safemath{\scvarkappa}{\varkappa}
\safemath{\sclambda}{\lambda}
\safemath{\scmu}{\mu}
\safemath{\scnu}{\nu}
\safemath{\scomega}{\omega}
\safemath{\scphi}{\phi}
\safemath{\scvarphi}{\varphi}
\safemath{\scpi}{\pi}
\safemath{\scvarpi}{\varpi}
\safemath{\scpsi}{\psi}
\safemath{\scrho}{\rho}
\safemath{\scvarrho}{\varrho}
\safemath{\scsigma}{\sigma}
\safemath{\scvarsigma}{\varsigma}
\safemath{\sctau}{\tau}
\safemath{\sctheta}{\theta}
\safemath{\scvartheta}{\vartheta}
\safemath{\scupsilon}{\upsilon}
\safemath{\scxi}{\xi}
\safemath{\sczeta}{\zeta}


\safemath{\veca}{\mathbf{a}}
\safemath{\vecb}{\mathbf{b}}
\safemath{\vecc}{\mathbf{c}}
\safemath{\vecd}{\mathbf{d}}
\safemath{\vece}{\mathbf{e}}
\safemath{\vecf}{\mathbf{f}}
\safemath{\vecg}{\mathbf{g}}
\safemath{\vech}{\mathbf{h}}
\safemath{\veci}{\mathbf{i}}
\safemath{\vecj}{\mathbf{j}}
\safemath{\veck}{\mathbf{k}}
\safemath{\vecl}{\mathbf{l}}
\safemath{\vecm}{\mathbf{m}}
\safemath{\vecn}{\mathbf{n}}
\safemath{\veco}{\mathbf{o}}
\safemath{\vecp}{\mathbf{p}}
\safemath{\vecq}{\mathbf{q}}
\safemath{\vecr}{\mathbf{r}}
\safemath{\vecs}{\mathbf{s}}
\safemath{\vect}{\mathbf{t}}
\safemath{\vecu}{\mathbf{u}}
\safemath{\vecv}{\mathbf{v}}
\safemath{\vecw}{\mathbf{w}}
\safemath{\vecx}{\mathbf{x}}
\safemath{\vecy}{\mathbf{y}}
\safemath{\vecz}{\mathbf{z}}
\safemath{\veczero}{\mathbf{0}}
\safemath{\vecone}{\mathbf{1}}

\safemath{\vecalpha}{\upalpha}
\safemath{\vecbeta}{\upbeta}
\safemath{\vecchi}{\upchi}
\safemath{\vecdelta}{\updelta}
\safemath{\vecepsilon}{\upepsilon}
\safemath{\vecvarepsilon}{\upvarepsilon}
\safemath{\veceta}{\upeta}
\safemath{\vecgamma}{\upgamma}
\safemath{\veciota}{\upiota}
\safemath{\veckappa}{\upkappa}
\safemath{\veclambda}{\uplambda}
\safemath{\vecmu}{\text{\textmu}}
\safemath{\vecnu}{\upnu}
\safemath{\vecomega}{\upomega}
\safemath{\vecphi}{\upphi}
\safemath{\vecvarphi}{\upvarphi}
\safemath{\vecpi}{\uppi}
\safemath{\vecvarpi}{\upvarpi}
\safemath{\vecpsi}{\uppsi}
\safemath{\vecrho}{\uprho}
\safemath{\vecvarrho}{\upvarrho}
\safemath{\vecsigma}{\upsigma}
\safemath{\vecvarsigma}{\upvarsigma}
\safemath{\vectau}{\uptau}
\safemath{\vectheta}{\uptheta}
\safemath{\vecvartheta}{\upvartheta}
\safemath{\vecupsilon}{\upupsilon}
\safemath{\vecxi}{\upxi}
\safemath{\veczeta}{\upzeta}


\safemath{\vecac}{a}
\safemath{\vecbc}{b}
\safemath{\veccc}{c}
\safemath{\vecdc}{d}
\safemath{\vecec}{e}
\safemath{\vecfc}{f}
\safemath{\vecgc}{g}
\safemath{\vechc}{h}
\safemath{\vecic}{i}
\safemath{\vecjc}{j}
\safemath{\veckc}{k}
\safemath{\veclc}{l}
\safemath{\vecmc}{m}
\safemath{\vecnc}{n}
\safemath{\vecoc}{o}
\safemath{\vecpc}{p}
\safemath{\vecqc}{q}
\safemath{\vecrc}{r}
\safemath{\vecsc}{s}
\safemath{\vectc}{t}
\safemath{\vecuc}{u}
\safemath{\vecvc}{v}
\safemath{\vecwc}{w}
\safemath{\vecxc}{x}
\safemath{\vecyc}{y}
\safemath{\veczc}{z}


\safemath{\matA}{\mathbf{A}}
\safemath{\matB}{\mathbf{B}}
\safemath{\matC}{\mathbf{C}}
\safemath{\matD}{\mathbf{D}}
\safemath{\matE}{\mathbf{E}}
\safemath{\matF}{\mathbf{F}}
\safemath{\matG}{\mathbf{G}}
\safemath{\matH}{\mathbf{H}}
\safemath{\matI}{\mathbf{I}}
\safemath{\matJ}{\mathbf{J}}
\safemath{\matK}{\mathbf{K}}
\safemath{\matL}{\mathbf{L}}
\safemath{\matM}{\mathbf{M}}
\safemath{\matN}{\mathbf{N}}
\safemath{\matO}{\mathbf{O}}
\safemath{\matP}{\mathbf{P}}
\safemath{\matQ}{\mathbf{Q}}
\safemath{\matR}{\mathbf{R}}
\safemath{\matS}{\mathbf{S}}
\safemath{\matT}{\mathbf{T}}
\safemath{\matU}{\mathbf{U}}
\safemath{\matV}{\mathbf{V}}
\safemath{\matW}{\mathbf{W}}
\safemath{\matX}{\mathbf{X}}
\safemath{\matY}{\mathbf{Y}}
\safemath{\matZ}{\mathbf{Z}}
\safemath{\matzero}{\mathbf{0}}

\safemath{\matDelta}{\Updelta}
\safemath{\matGamma}{\Upgammma}
\safemath{\matLambda}{\Uplambda}
\safemath{\matOmega}{\Upomega}
\safemath{\matPhi}{\Upphi}
\safemath{\matPi}{\Uppi}
\safemath{\matPsi}{\Uppsi}
\safemath{\matSigma}{\Upsigma}
\safemath{\matTheta}{\Uptheta}
\safemath{\matUpsilon}{\Upupsilon}
\safemath{\matXi}{\Upxi}

\safemath{\matidentity}{\matI}
\safemath{\matone}{\matO}

\safemath{\matAc}{a}
\safemath{\matBc}{b}
\safemath{\matCc}{c}
\safemath{\matDc}{d}
\safemath{\matEc}{e}
\safemath{\matFc}{f}
\safemath{\matGc}{g}
\safemath{\matHc}{h}
\safemath{\matIc}{i}
\safemath{\matJc}{j}
\safemath{\matKc}{k}
\safemath{\matLc}{l}
\safemath{\matMc}{m}
\safemath{\matNc}{n}
\safemath{\matOc}{o}
\safemath{\matPc}{p}
\safemath{\matQc}{q}
\safemath{\matRc}{r}
\safemath{\matSc}{s}
\safemath{\matTc}{t}
\safemath{\matUc}{u}
\safemath{\matVc}{v}
\safemath{\matWc}{w}
\safemath{\matXc}{x}
\safemath{\matYc}{y}
\safemath{\matZc}{z}


\safemath{\rnda}{\bmia}
\safemath{\rndb}{\bmib}
\safemath{\rndc}{\bmic}
\safemath{\rndd}{\bmid}
\safemath{\rnde}{\bmie}
\safemath{\rndf}{\bmif}
\safemath{\rndg}{\bmig}
\safemath{\rndh}{\bmih}
\safemath{\rndi}{\bmii}
\safemath{\rndj}{\bmij}
\safemath{\rndk}{\bmik}
\safemath{\rndl}{\bmil}
\safemath{\rndm}{\bmim}
\safemath{\rndn}{\bmin}
\safemath{\rndo}{\bmio}
\safemath{\rndp}{\bmip}
\safemath{\rndq}{\bmiq}
\safemath{\rndr}{\bmir}
\safemath{\rnds}{\bmis}
\safemath{\rndt}{\bmit}
\safemath{\rndu}{\bmiu}
\safemath{\rndv}{\bmiv}
\safemath{\rndw}{\bmiw}
\safemath{\rndx}{\bmix}
\safemath{\rndy}{\bmiy}
\safemath{\rndz}{\bmiz}

\safemath{\rndA}{\bmiA}
\safemath{\rndB}{\bmiB}
\safemath{\rndC}{\bmiC}
\safemath{\rndD}{\bmiD}
\safemath{\rndE}{\bmiE}
\safemath{\rndF}{\bmiF}
\safemath{\rndG}{\bmiG}
\safemath{\rndH}{\bmiH}
\safemath{\rndI}{\bmiI}
\safemath{\rndJ}{\bmiJ}
\safemath{\rndK}{\bmiK}
\safemath{\rndL}{\bmiL}
\safemath{\rndM}{\bmiM}
\safemath{\rndN}{\bmiN}
\safemath{\rndO}{\bmiO}
\safemath{\rndP}{\bmiP}
\safemath{\rndQ}{\bmiQ}
\safemath{\rndR}{\bmiR}
\safemath{\rndS}{\bmiS}
\safemath{\rndT}{\bmiT}
\safemath{\rndU}{\bmiU}
\safemath{\rndV}{\bmiV}
\safemath{\rndW}{\bmiW}
\safemath{\rndX}{\bmiX}
\safemath{\rndY}{\bmiY}
\safemath{\rndZ}{\bmiZ}

\safemath{\rndalpha}{\bmialpha}
\safemath{\rndbeta}{\bmibeta}
\safemath{\rndchi}{\bmichi}
\safemath{\rnddelta}{\bmidelta}
\safemath{\rndepsilon}{\bmiepsilon}
\safemath{\rndvarepsilon}{\bmivarepsilon}
\safemath{\rndeta}{\bmieta}
\safemath{\rndgamma}{\bmigamma}
\safemath{\rndiota}{\bmiiota}
\safemath{\rndkappa}{\bmikappa}
\safemath{\rndlambda}{\bmilambda}
\safemath{\rndmu}{\bmimu}
\safemath{\rndnu}{\bminu}
\safemath{\rndomega}{\bmiomega}
\safemath{\rndphi}{\bmiphi}
\safemath{\rndvarphi}{\bmivarphi}
\safemath{\rndpi}{\bmipi}
\safemath{\rndvarpi}{\bmivarpi}
\safemath{\rndpsi}{\bmipsi}
\safemath{\rndrho}{\bmirho}
\safemath{\rndvarrho}{\bmivarrho}
\safemath{\rndsigma}{\bmisigma}
\safemath{\rndvarsigma}{\bmivarsigma}
\safemath{\rndtau}{\bmitau}
\safemath{\rndtheta}{\bmitheta}
\safemath{\rndvartheta}{\bmivartheta}
\safemath{\rndupsilon}{\bmiupsilon}
\safemath{\rndxi}{\bmixi}
\safemath{\rndzeta}{\bmizeta}

\safemath{\rveca}{\bmua}
\safemath{\rvecb}{\bmub}
\safemath{\rvecc}{\bmuc}
\safemath{\rvecd}{\bmud}
\safemath{\rvece}{\bmue}
\safemath{\rvecf}{\bmuf}
\safemath{\rvecg}{\bmug}
\safemath{\rvech}{\bmuh}
\safemath{\rveci}{\bmui}
\safemath{\rvecj}{\bmuj}
\safemath{\rveck}{\bmuk}
\safemath{\rvecl}{\bmul}
\safemath{\rvecm}{\bmum}
\safemath{\rvecn}{\bmun}
\safemath{\rveco}{\bmuo}
\safemath{\rvecp}{\bmup}
\safemath{\rvecq}{\bmuq}
\safemath{\rvecr}{\bmur}
\safemath{\rvecs}{\bmus}
\safemath{\rvect}{\bmut}
\safemath{\rvecu}{\bmuu}
\safemath{\rvecv}{\bmuv}
\safemath{\rvecw}{\bmuw}
\safemath{\rvecx}{\bmux}
\safemath{\rvecy}{\bmuy}
\safemath{\rvecz}{\bmuz}

\safemath{\rvecalpha}{\bmualpha}
\safemath{\rvecbeta}{\bmubeta}
\safemath{\rvecchi}{\bmuchi}
\safemath{\rvecdelta}{\bmudelta}
\safemath{\rvecepsilon}{\bmuepsilon}
\safemath{\rvecvarepsilon}{\bmuvarepsilon}
\safemath{\rveceta}{\bmueta}
\safemath{\rvecgamma}{\bmugamma}
\safemath{\rveciota}{\bmuiota}
\safemath{\rveckappa}{\bmukappa}
\safemath{\rveclambda}{\bmulambda}
\safemath{\rvecmu}{\bmumu}
\safemath{\rvecnu}{\bmunu}
\safemath{\rvecomega}{\bmuomega}
\safemath{\rvecphi}{\bmuphi}
\safemath{\rvecvarphi}{\bmuvarphi}
\safemath{\rvecpi}{\bmupi}
\safemath{\rvecvarpi}{\bmuvarpi}
\safemath{\rvecpsi}{\bmupsi}
\safemath{\rvecrho}{\bmurho}
\safemath{\rvecvarrho}{\bmuvarrho}
\safemath{\rvecsigma}{\bmusigma}
\safemath{\rvecvarsigma}{\bmuvarsigma}
\safemath{\rvectau}{\bmutau}
\safemath{\rvectheta}{\bmutheta}
\safemath{\rvecvartheta}{\bmuvartheta}
\safemath{\rvecupsilon}{\bmuupsilon}
\safemath{\rvecxi}{\bmuxi}
\safemath{\rveczeta}{\bmuzeta}

\safemath{\rmatA}{\bmuA}
\safemath{\rmatB}{\bmuB}
\safemath{\rmatC}{\bmuC}
\safemath{\rmatD}{\bmuD}
\safemath{\rmatE}{\bmuE}
\safemath{\rmatF}{\bmuF}
\safemath{\rmatG}{\bmuG}
\safemath{\rmatH}{\bmuH}
\safemath{\rmatI}{\bmuI}
\safemath{\rmatJ}{\bmuJ}
\safemath{\rmatK}{\bmuK}
\safemath{\rmatL}{\bmuL}
\safemath{\rmatM}{\bmuM}
\safemath{\rmatN}{\bmuN}
\safemath{\rmatO}{\bmuO}
\safemath{\rmatP}{\bmuP}
\safemath{\rmatQ}{\bmuQ}
\safemath{\rmatR}{\bmuR}
\safemath{\rmatS}{\bmuS}
\safemath{\rmatT}{\bmuT}
\safemath{\rmatU}{\bmuU}
\safemath{\rmatV}{\bmuV}
\safemath{\rmatW}{\bmuW}
\safemath{\rmatX}{\bmuX}
\safemath{\rmatY}{\bmuY}
\safemath{\rmatZ}{\bmuZ}

\safemath{\rmatDelta}{\bmuDelta}
\safemath{\rmatGamma}{\bmuGamma}
\safemath{\rmatLambda}{\bmuLambda}
\safemath{\rmatOmega}{\bmuOmega}
\safemath{\rmatPhi}{\bmuPhi}
\safemath{\rmatPi}{\bmuPi}
\safemath{\rmatPsi}{\bmuPsi}
\safemath{\rmatSigma}{\bmuSigma}
\safemath{\rmatTheta}{\bmuTheta}
\safemath{\rmatUpsilon}{\bmuUpsilon}
\safemath{\rmatXi}{\bmuXi}


\newenvironment{textbmatrix}{	\setlength{\arraycolsep}{2.5pt}%
								\big[\begin{matrix}}{\end{matrix}\big]%
								\raisebox{0.08ex}{\vphantom{M}}}


 \def\btm{\begin{textbmatrix}}
 \def\etm{\end{textbmatrix}}




\newcommand{\lefto}{\mathopen{}\left}




\DeclareMathOperator{\sinc}{sinc}			
\DeclareMathOperator{\Exop}{\opE}			
\DeclareMathOperator{\spn}{span}			 	


\safemath{\fun}{x}						
\safemath{\altfun}{y}
\safemath{\aaltfun}{\sch}
\safemath{\bel}{\sce}					
\safemath{\altbel}{\sce}					
\safemath{\frel}{g}					
\safemath{\altfrel}{g}					
\safemath{\dfrel}{\tilde{g}}					
\safemath{\altdfrel}{\tilde{g}}					

\safemath{\mat}{\matA}						
\safemath{\matc}{\matAc}						
\safemath{\altmat}{\matB}						
\safemath{\altmatc}{\matBc}						

\safemath{\vectr}{\vecu}						
\safemath{\vectrc}{\vecuc}						
\safemath{\altvectr}{\vecv}						
\safemath{\altvectrc}{\vecvc}						

\newcommand{\nullspace}{\setN}	 			
\newcommand{\rng}{\setR}		 				
\newcommand{\orth}{\perp}					




\newcommand{\vecnorm}[1]{\lVert#1\rVert}		
\newcommand{\conj}[1]{\ensuremath{#1^{*}}} 	
\newcommand{\tp}[1]{\ensuremath{#1^{\mathsf{T}}}} 		
\newcommand{\herm}[1]{\ensuremath{#1^{\mathsf{H}}}} 	
\newcommand{\ad}[1]{\ensuremath{#1^{*}}} 		
\newcommand{\inv}[1]{\ensuremath{#1^{-1}}} 	
\newcommand{\pinv}[1]{\ensuremath{#1^{\dagger}}} 	

\safemath{\dirac}{\delta}					
\safemath{\diracp}{\dirac(\time)}			
\safemath{\krond}{\dirac}					
\safemath{\indfun}{I}						
\safemath{\stepfun}{u}						


\safemath{\upto}{\uparrow}
\safemath{\downto}{\downarrow}
\safemath{\iu}{\mathrm{i}}							
\safemath{\maj}{\succ}
\newcommand{\elementof}[2]{\left[#1\right]_{#2}}		
\newcommand{\dftmat}[1]{\matF_{#1}}			
\safemath{\mdft}{\dftmat{}}					
\safemath{\runity}{\beta}					
\safemath{\eval}{\lambda}					

\safemath{\veval}{\veclambda}				
\newcommand{\cex}[1]{e^{\iu2\pi #1}}			
\newcommand{\cexn}[1]{e^{-\iu2\pi #1}}		
\safemath{\littleo}{\sco}					

\let\im\undefined
\safemath{\re}{\Re}				
\safemath{\im}{\Im}				

\safemath{\euclidspace}{\complexset}			
\safemath{\confunspace}{\setC}				
\newcommand{\banachseqspace}[1]{l^{#1}}		
\safemath{\hilseqspace}{\banachseqspace{2}}	
\newcommand{\banachfunspace}[1]{\setL^{#1}}	
\safemath{\hilfunspace}{\banachfunspace{2}}	
\safemath{\hilfunspacep}{\hilfunspace(\complexset)}	
\safemath{\schwarzspace}{\setS}				

\newcommand{\hadj}[1]{#1^{\star}}			

\safemath{\SNR}{\text{\sc snr}} 				
\safemath{\SINR}{\text{\sc sinr}} 				
\safemath{\No}{N_0}							
\safemath{\Es}{E_s}							
\safemath{\Eb}{E_b}							
\safemath{\EbNo}{\frac{\Eb}{\No}}
\safemath{\EsNo}{\frac{\Es}{\No}}
\safemath{\NoVar}{\variance}                 

\let\time\undefined
\safemath{\time}{\sct}						
\safemath{\dtime}{\sck}						
\safemath{\delay}{\sctau}					
\safemath{\ddelay}{\scl}						
\safemath{\doppler}{\scnu}					
\safemath{\ddoppler}{\scm}					
\safemath{\freq}{\scf}						
\safemath{\dfreq}{\scn}						
\safemath{\Dtime}{\Delta\time}
\safemath{\Dfreq}{\Delta\freq}
\safemath{\Ddtime}{\dtime}
\safemath{\Ddfreq}{\dfreq}
\safemath{\bandwidth}{\scB}
\safemath{\maxdoppler}{\doppler_{0}}			
\safemath{\maxdelay}{\delay_{0}}				
\safemath{\spread}{\Delta_{\CHop}}			
\DeclareMathOperator{\CHop}{\ensuremath{\opH}} 
\safemath{\kernelp}{\kernel(\time,\time')}	
\safemath{\tvir}{\rndh_{\CHop}}				
\safemath{\tvirp}{\tvir(\time,\delay)}		
\safemath{\tvirc}{\conj{\rndh}_{\CHop}}		
\safemath{\tvtf}{\rndl_{\CHop}}				
\safemath{\tvtfp}{\tvtf(\time,\freq)}			
\safemath{\tvtfc}{\conj{\rndl}_{\CHop}}		
\safemath{\spf}{\rnds_{\CHop}}				
\safemath{\spfp}{\spf(\doppler,\delay)}		
\safemath{\spfc}{\conj{\rnds}_{\CHop}}		
\safemath{\bff}{\rndb_{\CHop}}				
\safemath{\bffp}{\bff(\doppler,\freq)}		

\safemath{\irc}{\scr_{\rndh}}				
\safemath{\tfc}{\scr_{\rndl}}				
\safemath{\spc}{\scr_{\rnds}}				
\safemath{\bfc}{\scr_{\rndb}}				

\safemath{\scaf}{\scc_{\rnds}}				
\safemath{\scafp}{\scaf(\doppler,\delay)}		
\safemath{\ccf}{\scc_{\rndl}}				
\safemath{\ccfp}{\ccf(\Dtime,\Dfreq)}			
\safemath{\cic}{\scc_{\rndh}}				
\safemath{\cicp}{\cic(\Dtime,\delay)}			

\safemath{\mi}{\scI}							
\safemath{\capacity}{\scC}					

\DeclareMathOperator{\Prob}{\opP}		

\safemath{\normal}{\mathcal{N}}			
\safemath{\jpg}{\mathcal{CN}}			
\safemath{\uniform}{\mathcal{U}}				
\safemath{\mchain}{\leftrightarrow}		
\newcommand{\given}{\,\vert\,}				

\safemath{\dB}{\,\mathrm{dB}}
\safemath{\dBm}{\,\mathrm{dBm}}
\safemath{\Hz}{\,\mathrm{Hz}}
\safemath{\kHz}{\,\mathrm{kHz}}
\safemath{\MHz}{\,\mathrm{MHz}}
\safemath{\GHz}{\,\mathrm{GHz}}
\safemath{\s}{\,\mathrm{s}}
\safemath{\ms}{\,\mathrm{ms}}
\safemath{\mus}{\,\mathrm{\text{\textmu}s}}
\safemath{\ns}{\,\mathrm{ns}}
\safemath{\ps}{\,\mathrm{ps}}
\safemath{\meter}{\,\mathrm{m}}
\safemath{\mm}{\,\mathrm{mm}}
\safemath{\cm}{\,\mathrm{cm}}
\safemath{\m}{\,\mathrm{m}}
\safemath{\W}{\,\mathrm{W}}
\safemath{\mW}{\, \mathrm{mW}}
\safemath{\J}{\,\mathrm{J}}
\safemath{\K}{\,\mathrm{K}}
\safemath{\bit}{\,\mathrm{bit}}
\safemath{\nat}{\,\mathrm{nat}}


\safemath{\define}{\triangleq}					

\providecommand{\inner}[2]{\ensuremath{\left\langle#1,#2\right\rangle}}
\safemath{\equivalent}{\sim}
\safemath{\distas}{\sim}					
\safemath{\sdiff}{\Delta}				
\safemath{\setdiff}{\setminus}				

\safemath{\reals}{\mathbb R}
\safemath{\positivereals}{\reals^{+}}
\safemath{\integers}{\mathbb Z}
\safemath{\posint}{\integers^{+}}
\safemath{\naturals}{\mathbb N}
\safemath{\posnaturals}{\naturals^{+}}
\safemath{\complexset}{\mathbb C}
\safemath{\rationals}{\mathbb Q}


\safemath{\iSet}{\setI}
\safemath{\rel}{\bowtie}					
\safemath{\eqrel}{\sim}					
\safemath{\rlord}{\leq}					
\safemath{\slord}{<}						
\safemath{\rpord}{\preceq}				
\safemath{\rrpord}{\succeq}				
\safemath{\spord}{\prec}					
\safemath{\sig}{\sigma}					
\safemath{\metric}{d}					

\safemath{\setfun}{\Phi}					
\safemath{\measure}{\mu}					
\safemath{\altmeasure}{\lambda}					
\newcommand{\outerm}[1]{#1^{\star}}		
\newcommand{\innerm}[1]{#1_{\star}}		
\safemath{\omeasure}{\outerm{\measure}}		
\safemath{\imeasure}{\innerm{\measure}}		
\safemath{\aecol}{\colS^{\star}_{\measure}} 
\safemath{\emeasure}{\bar{\measure}_{0}}	
\safemath{\rmeasure}{\tilde{\measure}}	
\safemath{\bmeasure}{\measure_{0}}		
\safemath{\glength}{\measure_{\altfun}}	
\safemath{\lebmea}{\lambda}				
\safemath{\blebmea}{\lebmea_{0}}			
\safemath{\sfun}{s}						
\safemath{\absintspace}{\colL^{1}}		
\safemath{\sqintspace}{\banachfunspace{2}}		
\safemath{\abssumspace}{l^{1}}		
\safemath{\sqsumspace}{l^{2}}		

\safemath{\sfield}{\setF}				
\safemath{\vectors}{\setV}				
\safemath{\vecspace}{(\vectors,\sfield)}	
\safemath{\linspace}{\setV}				
\safemath{\clinspace}{(\linspace,\sfield)} 
\safemath{\nspace}{\setU}				
\safemath{\metspace}{\setM}				
\safemath{\bspace}{\setB}				
\safemath{\ipspace}{\setG}				
\safemath{\hilspace}{\setH}				
\safemath{\blospace}{\setG}				
\safemath{\lop}{\opT}					
\safemath{\altlop}{\opS}					
\safemath{\nullsp}{\nullspace(\lop)}		
\safemath{\lfun}{l}						
\safemath{\altlfun}{g}					
\newcommand{\dual}[1]{#1^{'}}			
\newcommand{\ocomp}[1]{#1^{\orth}}		
\safemath{\dsum}{\oplus}					
\safemath{\funspace}{\colL}				
\renewcommand{\adj}[1]{#1^{\times}}		
\safemath{\adjlop}{\adj{\lop}}			
\safemath{\hadjlop}{\hadj{\lop}}			
\safemath{\tow}{\xrightarrow{w}}			
\safemath{\tows}{\xrightarrow{w^{*}}}		
\safemath{\cparam}{\lambda}
\safemath{\lopl}{\lop_{\cparam}}		
\safemath{\iop}{\opI}					
\safemath{\resolop}{\opR}				
\safemath{\resolvent}{\resolop_{\cparam}(\lop)}	
\safemath{\reset}{\setQ}
\safemath{\spectrum}{\setS}
\safemath{\resolset}{\reset(\lop)}		
\safemath{\lopspec}{\spectrum(\lop)}		
\safemath{\pspec}{\spectrum_{p}(\lop)}	
\safemath{\cspec}{\spectrum_{c}(\lop)}	
\safemath{\rspec}{\spectrum_{r}(\lop)}	
\newcommand{\specrad}[1]{r_{#1}}			
\safemath{\lopsrad}{\specrad{\lop}}		
\safemath{\pop}{\opP}					

\safemath{\specfam}{\colE}				
\safemath{\specop}{\opE_{\cparam}}		
\safemath{\altspecop}{\opE_{\mu}}		
\safemath{\vmulti}{\vecone}				
\safemath{\unitaryop}{\opU}				
\safemath{\sval}{\sigma}					
\safemath{\corrcoef}{\rho}				
\safemath{\sangle}{\theta}				

\let\time\undefined
\safemath{\iset}{\setI}				
\safemath{\shift}{\nu}
\safemath{\scale}{\alpha}
\safemath{\time}{t}
\safemath{\specfreq}{\theta}	
\newcommand{\transopgen}[1]{\opT_{#1}} 
\safemath{\transop}{\transopgen{\delay}}
\newcommand{\modopgen}[1]{\opM_{#1}}	
\safemath{\modop}{\modopgen{\shift}}
\newcommand{\dilopgen}[1]{\opD_{#1}}	
\safemath{\dilop}{\dilopgen{\scale}}
\safemath{\fram}{\setF}				
\safemath{\dfram}{\dual{\fram}}		
\safemath{\ufb}{B}					
\safemath{\lfb}{A}					
\safemath{\sop}{\hadj{\aop}}				
\safemath{\aop}{\opT}			
\safemath{\fop}{\opS}				
\safemath{\daop}{\tilde\opT}			
\safemath{\dsop}{\hadj{\tilde\opT}}				

\safemath{\ifop}{\inv{\fop}}			
\safemath{\rifop}{\fop^{-1/2}}			
\newcommand{\ft}[1]{\widehat{#1}}	
\newcommand{\ftd}[1]{\widehat{#1}_d}	
\safemath{\transeq}{\setT}			
\safemath{\nfun}{\Phi}				
\safemath{\funvec}{\vecf}			
\safemath{\altfunvec}{\vecg}

\safemath{\samplespace}{\Omega}
\safemath{\probspace}{(\samplespace,\sfield,\Prob)}	
\safemath{\ccoef}{\rho}			

\safemath{\infstate}{\vecpi}				
\safemath{\typset}{\setA_{\epsilon}^{(N)}}	
\safemath{\expequal}{\doteq}				
\safemath{\code}{C}						
\safemath{\dstringset}{\setD^{\star}}		
\safemath{\cwlength}{l}					
\safemath{\elength}{L}					
\safemath{\extension}{C^{\star}}			
\safemath{\approaches}{\rightarrow}		
\safemath{\evnt}{\setA}					
\safemath{\altevnt}{\setB}					
\safemath{\altrv}{\rndy}					
\safemath{\complexrv}{\rndu}					
\safemath{\altcrv}{\rndv}				
\safemath{\rvec}{\rvecx}					
\safemath{\altrvec}{\rvecy}				
\safemath{\crvec}{\rvecu}				
\safemath{\altcrvec}{\rvecv}				
\safemath{\map}{T}						
\safemath{\jacobian}{J}					
\safemath{\wvec}{\rvecw}					
\safemath{\wrv}{\rndw}					
\safemath{\orthmat}{\matQ}				
\safemath{\evmat}{\matLambda}			
\safemath{\identity}{\matidentity}		
\safemath{\innovec}{\vecv}				
\safemath{\convas}{\xrightarrow{\text{a.s.}}}	
\safemath{\convr}{\xrightarrow{\text{r}}}	
\safemath{\convp}{\xrightarrow{\text{P}}}	
\safemath{\convd}{\xrightarrow{\text{D}}}	
\safemath{\ltis}{\opL}				
\safemath{\ir}{h}					
\safemath{\tf}{\MakeUppercase{\ir}}	

\newcommand*{\fancyrefparlabelprefix}{par}		
\newcommand*{\fancyrefchalabelprefix}{cha}		
\newcommand*{\fancyrefapplabelprefix}{app}		
\newcommand*{\fancyrefthmlabelprefix}{thm}		
\newcommand*{\fancyreflemlabelprefix}{lem}		
\newcommand*{\fancyrefcorlabelprefix}{cor}		
\newcommand*{\fancyrefdeflabelprefix}{dfn}		
\newcommand*{\fancyrefproplabelprefix}{prop}		
\newcommand*{\fancyrefexlabelprefix}{ex}		
\newcommand*{\fancyrefxcalabelprefix}{xca}		

\frefformat{vario}{\fancyrefparlabelprefix}{Part~#1}
\frefformat{vario}{\fancyrefchalabelprefix}{Chapter~#1}
\frefformat{vario}{\fancyrefseclabelprefix}{Section~#1}
\frefformat{vario}{\fancyrefthmlabelprefix}{Theorem~#1}
\frefformat{vario}{\fancyreflemlabelprefix}{Lemma~#1}
\frefformat{vario}{\fancyrefcorlabelprefix}{Corollary~#1}
\frefformat{vario}{\fancyrefdeflabelprefix}{Definition~#1}
\frefformat{vario}{\fancyreffiglabelprefix}{Figure~#1}
\frefformat{vario}{\fancyrefapplabelprefix}{Appendix~#1}
\frefformat{vario}{\fancyrefeqlabelprefix}{(#1)}
\frefformat{vario}{\fancyrefproplabelprefix}{Property~#1}
\frefformat{vario}{\fancyrefexlabelprefix}{Example~#1}
\frefformat{vario}{\fancyrefxcalabelprefix}{Exercise~#1}



%
%
\safemath{\vsignal}{\vecx}				
\safemath{\altvsignal}{\vecy}			
\safemath{\csignal}{\fun}				
\safemath{\altcsignal}{\altfun}			
\safemath{\analop}{\opT}				
\safemath{\lianalop}{\opL}				
\safemath{\dualanalop}{\tilde\opT}				
\safemath{\manalop}{\matT}				
\safemath{\analcrossop}{{\opT^{\times}}}			
\safemath{\synthdualop}{\ad{\tilde\opT}}		
\safemath{\msynthdualop}{\tp{\tilde\matT}}		
\safemath{\msynthdualopc}{\herm{\tilde\matT}}		
\safemath{\msynthop}{\tp{\matT}}		
\safemath{\analdualop}{\tilde\opT}		
\safemath{\vbasisel}{\vece}			
\safemath{\vsynthbasisel}{\tilde\vece}	
\safemath{\synthbasisel}{\tilde e}	
\safemath{\basisel}{e}	
\safemath{\vanalframeel}{\vecg}			
\safemath{\vanalframeelc}{g}			
\safemath{\vsynthframeel}{\tilde\vecg}	
\safemath{\analframeel}{g}			
\safemath{\analframeelrev}{\tilde g}			
\safemath{\synthframeel}{\tilde g}	
\safemath{\synthframeelrev}{g}	
\safemath{\cframeel}{g}					
\safemath{\vcoef}{\vecc}				
\safemath{\coeflt}{\{c_k\}_{k\in\frameset}}				
\safemath{\coefaltlt}{\{a_k\}_{k\in\frameset}}				
\safemath{\coef}{c}						
\safemath{\coefalt}{a}					
\safemath{\dimension}{M}				
\safemath{\framesize}{N}				
\safemath{\frameset}{\setK}				
\safemath{\frameA}{A}					
\safemath{\frameB}{B}					
\safemath{\frameAdual}{\tilde A}					
\safemath{\frameBdual}{\tilde B}					
\safemath{\frameop}{\opS}				
\safemath{\mframeop}{\matS}				
\safemath{\dualframeop}{\tilde\opS}				
\safemath{\dualframeoprev}{\opS}				
\safemath{\sinterval}{T}				
\safemath{\diractrain}{\dirac_T}        
\safemath{\ftdiractrain}{\ft{\dirac}_T}        
\safemath{\funsampled}{\fun_T}        	
\safemath{\ftfunsampled}{\ft{\fun}_T}        	
\safemath{\LPfiltert}{h_{\text{LP}}} 
\safemath{\arbfiltert}{h}				
\safemath{\outfiltert}{h_{\text{out}}} 
\safemath{\outfilterf}{{\ft h}_{\text{out}}} 

\safemath{\LPfilterf}{\ft{h}_{\text{LP}}} 
\safemath{\cutoff}{f_c} 				
\safemath{\oversampling}{K} 			
\safemath{\sqintspaceBL}{\hilfunspace(\bandwidth)} 	
\safemath{\qnoise}{w} 					
\safemath{\qnoisevar}{\sigma^2} 					
\safemath{\mseo}{\sigma^2_{\text{oversampling}}} 					
\safemath{\msec}{\sigma^2_{\text{critical}}} 					
\safemath{\mseb}{\sigma^2_{\text{basis}}} 					
\safemath{\msef}{\sigma^2_{\text{frame}}} 					
\safemath{\setframe}{\{\analframeel_k\in\hilspace\}_{k\in\frameset}} 					
\safemath{\setframem}{\{\analframeel_k\}_{k\in\frameset},\, \analframeel_k\in\hilspace,\, k\in\frameset,}					
\safemath{\setframemm}{\{\analframeel_k\}_{k\in\frameset}}					
\safemath{\setdualframe}{\{\synthframeel_k\in\hilspace\}_{k\in\frameset}} 					
\safemath{\setdualframemm}{\{\synthframeel_k\}_{k\in\frameset}} 					
\safemath{\setframerev}{\{\analframeelrev_k\in\hilspace\}_{k\in\frameset}} 					
\safemath{\setframerevmm}{\{\analframeelrev_k\}_{k\in\frameset}} 					
\safemath{\setdualframerev}{\{\synthframeelrev_k\in\hilspace\}_{k\in\frameset}} 					
\safemath{\setdualframerevmm}{\{\synthframeelrev_k\}_{k\in\frameset}} 					
\safemath{\weylop}{\opW}				
\safemath{\window}{g}				
\safemath{\waveop}{\opV}				
\safemath{\arb}{\mathrm{arb}}				



\safemath{\duration}{D}			
\safemath{\dtdf}{\dtime,\dfreq}  
\safemath{\altdtdf}{\altdtime,\altdfreq}
\safemath{\altdtime}{l}				
\safemath{\altdfreq}{m}				

\safemath{\dtimetilde}{\tilde{\dtime}}				
\safemath{\dfreqtilde}{\tilde{\dfreq}}				
\safemath{\altdtimetilde}{\tilde{\altdtime}}				
\safemath{\altdfreqtilde}{\tilde{\altdfreq}}				

\safemath{\maxDoppler}{\doppler_{0}}		
\safemath{\altmaxDoppler}{\widetilde{\maxDoppler}}
\safemath{\maxDelay}{\delay_{0}}			
\safemath{\altmaxDelay}{\widetilde{\maxDelay}}
\safemath{\altspread}{\widetilde{\Delta}_{\CHop}}
\safemath{\tstep}{T}				
\safemath{\fstep}{F}				
\safemath{\alttstep}{\widetilde{\tstep}}				
\safemath{\altfstep}{\widetilde{\fstep}}				
\safemath{\tfstep}{\tstep\fstep}	
\safemath{\srtfstep}{\sqrt{\tfstep}}
\safemath{\fslots}{N}			
\safemath{\tslots}{K}			
\safemath{\tslotstot}{\widetilde{\tslots}}
\safemath{\fslotstot}{\widetilde{\fslots}}
\safemath{\tslotsguard}{K'}
\safemath{\tfslots}{\tslots\fslots}	
\safemath{\tfsamples}{\dtime\tstep,\dfreq\fstep}	
\safemath{\WHset}{(\logon,\tstep,\fstep)}
\safemath{\WHsetsquare}{(\logon,\srtfstep,\srtfstep)}
\safemath{\spreadset}{\setD}
\safemath{\altspreadset}{\setE}
\safemath{\spreadsquareset}{\widetilde{\spreadset}}

\safemath{\logon}{g}						
\safemath{\logonp}{\logon(\time)}
\safemath{\altlogon}{\widetilde{\logon}}						
\safemath{\altaltlogon}{e}
\safemath{\altlogonp}{\altlogon(\time)}
\safemath{\altaltlogonp}{\altaltlogon(\time)}
\safemath{\slogon}{\logon_{\dtime,\dfreq}}	
\safemath{\slogonp}{\slogon(\time)}	
\safemath{\slogonset}{\left\{\logon(\time-\dtime\srtfstep)\cex{\dfreq\srtfstep\time}\right\}}
\safemath{\eftime}{a_{0}}	
\safemath{\efband}{b_{0}} 
\safemath{\logonf}{G} 
\safemath{\logonfp}{\logonf(\freq)}
\safemath{\logonalt}{f}
\safemath{\logonaltp}{\logonalt(\time)}
\safemath{\slogonct}{\logon_{(\alpha,\beta)}}
\safemath{\slogonctalt}{\logon_{(\alpha',\beta')}}
\safemath{\af}{A}						
\safemath{\afp}{\af_{\logon}(\doppler,\delay)}	
\safemath{\setloc}{\setG}
\safemath{\setloclogon}{\widetilde{\setG}}
\safemath{\dertau}{b}
\safemath{\dernu}{a}
\safemath{\dertaup}{\dertau_{\dtdf}}
\safemath{\dernup}{\dernu_{\dtdf}}
%
\safemath{\constM}{c_{M}}
\safemath{\constm}{c_{m}}
\safemath{\corr}{d}
\safemath{\corrmat}{\matD}

\safemath{\sumdtime}{\sum_{\dtime=-\infty}^{\infty}} 
\safemath{\sumdfreq}{\sum_{\dfreq=-\infty}^{\infty}} 

\safemath{\sumdtimelimited}{\sum_{\dtime=-\tslots}^{\tslots}} 
\safemath{\sumdfreqlimited}{\sum_{\dfreq=-\fslots}^{\fslots}} 

\safemath{\altsumdtimelimited}{\sum_{\altdtime=-\tslots}^{\tslots}} 
\safemath{\altsumdfreqlimited}{\sum_{\altdfreq=-\fslots}^{\fslots}} 

 \safemath{\limintime}{\lim_{\tslotstot\to\infty}}	

\safemath{\euler}{\gamma}	
\safemath{\intdiscrete}{\int_{-1/2}^{1/2}}
\safemath{\intdiscretetwod}{\intdiscrete\intdiscrete}
\safemath{\hilfuntimeband}{\hilfunspace(\bandwidth,\duration,\spillover)}
\safemath{\spillover}{\eta}
\safemath{\hilfunband}{\hilfunspace(\bandwidth)}
\newcommand{\timetruncfilter}[1]{p_{#1}}		
\safemath{\alttimetruncfilter}{p_{\duration+2\maxDelay}}		
\safemath{\timetruncfilterp}{\timetruncfilter(\time)}
\safemath{\onb}{\phi}
\safemath{\indexonb}{m}
\safemath{\numonb}{M}
\safemath{\onbp}{\onb_{\indexonb}(\time)}
\safemath{\funloc}{f}
\safemath{\funlocp}{\funloc(\dd)}
\safemath{\ofun}{\phi}
\safemath{\altofun}{\ofun'}
\safemath{\oindex}{m}
\safemath{\ofunp}{\ofun_\oindex(\time)}
\safemath{\altofunp}{\altofun_\oindex(\time)}
\safemath{\onum}{M} 

\newcommand{\citepar}[1]{\hspace{-0.06mm}\cite{#1}}

\usepackage[a4paper,left=2cm,hmarginratio=1:1,vmargin=3cm]{geometry}
\linespread{1.04}

\begin{document}


\title{A Short Course on Frame Theory}



 \author{Veniamin I. Morgenshtern and  Helmut B\"{o}lcskei \vspace{0.3cm}
\\
ETH Zurich, 8092 Zurich, Switzerland\\
 E-mail: \{vmorgens, boelcskei\}@nari.ee.ethz.ch \vspace{0.2cm}
}

\maketitle
\acrodef{ONB}[ONB]{orthonormal basis}
\acrodef{ada}[A/D]{analog-to-digital}
\acrodef{daa}[D/A]{digital-to-analog}
\acrodef{dfta}[DFT]{discrete Fourier transform}
\acrodef{dtfta}[DTFT]{discrete-time Fourier transform}
\acrodef{cdmaa}[CDMA]{code division multiple access}
\acrodef{ofdma}[OFDM]{orthogonal frequency division multiplexing}
\acrodef{wha}[WH]{Weyl-Heisenberg}

\newcommand{\onbac}{\ac{ONB}\xspace}
\newcommand{\onbacp}{\acp{ONB}\xspace}
\newcommand{\adac}{\ac{ada}\xspace}
\newcommand{\daac}{\ac{daa}\xspace}
\newcommand{\dftac}{\ac{dfta}\xspace}
\newcommand{\dtftac}{\ac{dtfta}\xspace}
\newcommand{\dtftacp}{\acp{dtfta}\xspace}
\newcommand{\cdmaac}{\ac{cdmaa}\xspace}
\newcommand{\ofdmac}{\ac{ofdma}\xspace}
\newcommand{\whac}{\ac{wha}\xspace}


\label{sec:introduction}
Hilbert spaces~\cite[Def. 3.1-1]{kreyszig89} and the associated concept of orthonormal bases 
are of fundamental importance in signal processing, communications, control, and information theory. However, linear independence and orthonormality of the basis elements impose constraints that often make it difficult to have the basis elements satisfy additional desirable properties. This calls for a theory of signal decompositions that is flexible enough to accommodate decompositions into possibly nonorthogonal  and redundant signal sets. The theory of frames provides such a tool.

This chapter is an introduction to the theory of frames, which was developed by Duffin and Schaeffer~\cite{duffin52} and popularized mostly through~\cite{daubechies90-09, daubechies92,heil89-12, young80}. Meanwhile frame theory, in particular the aspect of redundancy in signal expansions, has found numerous applications such as, e.g., denoising~\cite{donoho95-03,donoho94-08}, \cdmaac~\cite{rupf94-07},
 \ofdmac 
systems~\cite{sandell96-09}, coding theory~\cite{heath02-12,rudelson05}, quantum information theory~\cite{eldar02-03}, \adac converters~\cite{bolcskei01-01,bolcskei97-11,benedetto06-05}, and compressive sensing~\cite{donoho06-04, candes06-12, donoho02-03}. A more extensive list of relevant references can be found in~\cite{kovacevic08}.
For a  comprehensive treatment of frame theory we refer to the excellent textbook~\cite{christensen96}.

\section{Examples of Signal Expansions}
We start by considering some simple motivating examples.
\begin{example}[Orthonormal basis in $\reals^{2}$]
\label{ex:orthnormalsystem}
Consider the \onbac 
\begin{equation*}
\vbasisel_1=\begin{bmatrix} 1 \\ 0 \end{bmatrix}, \quad \quad \vbasisel_2=\begin{bmatrix} 0 \\ 1 \end{bmatrix}
\end{equation*}
in $\reals^{2}$ (see~\Figref{fig:orthr2}).
\begin{figure}[t]
	\centering
	\includegraphics[]{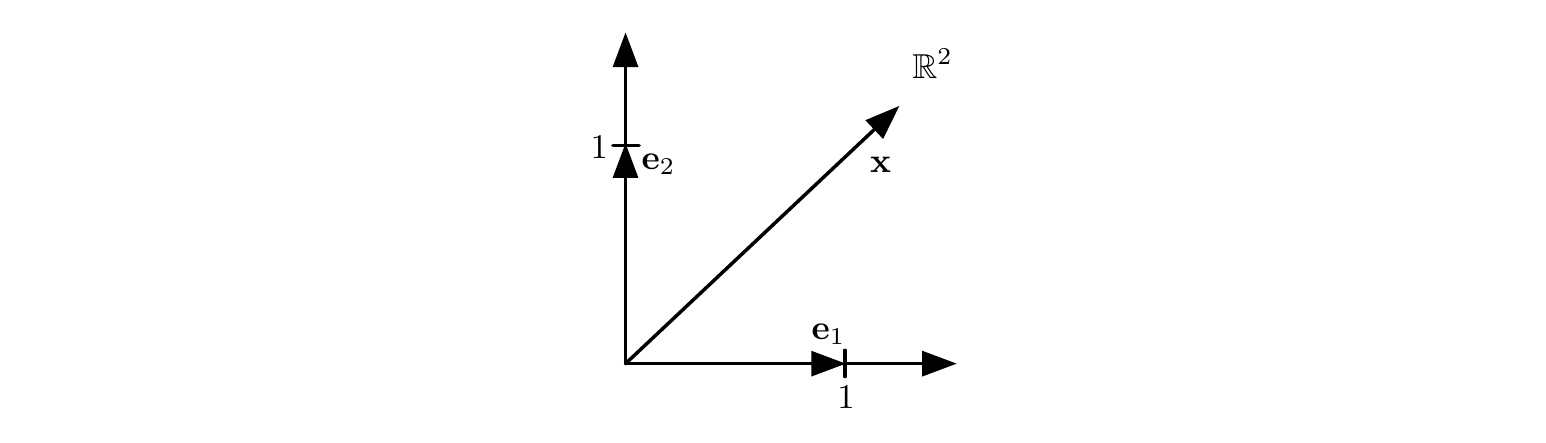}
	\caption{Orthonormal basis in $\reals^2$.}
	\label{fig:orthr2}
\end{figure}
We can represent every signal $\vsignal\in\reals^{2}$ as the following linear combination of the basis vectors~$\vbasisel_1$ and~$\vbasisel_2$:
\begin{equation}
\label{eq:decompr2}
\vsignal=\inner{\vsignal}{\vbasisel_1}\vbasisel_1+\inner{\vsignal}{\vbasisel_2}\vbasisel_2.
\end{equation}
To rewrite~\fref{eq:decompr2} in vector-matrix notation, we start by defining the vector of expansion coefficients as
\begin{equation*}
	\vcoef=\begin{bmatrix} \coef_1 \\ \coef_2 \end{bmatrix}\define	\begin{bmatrix} \inner{\vsignal}{\vbasisel_1} \\ \inner{\vsignal}{\vbasisel_2} \end{bmatrix}=\begin{bmatrix} \tp{\vbasisel_1} \\ \tp{\vbasisel_2} \end{bmatrix} \vsignal =\begin{bmatrix} 1&0\\0&1 \end{bmatrix} \vsignal.
\end{equation*}
It is convenient to define the matrix
\begin{equation*}
	\manalop\define\begin{bmatrix} \tp{\vbasisel_1} \\ \tp{\vbasisel_2} \end{bmatrix} =\begin{bmatrix} 1&0\\0&1 \end{bmatrix}.
\end{equation*}
Henceforth we call \manalop the \emph{analysis matrix}; it  multiplies the signal \vsignal to produce the expansion coefficients
\begin{equation*}
	\vcoef=\manalop \vsignal.
\end{equation*}
Following~\fref{eq:decompr2}, we can reconstruct the signal \vsignal from the coefficient vector $\vcoef$ according to 
\begin{equation}
	\label{eq:synthstage}
	\vsignal=\msynthop\vcoef=\begin{bmatrix} \vbasisel_1 & \vbasisel_2 \end{bmatrix}\vcoef=\begin{bmatrix} \vbasisel_1 & \vbasisel_2 \end{bmatrix}\begin{bmatrix} \inner{\vsignal}{\vbasisel_1} \\ \inner{\vsignal}{\vbasisel_2} \end{bmatrix} =\inner{\vsignal}{\vbasisel_1}\vbasisel_1+\inner{\vsignal}{\vbasisel_2}\vbasisel_2.
\end{equation}
We call
\begin{equation}
	\label{eq:msynthop1}
	\msynthop=\begin{bmatrix} \vbasisel_1 & \vbasisel_2 \end{bmatrix}=\begin{bmatrix} 1&0\\0&1 \end{bmatrix}
\end{equation}
the \emph{synthesis matrix}; it multiplies the coefficient vector $\vcoef$ to recover the signal $\vsignal$.
It follows from~\fref{eq:synthstage} that~\fref{eq:decompr2} is equivalent to 
\begin{equation}
	\label{eq:signaldecomporth}
	\vsignal=\msynthop\manalop \vsignal=\begin{bmatrix} 1&0\\0&1 \end{bmatrix}\begin{bmatrix} 1&0\\0&1 \end{bmatrix}\vsignal.
\end{equation}
\end{example}

The introduction of the analysis and the synthesis matrix in the example above may seem artificial and may appear as complicating matters unnecessarily. After all, both \manalop and \msynthop are equal to the identity matrix in this example. We will, however, see shortly that this notation paves the way to developing a unified framework for nonorthogonal and redundant signal expansions. Let us now look at a somewhat more interesting example. 

\begin{example}[Biorthonormal bases in $\reals^{2}$]
	\label{ex:biorthogonal}
Consider two noncollinear unit norm vectors in $\reals^{2}$. For concreteness, take (see \Figref{fig:nonorthr2})
\begin{equation*}
\vbasisel_1=\begin{bmatrix} 1 \\ 0 \end{bmatrix}, \quad \quad \vbasisel_2=\frac{1}{\sqrt{2}}\begin{bmatrix} 1 \\ 1 \end{bmatrix}.
\end{equation*}
\begin{figure}[t]
	\centering
	\includegraphics[]{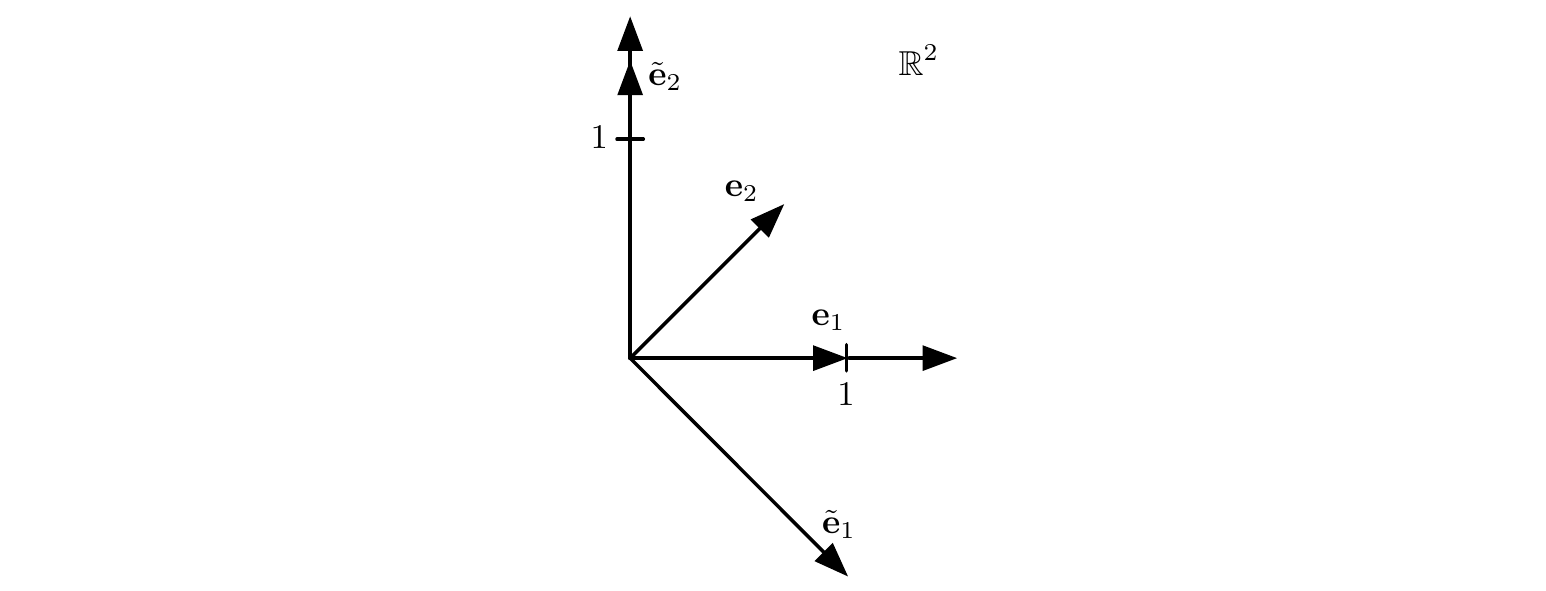}
	\caption{Biorthonormal bases in $\reals^2$.}
	\label{fig:nonorthr2}
\end{figure}
For an arbitrary signal $\vsignal\in\reals^2$, we can compute the expansion coefficients 
\begin{align*}
	\coef_1&\define\inner{\vsignal}{\vbasisel_1}\\
	\coef_2&\define\inner{\vsignal}{\vbasisel_2}.
\end{align*}
As in  \fref{ex:orthnormalsystem} above, we stack the expansion coefficients into a vector so that 
\begin{equation}
	\label{eq:vcoefnonorth}
	\vcoef=\begin{bmatrix} \coef_1 \\ \coef_2 \end{bmatrix}=\begin{bmatrix} \inner{\vsignal}{\vbasisel_1} \\ \inner{\vsignal}{\vbasisel_2} \end{bmatrix}=\begin{bmatrix} \tp{\vbasisel_1} \\ \tp{\vbasisel_2} \end{bmatrix} \vsignal =\begin{bmatrix} 1&0\\1/\sqrt{2}&1/\sqrt{2} \end{bmatrix} \vsignal.
\end{equation}
Analogously to \fref{ex:orthnormalsystem}, we can define the analysis matrix 
\begin{equation*}
	\manalop\define \begin{bmatrix} \tp{\vbasisel_1} \\ \tp{\vbasisel_2} \end{bmatrix}=\begin{bmatrix} 1&0\\1/\sqrt{2}&1/\sqrt{2} \end{bmatrix}
\end{equation*}
and rewrite \fref{eq:vcoefnonorth} as
\begin{equation*}
	\vcoef=\manalop\vsignal.
\end{equation*}
Now, obviously, the vectors $\vbasisel_1$ and $\vbasisel_2$ are not orthonormal (or, equivalently, $\manalop$ is not unitary) so that we cannot write $\vsignal$ in the form~\fref{eq:decompr2}.
We could, however, try to find a decomposition of \vsignal of the form
\begin{equation}
\vsignal=\inner{\vsignal}{\vbasisel_1}\vsynthbasisel_1+\inner{\vsignal}{\vbasisel_2}\vsynthbasisel_2
\label{eq:basisdecompositionreq}
\end{equation}
with $\vsynthbasisel_1, \vsynthbasisel_2\in\reals^2$. 
That this is, indeed, possible is easily seen by rewriting~\fref{eq:basisdecompositionreq} according to
\begin{equation}
\vsignal = \begin{bmatrix} \vsynthbasisel_1 & \vsynthbasisel_2 \end{bmatrix} \manalop\vsignal
	\label{eq:signalmatrixdecomposition}
\end{equation}
and choosing the vectors $\vsynthbasisel_1$ and $\vsynthbasisel_2$ to be given by the columns of $\manalop^{-1}$ according to
\begin{equation}
	\label{eq:dualisinv}
	\begin{bmatrix} \vsynthbasisel_1 & \vsynthbasisel_2 \end{bmatrix} =\manalop^{-1}.
\end{equation}
Note that $\manalop$ is invertible as a consequence of $\vbasisel_1$ and $\vbasisel_2$ not being collinear. For the specific example at hand we find%
\begin{equation*}
\begin{bmatrix}  \vsynthbasisel_1 & \vsynthbasisel_2 \end{bmatrix}= \manalop^{-1}
	= \begin{bmatrix} 1&0\\-1&\sqrt{2}\end{bmatrix}
\end{equation*}
and therefore (see~\Figref{fig:nonorthr2})
\begin{equation*}
\vsynthbasisel_1=\begin{bmatrix} 1 \\ -1 \end{bmatrix}, \quad \quad \vsynthbasisel_2=\begin{bmatrix} 0 \\ \sqrt{2} \end{bmatrix}.
\end{equation*}
Note that~\fref{eq:dualisinv} implies that $\manalop\begin{bmatrix} \vsynthbasisel_1 & \vsynthbasisel_2 \end{bmatrix}=\identity_2$, which is equivalent to
\begin{equation*}
	\begin{bmatrix} \tp{\vbasisel_1} \\ \tp{\vbasisel_2} \end{bmatrix}\begin{bmatrix} \vsynthbasisel_1 & \vsynthbasisel_2 \end{bmatrix}=\identity_2.
\end{equation*}
More directly the two sets of vectors  $\{\vbasisel_1, \vbasisel_2\}$ and $\{\vsynthbasisel_1, \vsynthbasisel_2\}$ satisfy a ``biorthonormality'' property according to
\begin{equation*}
\inner{\vbasisel_j}{\vsynthbasisel_k} = \begin{cases} 1, \quad &j=k\\ 0,  &\text{else} \end{cases},\quad\quad j,k=1,2.
\end{equation*}
We say that $\{\vbasisel_1, \vbasisel_2\}$ and $\{\vsynthbasisel_1, \vsynthbasisel_2\}$ are biorthonormal bases. Analogously to~\fref{eq:msynthop1}, we can now define the synthesis matrix as follows: 
\begin{equation*}
\msynthdualop\define\begin{bmatrix} \vsynthbasisel_1 & \vsynthbasisel_2 \end{bmatrix} = \begin{bmatrix} 1&0\\-1&\sqrt{2}\end{bmatrix}.
\end{equation*}
Our observations can be summarized according to 
\begin{align}
	\label{eq:signaldecompnorth}
\vsignal&=\inner{\vsignal}{\vbasisel_1} \vsynthbasisel_1+\inner{\vsignal}{\vbasisel_2} \vsynthbasisel_2\nonumber\\
&=\msynthdualop\vcoef=\msynthdualop\manalop\vsignal\nonumber\\
&=\begin{bmatrix} 1&0\\-1&\sqrt{2}\end{bmatrix} \begin{bmatrix} 1&0\\1/\sqrt{2}&1/\sqrt{2} \end{bmatrix} \vsignal=\begin{bmatrix} 1&0\\0&1\end{bmatrix}\vsignal.
\end{align}
Comparing \fref{eq:signaldecompnorth} to \fref{eq:signaldecomporth}, we observe the following: To synthesize $\vsignal$ from the expansion coefficients \vcoef corresponding to the nonorthogonal set $\{\vbasisel_1, \vbasisel_2\}$, we need to use the synthesis matrix $\msynthdualop$ obtained from the set $\{ \vsynthbasisel_1,  \vsynthbasisel_2\}$, which forms a biorthonormal pair with $\{\vbasisel_1, \vbasisel_2\}$. In~\fref{ex:orthnormalsystem} $\{\vbasisel_1, \vbasisel_2\}$ is an \onbac and hence $\tilde\manalop=\manalop$, or, equivalently, $\{ \vbasisel_1, \vbasisel_2\}$ forms a biorthonormal pair with itself.

 As the vectors $\vbasisel_1$ and $\vbasisel_2$ are  linearly independent, the $2\times 2$ analysis matrix \manalop has full rank and is hence invertible, i.e., there is a \emph{unique} matrix $\manalop^{-1}$ that satisfies $\manalop^{-1}\manalop=\identity_2$. According to \fref{eq:signalmatrixdecomposition} this means that for each analysis set $\{\vbasisel_1, \vbasisel_2\}$ 
 there is precisely one synthesis set $\{ \vsynthbasisel_1, \vsynthbasisel_2\}$ such that \fref{eq:basisdecompositionreq} is satisfied for all $\vsignal\in\reals^2$.
\end{example}

So far we considered nonredundant signal expansions where the number of expansion coefficients is equal to the dimension of the Hilbert space. 
Often, however, redundancy in the expansion is desirable. 

\begin{example}[Overcomplete expansion in $\reals^{2}$, {\cite[Ex. 3.1]{kovacevic08}}]
	\label{ex:overcomplete}

Consider the following three vectors in $\reals^{2}$ (see~\Figref{fig:overcomplete}):
\begin{figure}[t]
	\centering
	\includegraphics[]{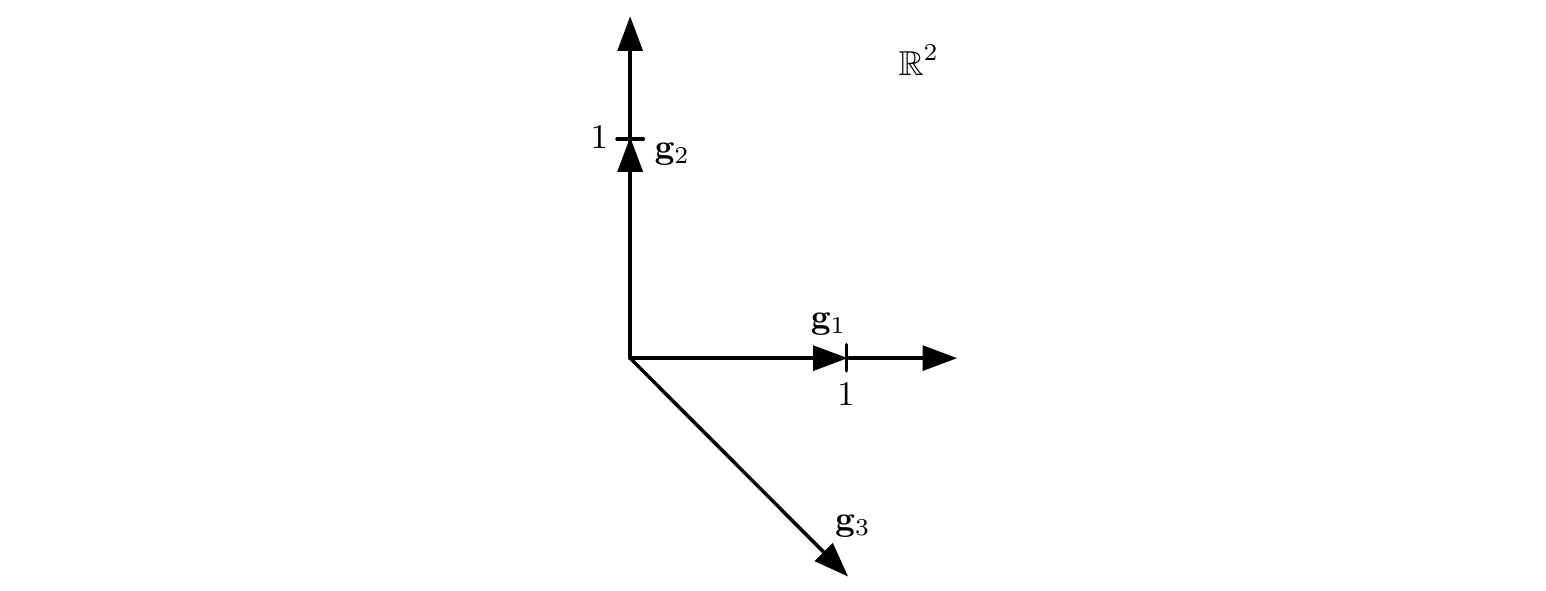}
	\caption{Overcomplete set of vectors in $\reals^2$.}
	\label{fig:overcomplete}
\end{figure}
\begin{equation*}
\vanalframeel_1=\begin{bmatrix} 1\\0\end{bmatrix}, \quad \vanalframeel_2=\begin{bmatrix} 0 \\1 \end{bmatrix}, \quad \vanalframeel_3=\begin{bmatrix} 1\\-1\end{bmatrix}.
\end{equation*}
Three vectors in a two-dimensional space are always linearly dependent. In particular, in this example we have $\vanalframeel_3=\vanalframeel_1-\vanalframeel_2$. 
Let us compute the expansion coefficients \vcoef corresponding to $\{\vanalframeel_1, \vanalframeel_2, \vanalframeel_3\}$:
\begin{equation}
	\label{eq:vcoefred}
	\vcoef=\begin{bmatrix} \coef_1 \\ \coef_2 \\ \coef_3 \end{bmatrix}\define	\begin{bmatrix} \inner{\vsignal}{\vanalframeel_1} \\ \inner{\vsignal}{\vanalframeel_2} \\ \inner{\vsignal}{\vanalframeel_3} \end{bmatrix}=\begin{bmatrix} \tp{\vanalframeel_1} \\ \tp{\vanalframeel_2} \\ \tp{\vanalframeel_3}\end{bmatrix} \vsignal =\begin{bmatrix} 1&0\\0 & 1 \\ 1 & -1 \end{bmatrix} \vsignal.
\end{equation}
Following Examples \ref{ex:orthnormalsystem} and \ref{ex:biorthogonal}, we define the analysis matrix 
\begin{equation*}
	\manalop\define\begin{bmatrix} \tp{\vanalframeel_1} \\ \tp{\vanalframeel_2} \\ \tp{\vanalframeel_3}\end{bmatrix}\define\begin{bmatrix} 1&0\\0 & 1 \\ 1 & -1 \end{bmatrix}
\end{equation*}
and rewrite \fref{eq:vcoefred} as
\begin{equation*}
	\vcoef=\manalop\vsignal.
\end{equation*}
Note that here, unlike in Examples \ref{ex:orthnormalsystem} and \ref{ex:biorthogonal}, \vcoef is a redundant representation of \vsignal as we have \emph{three} expansion coefficients for a \emph{two}-dimensional signal~\vsignal.

We next ask if $\vsignal$ can be represented as a linear combination of the form
\begin{equation}
	\label{eq:vsignalredundantdecomp}
	\vsignal=\underbrace{\inner{\vsignal}{\vanalframeel_1}}_{\coef_1} \vsynthframeel_1+\underbrace{\inner{\vsignal}{\vanalframeel_2}}_{\coef_2} \vsynthframeel_2+\underbrace{\inner{\vsignal}{\vanalframeel_3}}_{\coef_3}\vsynthframeel_3
\end{equation}
 with 
 $\vsynthframeel_1, \vsynthframeel_2, \vsynthframeel_3\in\reals^2$? 
To answer this question (in the affirmative) we first note that the vectors $\vanalframeel_1, \vanalframeel_2$ form an \onbac for $\reals^{2}$. We therefore  know that the following is true:
\begin{equation}
	\label{eq:vsignalbasisdec}
\vsignal=\inner{\vsignal}{\vanalframeel_1}\vanalframeel_1+\inner{\vsignal}{\vanalframeel_2}\vanalframeel_2.
\end{equation}
Setting 
\begin{equation*}
	\vsynthframeel_1=\vanalframeel_1,\ \vsynthframeel_2=\vanalframeel_2,\ \vsynthframeel_3=\veczero
\end{equation*} 
obviously yields a representation of the form~\fref{eq:vsignalredundantdecomp}. It turns out, however, that this representation is not unique and that an alternative representation of the form~\fref{eq:vsignalredundantdecomp} can be obtained as follows. We start by adding zero to the right-hand side of~\fref{eq:vsignalbasisdec}:
\begin{equation*}
\vsignal=\inner{\vsignal}{\vanalframeel_1}\vanalframeel_1+\inner{\vsignal}{\vanalframeel_2}\vanalframeel_2+\underbrace{\inner{\vsignal}{\vanalframeel_1-\vanalframeel_2}(\vanalframeel_1-\vanalframeel_1)}_\veczero.
\end{equation*}
Rearranging terms in this expression, we obtain
\begin{equation}
	\label{eq:tempexpansion}
\vsignal=\inner{\vsignal}{\vanalframeel_1}2\vanalframeel_1+\inner{\vsignal}{\vanalframeel_2}(\vanalframeel_2-\vanalframeel_1)-\inner{\vsignal}{\vanalframeel_1-\vanalframeel_2}\vanalframeel_1.
\end{equation}
We recognize that $\vanalframeel_1-\vanalframeel_2=\vanalframeel_3$ and set
\begin{equation}
	\label{eq:otherdual}
\vsynthframeel_1=2\vanalframeel_1, \ \vsynthframeel_2=\vanalframeel_2-\vanalframeel_1, \  \vsynthframeel_3=-\vanalframeel_1.
\end{equation}
This allows us to rewrite \fref{eq:tempexpansion} as
\begin{equation*}
	\vsignal=\inner{\vsignal}{\vanalframeel_1}\vsynthframeel_1+\inner{\vsignal}{\vanalframeel_2} \vsynthframeel_2+\inner{\vsignal}{\vanalframeel_3} \vsynthframeel_3.
\end{equation*}
The redundant set of vectors $\{\vanalframeel_1,\vanalframeel_2,\vanalframeel_3\}$ is called a \emph{frame}. The set $\{\vsynthframeel_1,\vsynthframeel_2,\vsynthframeel_3\}$ in \fref{eq:otherdual} 
is called a \emph{dual frame} to the frame $\{\vanalframeel_1,\vanalframeel_2,\vanalframeel_3\}$. Obviously another dual frame is given by  $\vsynthframeel_1=\vanalframeel_1$, $\vsynthframeel_2=\vanalframeel_2$, and~$\vsynthframeel_3=\veczero$. In fact, there are infinitely many dual frames. To see this, we first define the synthesis matrix corresponding to a dual frame $\{\vsynthframeel_1,\vsynthframeel_2,\vsynthframeel_3\}$ as
\begin{equation}
	\label{eq:msynthdualop}
	\msynthdualop\define\begin{bmatrix}\vsynthframeel_1 & \vsynthframeel_2&\vsynthframeel_3 \end{bmatrix}.
\end{equation}
It then follows that we can write
\begin{align*}
\vsignal&=\inner{\vsignal}{\vanalframeel_1} \vsynthframeel_1+\inner{\vsignal}{\vanalframeel_2} \vsynthframeel_2+\inner{\vsignal}{\vanalframeel_3}\vsynthframeel_3\nonumber\\
&=\msynthdualop\vcoef=\msynthdualop\manalop\vsignal, 
\end{align*}
which implies that setting $\msynthdualop=[\vsynthframeel_1 \ \vsynthframeel_2\ \vsynthframeel_3]$ to be a left-inverse of $\manalop$ yields a valid dual frame.
Since $\manalop$ is a $3\times 2$ (``tall'') matrix, its left-inverse is not unique. In fact, $\manalop$ has infinitely many left-inverses (two of them were found above). Every left-inverse of $\manalop$ leads to a dual frame according to~\fref{eq:msynthdualop}.

Thanks to the redundancy of the frame $\{\vanalframeel_1,\vanalframeel_2,\vanalframeel_3\}$, we obtain design freedom: In order to synthesize the signal $\vsignal$ from its expansion coefficients $\coef_k=\inner{\vsignal}{\vanalframeel_k},\ k=1,2,3,$ in the frame $\{\vanalframeel_1,\vanalframeel_2,\vanalframeel_3\}$, we can choose between infinitely many dual frames $\{\vsynthframeel_1,\vsynthframeel_2,\vsynthframeel_3\}$. In practice the particular choice of the dual frame is usually dictated by the requirements of the specific problem at hand. We shall discuss this issue in detail in the context of sampling theory in \fref{sec:designfreedom}.
\end{example}

\section{Signal Expansions in Finite-Dimensional Hilbert Spaces}
\label{sec:signalexpfindim}
Motivated by the examples above, we now consider general signal expansions in finite-dimensional Hilbert spaces. As in the previous section, we first review the concept of an \onbac, we then consider arbitrary (nonorthogonal) bases, and, finally, we discuss  redundant vector sets --- frames. While the discussion in this section is confined to the finite-dimensional case, we develop the general (possibly infinite-dimensional) case in~\fref{sec:framesfd}.

\subsection{Orthonormal Bases}
\label{sec:orthobasis}
We start by reviewing the concept of an \onbac. 
\begin{dfn}
	\label{dfn:ONB}
The set of vectors $\{\vbasisel_k\}_{k=1}^\dimension,\, \vbasisel_k\in\complexset^\dimension,$ is called an \onbac for $\complexset^\dimension$ if
\begin{enumerate}
\item $\spn\{\vbasisel_k\}_{k=1}^\dimension=\{\coef_1\vbasisel_1+\coef_2\vbasisel_2+\ldots+\coef_\dimension\vbasisel_\dimension \given \coef_1,\coef_2,\ldots,\coef_\dimension \in \complexset\}=\complexset^\dimension$
\item 
\begin{equation*}
\inner{\vbasisel_k}{\vbasisel_j} =\begin{cases}1,\quad &k=j\\0,&k\neq j\end{cases}\quad\quad k,j=1,\ldots,\dimension.
\end{equation*}
\end{enumerate}
\end{dfn}

When $\{\vbasisel_k\}_{k=1}^\dimension$ is an ONB, thanks to the spanning property in~\fref{dfn:ONB},
every $\vsignal\in\complexset^\dimension$ can be decomposed as
\begin{equation}
	\label{eq:onbexp}
\vsignal=\sum_{k=1}^\dimension\coef_k\vbasisel_k.
\end{equation}
The expansion coefficients $\{\coef_k\}_{k=1}^\dimension$ in \eqref{eq:onbexp} can be found through the following calculation:
\begin{equation*}
\inner{\vsignal}{\vbasisel_j}=\inner{\sum_{k=1}^\dimension \coef_k \vbasisel_k}{\vbasisel_j}=\sum_{k=1}^\dimension\coef_k \inner{\vbasisel_k}{\vbasisel_j}=\coef_j.
\end{equation*}
In summary, we have the decomposition
\begin{equation*}
\vsignal=\sum_{k=1}^\dimension\inner{\vsignal}{\vbasisel_k}\vbasisel_k.
\end{equation*}
Just like in \fref{ex:orthnormalsystem}, in the previous section, we define the analysis matrix
\begin{equation*}
	\manalop\define\begin{bmatrix} \herm{\vbasisel_1} \\ \vdots \\ \herm{\vbasisel_\dimension} \end{bmatrix}.
\end{equation*}
If we organize the inner products $\{\inner{\vsignal}{\vbasisel_k}\}_{k=1}^\dimension$ into the vector \vcoef, we have
\begin{equation*}
\vcoef\define\begin{bmatrix} \inner{\vsignal}{\vbasisel_1} \\ \vdots \\ \inner{\vsignal}{\vbasisel_\dimension} \end{bmatrix} = \manalop\vsignal=\begin{bmatrix} \herm{\vbasisel_1} \\ \vdots \\ \herm{\vbasisel_\dimension} \end{bmatrix} \vsignal.
\end{equation*}
Thanks to the orthonormality of the vectors $\vbasisel_1, \vbasisel_2, \ldots, \vbasisel_\dimension$ the matrix $\manalop$
is unitary, i.e., $\herm\manalop=\manalop^{-1}$ and hence
\begin{equation*}
	\manalop\herm\manalop=\begin{bmatrix} \herm{\vbasisel_1}\\\vdots \\ \herm{\vbasisel_\dimension} \end{bmatrix}\begin{bmatrix} \vbasisel_1 & \ldots  &\vbasisel_\dimension \end{bmatrix}=\begin{bmatrix} \inner{\vbasisel_1}{\vbasisel_1} &\cdots & \inner{\vbasisel_\dimension}{\vbasisel_1}  \\ \vdots & \ddots & \vdots \\ \inner{\vbasisel_1}{\vbasisel_\dimension} & \cdots & \inner{\vbasisel_\dimension}{\vbasisel_\dimension} \end{bmatrix}= \identity_{\dimension}=\herm\manalop\manalop.
\end{equation*}
Thus,  if we multiply the vector $\vcoef$ by $\herm\manalop$, we synthesize $\vsignal$ according to
\begin{equation}
	\label{eq:matrixanalsynth}
	\herm\manalop\vcoef=\herm\manalop\manalop\vsignal=\sum_{k=1}^\dimension\inner{\vsignal}{\vbasisel_k}\vbasisel_k=\identity_{\dimension}\vsignal=\vsignal.
\end{equation}
We shall therefore call the matrix $\herm\manalop$ the synthesis matrix, corresponding to the analysis matrix $\manalop$. 
In the \onbac case considered here the synthesis matrix is simply the Hermitian adjoint of the analysis matrix.

\subsection{General Bases}
\label{sec:genbasis}
We next relax the orthonormality property, i.e., the second condition in \fref{dfn:ONB}, and consider general bases.
\label{sec:signalexpansionsfd}
\begin{dfn}
	\label{dfn:basis}
The set of vectors $\{\vbasisel_k\}_{k=1}^\dimension,\, \vbasisel_k\in\complexset^\dimension,$ is a basis for $\complexset^\dimension$ if
\begin{enumerate}
\item $\spn\{\vbasisel_k\}_{k=1}^\dimension=\{\coef_1\vbasisel_1+\coef_2\vbasisel_2+\ldots+\coef_\dimension\vbasisel_\dimension \given \coef_1,\coef_2,\ldots,\coef_\dimension \in \complexset\}=\complexset^\dimension$
\item $\{\vbasisel_k\}_{k=1}^\dimension$ is a linearly independent set, i.e., if $\sum_{k=1}^\dimension \coef_k\vbasisel_k=\veczero$ for some scalar coefficients $\{\coef_k\}_{k=1}^\dimension$, then necessarily $\coef_k=0$ for all $k=1,\ldots,\dimension.$
\end{enumerate}
\end{dfn}

Now consider a signal $\vsignal \in \complexset^\dimension$ and compute the expansion coefficients 
\begin{equation}
	\label{eq:basisanalcoef}
	\coef_k\define\inner{\vsignal}{\vbasisel_k}, \quad k=1,\ldots, \dimension.
\end{equation}
Again, it is convenient to introduce the analysis matrix 
\begin{equation*}
	\manalop\define\begin{bmatrix} \herm\vbasisel_1\\ \vdots \\ \herm\vbasisel_\dimension \end{bmatrix}
\end{equation*}
and to stack the coefficients $\{\coef_k\}_{k=1}^\dimension$ in the vector $\vcoef$. Then~\fref{eq:basisanalcoef} can be written as
\begin{equation*}
	\vcoef=\manalop\vsignal.
\end{equation*}
Next, let us ask how we can find a set of vectors $\{\vsynthbasisel_1,\ldots,\vsynthbasisel_\dimension\},\, \vsynthbasisel_k\in \complexset^\dimension,\, k=1,\ldots,\dimension,$ that is dual to the set $\{\vbasisel_1,\ldots,\vbasisel_\dimension\}$ in the sense that
\begin{equation}
	\label{eq:basisgendecomp1}
	\vsignal=\sum_{k=1}^\dimension \coef_k \vsynthbasisel_k=\sum_{k=1}^\dimension\inner{\vsignal}{\vbasisel_k}\vsynthbasisel_k 
\end{equation}
for all $\vsignal\in\complexset^\dimension$. If we introduce the synthesis matrix 
\begin{equation*}
	\msynthdualopc\define[\vsynthbasisel_1  \ \cdots \ \vsynthbasisel_\dimension],
\end{equation*}
we can rewrite \fref{eq:basisgendecomp1} in vector-matrix notation as follows
\begin{equation*}
	\vsignal=\msynthdualopc \vcoef=\msynthdualopc \manalop \vsignal.
\end{equation*}
This shows that finding vectors $\vsynthbasisel_1,\ldots,\vsynthbasisel_\dimension$ that satisfy \fref{eq:basisgendecomp1} is equivalent to finding  the inverse of the analysis matrix  $\manalop$ and setting $\msynthdualopc=\manalop^{-1}$. 
Thanks to the linear independence of the vectors $\{\vbasisel_k\}_{k=1}^\dimension$, the matrix $\manalop$ has full rank and is, therefore, invertible.  

Summarizing our findings, we conclude that in the case of a basis~$\{\vbasisel_k\}_{k=1}^\dimension$, the analysis matrix and the synthesis matrix are inverses of each other, i.e., $\msynthdualopc\manalop=\manalop\msynthdualopc=\identity_{\dimension}$. Recall that in the case of an \onbac the analysis matrix $\manalop$ is  \emph{unitary} and hence its inverse is simply given by $\herm\manalop$ [see \fref{eq:matrixanalsynth}], so that in this case $\tilde\manalop=\manalop$.

Next, note  that $\manalop\msynthdualopc=\identity_{\dimension}$ is equivalent to 
\begin{equation*}
\begin{bmatrix}\herm{\vbasisel_1} \\ \vdots \\ \herm{\vbasisel_\dimension}\end{bmatrix}\begin{bmatrix}\vsynthbasisel_1 &\ldots& \vsynthbasisel_\dimension \end{bmatrix}=\begin{bmatrix}
	\inner{\vsynthbasisel_1}{\vbasisel_1}&\cdots&\inner{\vsynthbasisel_\dimension}{\vbasisel_1}\\
	\vdots & \ddots & \vdots \\
	\inner{\vsynthbasisel_1}{\vbasisel_\dimension}&\cdots&\inner{\vsynthbasisel_\dimension}{\vbasisel_\dimension}
\end{bmatrix}=\identity_{\dimension} 
\end{equation*}
or equivalently
\begin{equation}
	\label{eq:biorthog}
	\inner{\vbasisel_k}{\vsynthbasisel_j}=\begin{cases}1,\quad &k=j \\0, &\text{else}\end{cases}, \quad\quad k,j=1,\ldots,\dimension.
\end{equation}
The sets $\{\vbasisel_k\}_{k=1}^\dimension$ and $\{\vsynthbasisel_k\}_{k=1}^\dimension$ are biorthonormal bases. 
\onbacp are biorthonormal to themselves in this terminology, as already noted in~\fref{ex:biorthogonal}. We emphasize that it is the fact that  $\manalop$ and $\msynthdualopc$ are square and full-rank that allows us to conclude that $\msynthdualopc\manalop=\identity_{\dimension}$ implies $\manalop\msynthdualopc=\identity_{\dimension}$ and hence to conclude that~\fref{eq:biorthog} holds.
 We shall see below that for redundant expansions $\manalop$ is a tall matrix and $\msynthdualopc\manalop\ne\manalop\msynthdualopc$ ($\msynthdualopc\manalop$ and $\manalop\msynthdualopc$ have different dimensions) so that dual frames will not be biorthonormal.

As $\manalop$ is a square matrix and of full rank, its inverse is  \emph{unique}, which means that for a  given analysis set $\{\vbasisel_k\}_{k=1}^\dimension$, the synthesis set $\{\vsynthbasisel_k\}_{k=1}^\dimension$ is \emph{unique}.  Alternatively, for a given synthesis set $\{\vsynthbasisel_k\}_{k=1}^\dimension$, there is a \emph{unique} analysis set $\{\vbasisel_k\}_{k=1}^\dimension$. This uniqueness property is not always desirable. For example, one may want to impose certain structural properties on the synthesis set $\{\vsynthbasisel_k\}_{k=1}^\dimension$ in which case having freedom in choosing the synthesis set as in~\fref{ex:biorthogonal} is helpful.

An important property of \onbacp is that they are norm-preserving: The norm of the coefficient vector \vcoef is equal to the norm of the signal \vsignal. This can  easily be seen by noting that  
\begin{equation}
	\label{eq:ONBnormpres}
\vecnorm{\vcoef}^2 = \herm\vcoef\vcoef=\herm\vsignal\herm\manalop\manalop\vsignal= \herm\vsignal\identity_{\dimension}\vsignal=\vecnorm{\vsignal}^2,
\end{equation}
where we used \fref{eq:matrixanalsynth}.
Biorthonormal bases are \emph{not} norm-preserving, in general. Rather, 
the equality in \fref{eq:ONBnormpres} is relaxed to a double-inequality, by application of the Rayleigh-Ritz theorem~\cite[Sec. 9.7.2.2]{lutkepohl96} according to 
\begin{equation}
	\label{eq:RR}
 \lambda_{\text{min}}\lefto(\herm{\manalop}\manalop\right)\vecnorm{\vsignal}^2 \leq \vecnorm{\vcoef}^2=\herm\vsignal \herm{\manalop}\manalop\vsignal\leq \lambda_{\text{max}}\lefto(\herm{\manalop}\manalop\right) \vecnorm{\vsignal}^2.
\end{equation}

\subsection{Redundant Signal Expansions}
\label{sec:redsigexp}
The signal expansions we considered so far are non-redundant in the sense that the number of expansion coefficients equals the dimension of the Hilbert space. Such signal expansions have a number of disadvantages. 
First, corruption or loss of expansion coefficients can result in significant reconstruction errors. Second, the reconstruction process is very rigid: As we have seen in \fref{sec:genbasis}, for each set of analysis vectors,  there is a \emph{unique}  set of synthesis vectors. 
In practical applications it is often desirable to impose additional constraints on the reconstruction functions, such as smoothness properties or structural properties that allow for computationally efficient reconstruction.

Redundant expansions  allow to overcome many of these problems as they offer design freedom and robustness to corruption or loss of expansion coefficients.
 We already saw in~\fref{ex:overcomplete} that in the case of redundant expansions, for a given  set of analysis vectors
the set of synthesis vectors that allows perfect recovery of a signal from its expansion coefficients is not unique; in fact, there are infinitely many sets of synthesis vectors, in general. This results in design freedom and provides robustness. Suppose that the expansion coefficient $\coef_3=\inner{\vsignal}{\vanalframeel_3}$ in~\fref{ex:overcomplete} is corrupted or even completely lost. We can still reconstruct $\vsignal$ \emph{exactly} from \fref{eq:vsignalbasisdec}.

Now, let us turn to developing the general theory of redundant signal expansions in finite-dimensional Hilbert spaces. Consider a set of $\framesize$ vectors $\{\vanalframeel_1,\ldots,\vanalframeel_\framesize\},\, \vanalframeel_k\in \complexset^\dimension,\, k=1,\ldots,\framesize,$ with $\framesize\ge\dimension$. Clearly, when $\framesize$ is \emph{strictly} greater than $\dimension$, the vectors $\vanalframeel_1,\ldots,\vanalframeel_\framesize$ must be linearly dependent. 
Next, consider a signal $\vsignal\in\complexset^\dimension$ and compute the expansion coefficients 
\begin{equation}
	\label{eq:redanalcoef}
	\coef_k=\inner{\vsignal}{\vanalframeel_k}, \quad k=1,\ldots, \framesize.
\end{equation}
Just as before, it is convenient to introduce the analysis matrix 
\begin{equation}
	\label{eq:manalopdfn}
	\manalop\define\begin{bmatrix} \herm\vanalframeel_1\\ \vdots \\ \herm\vanalframeel_\framesize \end{bmatrix}
\end{equation}
and to stack the coefficients $\{\coef_k\}_{k=1}^\framesize$ in the vector $\vcoef$. Then~\fref{eq:redanalcoef} can be written as
\begin{equation}
	\label{eq:redundantcoeff}
	\vcoef=\manalop\vsignal.
\end{equation}
Note that $\vcoef\in\complexset^\framesize$ and $\vsignal\in\complexset^\dimension$. Differently from \onbacp and biorthonormal bases considered in Sections~\ref{sec:orthobasis} and~\ref{sec:genbasis}, respectively, in the case of redundant expansions, the signal \vsignal and the expansion coefficient vector~\vcoef will, in general, belong to different Hilbert spaces. 
 
The question now is: How can we find a set of vectors $\{\vsynthframeel_1,\ldots,\vsynthframeel_\framesize\},\ \vsynthframeel_k\in \complexset^\dimension,\, k=1,\ldots,\framesize,$ such that
\begin{equation}
	\label{eq:framedecomp1}
	\vsignal=\sum_{k=1}^\framesize \coef_k \vsynthframeel_k=\sum_{k=1}^\framesize\inner{\vsignal}{\vanalframeel_k}\vsynthframeel_k
\end{equation}
for all $\vsignal\in\complexset^\dimension$?
If we introduce the synthesis matrix 
\begin{equation*}
	\msynthdualopc\define[\vsynthframeel_1  \ \cdots \ \vsynthframeel_\framesize],
\end{equation*}
we can rewrite \fref{eq:framedecomp1} in vector-matrix notation as follows
\begin{equation}
	\label{eq:framedecomp1op}
	\vsignal=\msynthdualopc \vcoef=\msynthdualopc \manalop \vsignal.
\end{equation}
Finding vectors $\vsynthframeel_1,\ldots,\vsynthframeel_\framesize$ that satisfy \fref{eq:framedecomp1} for all $\vsignal\in\complexset^\dimension$ is therefore equivalent to finding a left-inverse $\msynthdualopc$ of $\manalop$, i.e.,
\begin{equation*}
	\msynthdualopc\manalop=\identity_{\dimension}.
\end{equation*}
First note that~$\manalop$ is left-invertible if and only if $\complexset^\dimension =\spn\{\vanalframeel_k\}_{k=1}^\framesize$, i.e., if and only if the set of vectors~$\{\vanalframeel_k\}_{k=1}^\framesize$ spans~$\complexset^\dimension$.
Next observe that when $\framesize>\dimension$, the $\framesize\times\dimension$ matrix $\manalop$ is a ``tall'' matrix, and therefore 
its left-inverse is, in general, not  unique. In fact, there are infinitely many left-inverses. The following theorem~\cite[Ch. 2, Th.~1]{ben-israel02} provides a convenient parametrization of all these left-inverses. 
\begin{thm}
\label{thm:leftinv}
	Let $\mat\in\complexset^{\framesize\times \dimension}$, $\framesize\ge\dimension$. Assume that $\rank(\mat)=\dimension$. 
	Then $\pinv \mat\define (\herm \mat \mat)^{-1}\herm \mat$ is a left-inverse of $\mat$, i.e., $\pinv \mat\mat=\identity_{\dimension}$.
	Moreover, the general solution $\matL\in\complexset^{\dimension\times\framesize}$ of the equation $\matL\mat=\identity_{\dimension}$ is given by
	\begin{equation}
		\label{eq:generalli}
		\matL=\pinv \mat+\matM(\identity_{\framesize}-\mat\pinv \mat),
	\end{equation}
	where $\matM\in\complexset^{\dimension\times\framesize}$ is an arbitrary matrix.
\end{thm}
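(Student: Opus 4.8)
The plan is to establish the two claims separately: first that $\herm{\mat}\mat$ is invertible, so that $\pinv{\mat}$ is well defined and is a genuine left-inverse of $\mat$, and then that the affine family in \eqref{eq:generalli} exhausts \emph{all} left-inverses of $\mat$.

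For the first claim, the only point that genuinely uses the hypothesis $\rank(\mat)=\dimension$ is the invertibility of the Gram matrix $\herm{\mat}\mat\in\complexset^{\dimension\times\dimension}$, which I would argue as follows. Since $\rank(\mat)=\dimension$, the $\dimension$ columns of $\mat$ are linearly independent, so $\mat\vecx=\veczero$ implies $\vecx=\veczero$. Now if $\herm{\mat}\mat\vecx=\veczero$ for some $\vecx\in\complexset^{\dimension}$, then $\vecnorm{\mat\vecx}^2=\herm{\vecx}\herm{\mat}\mat\vecx=0$, hence $\mat\vecx=\veczero$ and therefore $\vecx=\veczero$; thus $\herm{\mat}\mat$ has trivial kernel and, being square, is invertible (alternatively, one may simply invoke $\rank(\herm{\mat}\mat)=\rank(\mat)=\dimension$). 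It then follows immediately that $\pinv{\mat}=(\herm{\mat}\mat)^{-1}\herm{\mat}$ is well defined and $\pinv{\mat}\mat=(\herm{\mat}\mat)^{-1}(\herm{\mat}\mat)=\identity_{\dimension}$, so $\pinv{\mat}$ is a left-inverse.

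For the parametrization I would argue by double inclusion. In one direction, for arbitrary $\matM\in\complexset^{\dimension\times\framesize}$ and $\matL=\pinv{\mat}+\matM(\identity_{\framesize}-\mat\pinv{\mat})$, using $\pinv{\mat}\mat=\identity_{\dimension}$ one gets $\matL\mat=\pinv{\mat}\mat+\matM(\mat-\mat\pinv{\mat}\mat)=\identity_{\dimension}+\matM(\mat-\mat)=\identity_{\dimension}$, so every matrix of the form \eqref{eq:generalli} is a left-inverse. In the other direction, suppose $\matL\in\complexset^{\dimension\times\framesize}$ satisfies $\matL\mat=\identity_{\dimension}$; the key observation is that the choice $\matM:=\matL$ reproduces $\matL$, since $\pinv{\mat}+\matL(\identity_{\framesize}-\mat\pinv{\mat})=\pinv{\mat}+\matL-(\matL\mat)\pinv{\mat}=\pinv{\mat}+\matL-\pinv{\mat}=\matL$. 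Hence $\matL$ has the stated form, and the two inclusions together show that \eqref{eq:generalli} is exactly the set of left-inverses.

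I do not expect any serious obstacle here: the whole content is the invertibility of $\herm{\mat}\mat$ and the one-line substitution $\matM=\matL$. If desired, I would append the remark that when $\framesize>\dimension$ the matrix $\matM$ is not uniquely determined by $\matL$, because $\identity_{\framesize}-\mat\pinv{\mat}$ is an idempotent of rank $\framesize-\dimension$ and only the restriction of $\matM$ to its range influences $\matL$ --- which is precisely the ``design freedom'' exploited in the frame-theoretic discussion following the theorem.
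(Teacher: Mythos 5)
Your proof is correct and follows essentially the same route as the paper's: verify that $\pinv{\mat}\mat=\identity_{\dimension}$, check by direct expansion that every matrix of the form \eqref{eq:generalli} is a left-inverse, and recover an arbitrary left-inverse $\matL$ by the substitution $\matM=\matL$. The only difference is that you spell out why $\herm{\mat}\mat$ is invertible (via the trivial-kernel argument), a step the paper simply asserts from $\rank(\mat)=\dimension$; this is a welcome but minor elaboration, not a different approach.
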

\begin{proof}
	Since $\rank(\mat)=\dimension$, the matrix $\herm \mat \mat$ is invertible and hence $\pinv \mat$ is well defined.  
	Now, let us verify that $\pinv \mat$ is, indeed, a left-inverse of $\mat$:
	\begin{equation}
		\label{eq:pseudoinv}
		\pinv \mat\mat=(\herm \mat \mat)^{-1}\herm \mat \mat=\identity_{\dimension}.
	\end{equation}
	The matrix $\pinv \mat$ is called the Moore-Penrose inverse of \mat.
	
	Next, we show that every matrix $\matL$ of the form~\fref{eq:generalli} is a valid left-inverse of $\mat$:
	\begin{align*}
		\matL\mat&=\left(\pinv \mat+\matM(\identity_{\framesize}-\mat\pinv \mat)\right)\mat\nonumber\\
		&=\underbrace{\pinv \mat\mat}_{\identity_{\dimension}}+\matM\mat-\matM\mat\underbrace{\pinv \mat\mat}_{\identity_{\dimension}}\nonumber\\
		&=\identity_{\dimension}+\matM\mat-\matM\mat=\identity_{\dimension},
	\end{align*} 
	where we used \fref{eq:pseudoinv} twice. 
	
	Finally, assume that $\matL$ is a valid left-inverse of \mat, i.e., $\matL$ is a solution of the equation $\matL\mat=\identity_{\dimension}$. We show that $\matL$ can be written in the form~\fref{eq:generalli}.  Multiplying the equation $\matL\mat=\identity_{\dimension}$ by $\pinv \mat$ from the right, we have
	\begin{equation*}
		\matL\mat\pinv \mat=\pinv \mat.
	\end{equation*}
	Adding \matL to both sides of this equation and rearranging terms yields
	\begin{equation*}
		\matL=\pinv \mat+\matL-\matL\mat\pinv \mat=\pinv \mat+\matL(\identity_{\framesize}-\mat\pinv \mat),
	\end{equation*}
	which shows that \matL can be written in the form \fref{eq:generalli} (with~$\matM=\matL$), as required.
\end{proof}
 We conclude that for each redundant set of vectors $\{\vanalframeel_1,\ldots,\vanalframeel_\framesize\}$ that spans $\complexset^\dimension$, there are infinitely many dual sets $\{\vsynthframeel_1,\ldots,\vsynthframeel_\framesize\}$ such that the decomposition \fref{eq:framedecomp1} holds for all $\vsignal\in\complexset^\dimension$.
These dual sets are obtained by identifying $\{\vsynthframeel_1,\ldots,\vsynthframeel_\framesize\}$ with the columns of $\matL$ according to 
\begin{equation*}
		[\vsynthframeel_1  \ \cdots \ \vsynthframeel_\framesize]=\matL,
\end{equation*}
where $\matL$ can be written as  follows
\begin{equation*}
	\matL=\pinv \manalop+\matM(\identity_\framesize-\manalop\pinv \manalop)
\end{equation*}
and $\matM\in\complexset^{\dimension\times\framesize}$ is an arbitrary matrix.

The dual set $\{\vsynthframeel_1,\ldots,\vsynthframeel_\framesize\}$ corresponding to the Moore-Penrose inverse $\matL=\pinv\manalop$ of the matrix $\manalop$, i.e.,
\begin{equation*}
	[\vsynthframeel_1  \ \cdots \ \vsynthframeel_\framesize]=\pinv\manalop=(\herm\manalop\manalop)^{-1}\herm\manalop
\end{equation*}
is called the \emph{canonical dual} of $\{\vanalframeel_1,\ldots,\vanalframeel_\framesize\}$. Using \fref{eq:manalopdfn}, we see that in this case 
\begin{equation}
	\label{eq:finitedimcanondual}
	\vsynthframeel_k=(\herm\manalop\manalop)^{-1}\vanalframeel_k,\quad k=1,\ldots,\framesize.
\end{equation}

Note that \emph{unlike} in the case of a basis, the equation $\msynthdualopc\manalop=\identity_\dimension$ \emph{does not} imply that the sets~$\{\vsynthframeel_k\}_{k=1}^\framesize$  and~$\{\vanalframeel_k\}_{k=1}^\framesize$ are biorthonormal. This is because the matrix $\manalop$ is \emph{not} a square matrix, and thus, $\msynthdualopc\manalop\ne\manalop\msynthdualopc$ ($\msynthdualopc\manalop$ and $\manalop\msynthdualopc$ have different dimensions).

Similar to biorthonormal bases, redundant sets of vectors are, in general, not norm-preserving. Indeed, from \fref{eq:redundantcoeff} we see that
\begin{equation*}
	\vecnorm{\vcoef}^2=\herm\vsignal \herm{\manalop}\manalop\vsignal
\end{equation*}
and thus, by the Rayleigh-Ritz theorem~\cite[Sec. 9.7.2.2]{lutkepohl96}, we have
\begin{equation}
	\label{eq:framematspecbounds}
 \lambda_{\text{min}}\lefto(\herm{\manalop}\manalop\right)\vecnorm{\vsignal}^2 \leq \vecnorm{\vcoef}^2\leq \lambda_{\text{max}}\lefto(\herm{\manalop}\manalop\right) \vecnorm{\vsignal}^2
\end{equation}
as in the case of biorthonormal bases.
	
We already saw some of the basic issues that a theory of orthonormal, biorthonormal, and redundant signal expansions should address. It should  account for the signals and the expansion coefficients belonging, potentially, to different Hilbert spaces; it should account for the fact that for a given analysis set, the synthesis set is not unique in the redundant case, it should prescribe how synthesis vectors can be obtained from the analysis vectors. Finally, it should apply not only to finite-dimensional Hilbert spaces, as considered so far, but also to infinite-dimensional Hilbert spaces. We now proceed to develop this general theory, known as the theory of frames.

\section{Frames for General Hilbert Spaces}
\label{sec:framesfd}
Let  $\{\analframeel_k\}_{k\in\frameset}$ ($\frameset$ is a countable set) be a set of elements taken from the Hilbert space \hilspace. Note that this set need not be orthogonal. 

In developing a general theory of signal expansions in Hilbert spaces, as outlined at the end of the previous section, we start by noting that the central quantity  in~\fref{sec:signalexpfindim} was the analysis matrix  $\manalop$ associated to the (possibly nonorthogonal or redundant) set of vectors $\{\vanalframeel_1,\ldots,\vanalframeel_\framesize\}$. Now matrices are nothing but linear operators in finite-dimensional Hilbert spaces. In formulating frame theory for general (possibly infinite-dimensional) Hilbert spaces, it is therefore sensible to define the analysis operator \analop that assigns to each signal $\fun\in\hilspace$ the sequence of inner products $\analop\fun=\{\inner{\fun}{\analframeel_k}\}_{k\in\frameset}$. Throughout this section, we assume that $\{\analframeel_k\}_{k\in\frameset}$ is a Bessel sequence, i.e., $\sum_{k\in\frameset} \abs{\inner{\fun}{\analframeel_k}}^2<\infty$ for all $\fun\in\hilspace$.
\begin{dfn}
	The linear operator \analop is defined as the operator that maps the Hilbert space \hilspace into the space \hilseqspace of square-summable complex sequences\footnote{The fact that the range space of \analop is contained in  \hilseqspace is a consequence of $\{\analframeel_k\}_{k\in\frameset}$ being a Bessel sequence.}, 
	$\analop:\hilspace\to\hilseqspace$, 
	by assigning to each signal $\fun\in\hilspace$ the sequence of inner products $\inner{\fun}{\analframeel_k}$ according to
	\begin{equation*}
		\analop:\fun\to\{\inner{\fun}{\analframeel_k}\}_{k\in\frameset}.
	\end{equation*}
\end{dfn}

Note that $\vecnorm{\analop\fun}^2=\sum_{k\in\frameset}\abs{\inner{\fun}{\analframeel_k}}^2$, i.e., the energy $\vecnorm{\analop\fun}^2$ of $\analop\fun$ can be expressed as 
\begin{equation}
	\label{eq:energyT}
	\vecnorm{\analop\fun}^2=\sum_{k\in\frameset}\abs{\inner{\fun}{\analframeel_k}}^2.
\end{equation}
We shall next formulate properties that the set  $\{\analframeel_k\}_{k\in\frameset}$ and hence the operator \analop should satisfy if we have signal expansions in mind:
\begin{enumerate}
	\label{req:1}
\item The signal \fun can be perfectly reconstructed from the coefficients $\{\inner{\fun}{\analframeel_k}\}_{k\in\frameset}$. This means that we want $\inner{\fun}{\analframeel_k}=\inner{\altfun}{\analframeel_k}$, for all $k\in\frameset,$ (i.e., $\analop\fun=\analop\altfun$) to imply that $\fun=\altfun$, for all $\fun,\altfun\in\hilspace$. In other words, the operator $\analop$ has to be left-invertible, which means that $\analop$ is invertible on its range space $\rng(\analop)=\{\altfun\in \hilseqspace:\altfun = \analop\fun,\ \fun \in \hilspace\}$.

This requirement will clearly be satisfied if we demand that there exist a constant $\frameA>0$ such that for all $\fun, \altfun\in \hilspace$ we have
\begin{equation*}
	\frameA\vecnorm{\fun-\altfun}^2\le  \vecnorm{\analop\fun-\analop\altfun}^2.
\end{equation*}
Setting $z=\fun-\altfun$ and using the linearity of $\analop$, we see that this condition is equivalent to 
\begin{equation}
	\label{eq:lowerfr}
	\frameA\vecnorm{z}^2\le  \vecnorm{\analop z}^2
\end{equation}
for all $z\in\hilspace$ with $\frameA>0$. 
\item The energy in the sequence of expansion coefficients $\analop\fun=\{\inner{\fun}{\analframeel_k}\}_{k\in\frameset}$  should be related to the energy in the signal~\fun. For example, we saw  in \eqref{eq:ONBnormpres} that if $\{\vbasisel_k\}_{k=1}^\dimension$ is an \onbac for $\complexset^\dimension$, then 
\begin{equation}
	\label{eq:parseval}
	\vecnorm{\manalop\vsignal}^2=\sum_{k=1}^\dimension \abs{\inner{\vsignal}{\vbasisel_k}}^2=	\vecnorm{\vsignal}^2,\ \text{for all } \vsignal\in\complexset^\dimension.
\end{equation}
This property is a consequence of the unitarity of $\analop=\manalop$ and it is clear that it will not hold for general sets $\{\analframeel_k\}_{k\in\frameset}$ (see the discussion around~\fref{eq:RR} and~\fref{eq:framematspecbounds}). Instead, we will relax~\fref{eq:parseval} to demand that for all $\fun\in \hilspace$  there exist a finite constant $\frameB$ such that\footnote{Note that if~\fref{eq:upperfr} is satisfied with $\frameB<\infty$, then $\{\analframeel_k\}_{k\in\frameset}$ is a Bessel sequence. }
\begin{equation}
	\label{eq:upperfr}
	\vecnorm{\analop\fun}^2=\sum_{k\in\frameset} \abs{\inner{\fun}{\analframeel_k}}^2\le	\frameB\vecnorm{\fun}^2.
\end{equation}
\end{enumerate}
Together with~\fref{eq:lowerfr} this ``sandwiches'' the quantity $\vecnorm{\analop\fun}^2$ according to
\begin{equation*}
	\frameA\vecnorm{\fun}^2\le\vecnorm{\analop\fun}^2\le \frameB\vecnorm{\fun}^2.
\end{equation*}
We are now ready to  formally define a frame for the Hilbert space~\hilspace.
\begin{dfn}
	\label{dfn:frame}
A set of elements $\setframem$  is called a frame for the Hilbert space \hilspace if
\begin{equation}
	\label{eq:framedef}
\frameA\vecnorm{\fun}^2\leq\sum_{k\in\frameset} \abs{\inner{\fun}{\analframeel_k}}^2 \leq \frameB\vecnorm{\fun}^2, \quad \text{for all}  \quad\fun \in \hilspace,
\end{equation}
with $\frameA,\frameB\in \reals$ and $0<\frameA\leq \frameB<\infty$. Valid constants $\frameA$ and $\frameB$ are called frame bounds.
The largest valid constant $\frameA$ and the smallest valid constant $\frameB$ are called \emph{the} (tightest possible) \emph{frame bounds}.
\end{dfn}

Let us next consider some simple examples of frames.
\begin{example}[\citepar{christensen96}]
	\label{ex:tightframeido}
Let $\{\basisel_k\}_{k=1}^\infty$ be an \onbac for an infinite-dimensional Hilbert space \hilspace.
By repeating each element in $\{\basisel_k\}_{k=1}^\infty$ once, we obtain the redundant set
	\begin{equation*}
	\{\analframeel_k\}_{k=1}^\infty=\{\basisel_1,\basisel_1,\basisel_2,\basisel_2,\ldots\}.
	\end{equation*}
	To see that this set is a frame for~\hilspace, we note that because $\{\basisel_k\}_{k=1}^\infty$ is an \onbac, for all $\fun\in\hilspace$, we have
	\begin{equation*}
		\sum_{k=1}^\infty \abs{\inner{\fun}{\basisel_k}}^2=\vecnorm{\fun}^2
	\end{equation*}
	and therefore 
	\begin{equation*}
		\sum_{k=1}^\infty \abs{\inner{\fun}{\analframeel_k}}^2=\sum_{k=1}^\infty \abs{\inner{\fun}{\basisel_k}}^2+\sum_{k=1}^\infty \abs{\inner{\fun}{\basisel_k}}^2=2\vecnorm{\fun}^2.
	\end{equation*}
	This verifies the frame condition \fref{eq:framedef} and shows that the frame bounds are given by 
	$\frameA=\frameB=2$.
\end{example}
\begin{example}[\citepar{christensen96}]
	\label{ex:tightframeidt}
	Starting from the \onbac  $\{\basisel_k\}_{k=1}^\infty,$ we can construct another redundant set as follows
	\begin{equation*}
		\{\analframeel_k\}_{k=1}^\infty=\left\{\basisel_1,\frac{1}{\sqrt 2}\basisel_2,\frac{1}{\sqrt 2}\basisel_2,\frac{1}{\sqrt 3}\basisel_3,\frac{1}{\sqrt 3}\basisel_3,\frac{1}{\sqrt 3}\basisel_3,\ldots\right\}.
\end{equation*}
To see that the set $\{\analframeel_k\}_{k=1}^\infty$ is a frame for~\hilspace, take an arbitrary $\fun\in\hilspace$ and  note that 
\begin{equation*}
	\sum_{k=1}^\infty \abs{\inner{\fun}{\analframeel_k}}^2=\sum_{k=1}^\infty k\abs{\inner{\fun}{\frac{1}{\sqrt k}\basisel_k}}^2 = \sum_{k=1}^\infty k \frac{1}{k} \abs{\inner{\fun}{\basisel_k}}^2=\sum_{k=1}^\infty \abs{\inner{\fun}{\basisel_k}}^2=\vecnorm{\fun}^2.
	\end{equation*}
	We conclude that $\{\analframeel_k\}_{k=1}^\infty$ is a frame with the frame bounds $\frameA=\frameB=1$.
\end{example}
From \eqref{eq:energyT} it follows that an equivalent formulation of  \eqref{eq:framedef} is
\begin{equation*}
\frameA\vecnorm{\fun}^2\leq \vecnorm{\analop\fun}^2\leq \frameB\vecnorm{\fun}^2, \quad \text{for all}  \quad\fun \in \hilspace.	
\end{equation*}

This means that the energy in the coefficient sequence $\analop\fun$ is bounded above and below by bounds that are proportional to the signal energy. The existence of a lower frame bound $\frameA>0$ guarantees that the linear operator \analop is left-invertible,~i.e., our first requirement above is satisfied. Besides that it also guarantees completeness of the set $\setframemm$ for $\hilspace$, as we shall see next. To this end, we first recall the following definition: 
\begin{dfn}
\label{dfn:completeness}
	A set of elements $\setframem$ is complete for the Hilbert space $\hilspace$ if~$\inner{\fun}{\analframeel_k}=0$ for all~$k\in\frameset$ and with~$\fun \in \hilspace$ implies~$\fun=0$, i.e., the only element in~$\hilspace$ that is orthogonal to
	all $\analframeel_k$, is $\fun=0$. 
\end{dfn}

To see that the frame~$\setframemm$ is complete for \hilspace, take an arbitrary signal~$\fun\in\hilspace$ and assume that~$\inner{\fun}{\analframeel_k}=0$ for all~$k\in\frameset$. Due to the existence of a lower frame bound~$\frameA>0$ we have
\begin{equation*}
	\frameA\vecnorm{\fun}^2\le\sum_{k\in\frameset} \abs{\inner{\fun}{\analframeel_k}}^2=0,
\end{equation*}
which implies $\vecnorm{\fun}^2=0$ and hence $\fun=0.$

 Finally, note that the existence of an upper frame bound $\frameB<\infty$ guarantees  that $\analop$ is a bounded linear operator\footnote{Let \hilspace and $\hilspace'$ be  Hilbert spaces and $\opA: \hilspace\to\hilspace'$ a linear operator. The operator $\opA$ is said to be \emph{bounded}  if there exists a finite number $c$ such that for all $\fun\in\hilspace$, $\vecnorm{\opA\fun}\le c\vecnorm{\fun}$.} (see~\cite[Def. 2.7-1]{kreyszig89}), and, therefore (see~\cite[Th. 2.7-9]{kreyszig89}), continuous\footnote{Let \hilspace and $\hilspace'$ be Hilbert spaces and $\opA: \hilspace\to\hilspace'$ a linear operator. The operator $\opA$ is said to be \emph{continuous} at a point $\fun_0\in\hilspace$ if for every $\epsilon>0$ there is a $\delta>0$ such that for all $\fun\in\hilspace$ satisfying $\vecnorm{\fun-\fun_0}<\delta$ it follows that $\vecnorm{\opA\fun-\opA\fun_0}< \epsilon$. The operator \opA is said to be continuous on \hilspace, if it is continuous at every point $\fun_0\in\hilspace$.} (see~\cite[Sec. 2.7]{kreyszig89}).

Recall that we would like to find a general method to reconstruct a signal $\fun\in\hilspace$ from its expansion coefficients $\{\inner{\fun}{\analframeel_k}\}_{k\in\frameset}$. 
In~\fref{sec:redsigexp} we saw that in the finite-dimensional case, this can be accomplished according to
\begin{equation*}
	\vsignal=\sum_{k=1}^\framesize\inner{\vsignal}{\vanalframeel_k}\vsynthframeel_k.
\end{equation*}
Here $\{\vsynthframeel_1,\ldots,\vsynthframeel_\framesize\}$ can be chosen to be the canonical dual set to the set $\{\vanalframeel_1,\ldots,\vanalframeel_\framesize\}$
and can be computed as follows: $\vsynthframeel_k=(\herm\manalop\manalop)^{-1}\vanalframeel_k,\, k=1,\ldots,\framesize$.
We already know that $\analop$ is the generalization of $\manalop$ to the infinite-dimensional setting.
Which operator will then correspond to~$\herm\manalop$?  
To answer this question we start with a definition.

\begin{dfn}
\label{dfn:analcrossop}
	The linear operator $\analcrossop$ is defined as 
	\begin{align*}
		&\analcrossop: \hilseqspace\to \hilspace\\
		& \analcrossop: \coeflt\to\sum_{k\in\frameset}\coef_k \analframeel_k.
	\end{align*}
\end{dfn}
Next, we recall the definition of the adjoint of an operator.
\begin{dfn}
	Let $\opA:\hilspace\to\hilspace'$ be a bounded linear operator between the Hilbert spaces $\hilspace$ and $\hilspace'$. The unique bounded linear operator $\ad\opA:\hilspace'\to\hilspace$ that satisfies 
\begin{equation}
	\label{eq:defadj}
	\inner{\opA\fun}{\altfun}=\inner{\fun}{\ad\opA\altfun}
\end{equation}
for all $\fun\in\hilspace $ and all $\altfun\in\hilspace'$ is called the adjoint of \opA.
\end{dfn}
Note that the concept of the adjoint of an operator directly generalizes that of the Hermitian transpose of a matrix: if $\mat\in\complexset^{N\times M}$, $\vecx\in\complexset^M$, $\vecy\in\complexset^N$, then 
\begin{equation*}
	\inner{\mat\vecx}{\vecy}=\herm\vecy\mat\vecx=\herm{(\herm\mat\vecy)}\vecx=\inner{\vecx}{\herm\mat\vecy},
\end{equation*}
which, comparing to~\fref{eq:defadj}, shows that $\herm\mat$ corresponds to $\ad\opA$.

We shall next show that the operator $\analcrossop$ defined above is nothing but the adjoint~$\ad{\analop}$ of the operator~$\analop$. 
To see this consider an arbitrary sequence~$\coeflt\in \hilseqspace$ and an arbitrary signal~$\fun\in\hilspace$. We have to prove that 
\begin{equation*}
	\inner{\analop\fun}{\coeflt}=\inner{\fun}{\analcrossop\coeflt}.
\end{equation*}
This can be established by noting that 
\begin{align*}
	\inner{\analop\fun}{\coeflt}&=\sum_{k\in\frameset} \inner{\fun}{\analframeel_k}\conj{\coef_k}\\
	\inner{\fun}{\analcrossop\coeflt}&=\inner{\fun}{\sum_{k\in\frameset}\coef_k \analframeel_k}=\sum_{k\in\frameset} \conj{\coef_k} \inner{\fun}{\analframeel_k}.
\end{align*}
We therefore showed that the adjoint operator of $\analop$ is $\analcrossop$, i.e.,
\begin{equation*}
	\analcrossop=\ad\analop.
\end{equation*}
In what follows, we shall always write $\ad\analop$ instead of $\analcrossop$. As pointed out above the concept of the adjoint of an operator generalizes the concept of the Hermitian transpose of a matrix to the infinite-dimensional case. Thus, $\ad\analop$ is the generalization of $\herm\manalop$ to the infinite-dimensional setting.

\subsection{The Frame Operator}
Let us return to the discussion we had immediately before~\fref{dfn:analcrossop}. We saw that in the finite-dimensional case, the canonical dual set $\{\vsynthframeel_1,\ldots,\vsynthframeel_\framesize\}$ to the set $\{\vanalframeel_1,\ldots,\vanalframeel_\framesize\}$
 can be computed as follows: $\vsynthframeel_k=(\herm\manalop\manalop)^{-1}\vanalframeel_k,\, k=1,\ldots,\framesize$. We know that $\analop$ is the generalization of $\manalop$ to the infinite-dimensional case and we have just seen that $\ad\analop$ is the generalization of $\herm\manalop$. It is now obvious that the operator $\ad\analop\analop$ must correspond to $\herm\manalop\manalop$. The operator $\ad\analop\analop$ is of central importance in frame theory.
\begin{dfn}
	\label{dfn:frameop}
	Let~$\{\analframeel_k\}_{k\in\frameset}$ be a frame for the Hilbert space $\hilspace$. The operator $\frameop:\hilspace\to\hilspace$ defined as
	\begin{align}
		\label{eq:frameop}
		&\frameop=\ad\analop\analop,\\
		&\frameop\fun=\sum_{k\in \frameset } \inner{\fun}{\analframeel_k}\analframeel_k\nonumber
	\end{align}
	is called the frame operator.
\end{dfn}
We note that 
\begin{equation}
	\label{eq:frameopanalopconnection}
	\sum_{k\in\frameset} \abs{\inner{\fun}{\analframeel_k}}^2=\vecnorm{\analop\fun}^2=\inner{\analop\fun}{\analop\fun}=\inner{\ad\analop\analop\fun}{\fun}=\inner{\frameop\fun}{\fun}.
\end{equation}
We are now able to formulate the frame condition in terms of the frame operator $\frameop$ by simply noting that~\fref{eq:framedef} can be written as
\begin{equation}
	\label{eq:framecondS}
	\frameA\vecnorm{\fun}^2\leq\inner{\frameop\fun}{\fun} \leq \frameB\vecnorm{\fun}^2.
\end{equation}

We  shall next discuss the properties of $\frameop$. 
\begin{thm}
The frame operator $\frameop$ satisfies the properties:
\begin{enumerate}
	\item \frameop is linear and bounded;
	\item \label{prop:selfadj}\frameop is self-adjoint, i.e., $\ad{\frameop}=\frameop$;
	\item \label{prop:posdef} \frameop is positive definite, i.e., $\inner{\frameop\fun}{\fun}>0$ for all $\fun\in\hilspace$;
	\item \label{prop:sqrt}\frameop has a unique self-adjoint positive definite square root (denoted as $\frameop^{1/2}$).
\end{enumerate}	
\end{thm}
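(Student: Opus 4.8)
The plan is to read off items 1--3 directly from the factorization $\frameop=\ad{\analop}\analop$ together with the frame inequalities $\frameA\vecnorm{\fun}^2\le\inner{\frameop\fun}{\fun}\le\frameB\vecnorm{\fun}^2$ (the operator form of the frame condition, \fref{eq:framecondS}), and to reserve the real effort for item 4. For item~1: $\frameop$ is linear because $\analop$ is linear by definition and adjoints and compositions of bounded linear operators are again linear; and $\frameop$ is bounded because the upper frame bound gives $\vecnorm{\analop\fun}^2\le\frameB\vecnorm{\fun}^2$, hence $\vecnorm{\analop}\le\sqrt{\frameB}$ (as noted just after \fref{dfn:completeness}), so that $\vecnorm{\frameop}\le\vecnorm{\ad{\analop}}\,\vecnorm{\analop}=\vecnorm{\analop}^2\le\frameB<\infty$. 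For item~2: using the elementary identities $\ad{(\opA\opB)}=\ad{\opB}\,\ad{\opA}$ and $\ad{(\ad{\opA})}=\opA$, both immediate from the defining relation \fref{eq:defadj}, we get $\ad{\frameop}=\ad{(\ad{\analop}\analop)}=\ad{\analop}\,\ad{(\ad{\analop})}=\ad{\analop}\analop=\frameop$. For item~3: by \fref{eq:frameopanalopconnection}, $\inner{\frameop\fun}{\fun}=\vecnorm{\analop\fun}^2\ge\frameA\vecnorm{\fun}^2$, which is strictly positive for every $\fun\neq0$; the same line in fact shows $\frameop\ge\frameA\,\iop$ in the operator order, a fact I will reuse for item~4.

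For item~4, the plan is to invoke the continuous functional calculus (equivalently, the spectral theorem) for bounded self-adjoint operators; see, e.g., \cite{kreyszig89}. By items 1--3 and \fref{eq:framecondS}, $\frameop$ is bounded, self-adjoint, and satisfies $\frameA\,\iop\le\frameop\le\frameB\,\iop$, so its spectrum $\sigma(\frameop)$ is a compact subset of $[\frameA,\frameB]\subset(0,\infty)$. The function $t\mapsto\sqrt{t}$ is continuous and strictly positive there, so $\frameop^{1/2}$, defined as the image of this function under the functional calculus, is a well-defined bounded operator; it is self-adjoint because the function is real-valued, positive definite because the function is strictly positive on $\sigma(\frameop)$, and satisfies $(\frameop^{1/2})^2=\frameop$ because squaring $\sqrt{t}$ returns $t$ and the calculus is multiplicative.

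For uniqueness, suppose $\opT$ is another positive self-adjoint operator with $\opT^2=\frameop$. Then $\opT$ commutes with $\frameop=\opT^2$, hence with every polynomial in $\frameop$, and therefore, approximating $\sqrt{\cdot}$ uniformly on $[\frameA,\frameB]$ by polynomials, with $\frameop^{1/2}$ as well. Setting $\vecy=(\frameop^{1/2}-\opT)\fun$ for arbitrary $\fun$ and using this commutativity, one finds $\inner{\frameop^{1/2}\vecy}{\vecy}+\inner{\opT\vecy}{\vecy}=\inner{(\frameop^{1/2}+\opT)\vecy}{\vecy}=\inner{(\frameop-\opT^2)\fun}{\vecy}=0$; since both summands are nonnegative, each vanishes, and since a positive operator annihilates any vector on which its quadratic form vanishes, $\frameop^{1/2}\vecy=\opT\vecy=0$, so $(\frameop^{1/2}-\opT)^2\fun=0$ and hence $\vecnorm{(\frameop^{1/2}-\opT)\fun}^2=\inner{(\frameop^{1/2}-\opT)^2\fun}{\fun}=0$. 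As $\fun$ was arbitrary, $\opT=\frameop^{1/2}$.

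The main obstacle is item~4: items 1--3 are one-liners resting on the factorization $\frameop=\ad{\analop}\analop$ and the frame bounds, whereas the existence of the positive square root genuinely needs spectral-theoretic input. If a self-contained treatment is preferred, one can alternatively construct $\frameop^{1/2}$ as an operator-norm-convergent series expansion of $\sqrt{\frameop}$ around a suitable scalar multiple of $\iop$ (the relevant series converges because $\vecnorm{\iop-\frameop/\frameB}<1$, which follows from $\frameA\,\iop\le\frameop\le\frameB\,\iop$), but this merely re-proves the needed special case of the functional calculus, so I would favor citing it and retaining only the short uniqueness argument above.
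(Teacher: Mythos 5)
Your proof is correct, and for items 1--3 it is essentially identical to the paper's: linearity and boundedness are read off the factorization $\frameop=\ad{\analop}\analop$, self-adjointness from $\ad{(\ad{\analop}\analop)}=\ad{\analop}\analop$, and positive definiteness from the lower frame bound via \fref{eq:framecondS} (your extra quantitative remark $\vecnorm{\frameop}\le\frameB$ is a harmless bonus). The only real divergence is item 4. The paper treats the existence and uniqueness of the self-adjoint positive definite square root as a black box: it quotes the standard result (\fref{lem:possqrt}, from \cite{kreyszig89}) and observes that items 2 and 3 put $\frameop$ within its hypotheses. You instead prove that result: existence via the continuous functional calculus applied to $t\mapsto\sqrt{t}$ on $\sigma(\frameop)\subset[\frameA,\frameB]\subset(0,\infty)$, and uniqueness via the commutation argument with $\vecy=(\frameop^{1/2}-\opT)\fun$, which is the standard and correct proof (the step ``a positive operator annihilates any vector on which its quadratic form vanishes'' itself uses a square root of that positive operator, or equivalently the generalized Cauchy--Schwarz inequality, but that is available here). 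The trade-off is exactly as you describe: the paper's route is shorter and defers the functional-analytic content to a citation, while yours is self-contained modulo the spectral theorem; your closing suggestion to cite the lemma and drop the construction would land you precisely on the paper's proof.
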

\begin{proof}
	\begin{enumerate}
	\item
	Linearity and boundedness of \frameop follow from the fact that \frameop is obtained by cascading a bounded linear operator and its adjoint (see~\fref{eq:frameop}). 
	\item
	To see that \frameop is self-adjoint simply note that
	\begin{equation*}
		\ad\frameop = \ad{(\ad\analop\analop)} = \ad\analop\analop = \frameop.
	\end{equation*}
	\item
	To see that \frameop is positive definite note that, with \fref{eq:framecondS}
	\begin{equation*}
		\inner{\frameop\fun}{\fun}\ge \frameA\vecnorm{\fun}^2>0
	\end{equation*}
	for all $\fun\in\hilspace,\ \fun\ne 0$.
	\item
	Recall the following basic fact from functional analysis~\cite[Th. 9.4-2]{kreyszig89}.
	\begin{lem} 
	\label{lem:possqrt}
	Every self-adjoint positive definite bounded  operator $\opA:\hilspace\to\hilspace$ has a unique self-adjoint positive definite square root, i.e., there exists a unique self-adjoint positive-definite operator \opB such that $\opA=\opB\opB$.  The operator \opB commutes with the operator \opA, i.e., $\opB\opA=\opA\opB$.  
	\end{lem}
	Property \ref{prop:sqrt} now follows directly form Property \ref{prop:selfadj}, Property \ref{prop:posdef}, and~\fref{lem:possqrt}.
	\end{enumerate} 
\end{proof}	
We next show that the tightest possible frame bounds $\frameA$ and $\frameB$ are given by the smallest and the largest spectral value~\cite[Def. 7.2-1]{kreyszig89} of the frame operator \frameop, respectively.
\begin{thm}
	Let $\frameA$ and $\frameB$ be the tightest possible frame bounds for a frame with frame operator $\frameop$. Then
	\begin{equation}
		\label{eq:framaABeval}
		\frameA=\eval_{\min}\quad \text{and}\quad \frameB= \eval_{\max},
	\end{equation}
	where  $\eval_{\min}$ and $\eval_{\max}$ denote the smallest and the largest spectral value of $\frameop$, respectively. 
\end{thm}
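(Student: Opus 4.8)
The plan is to reduce the claim to the classical spectral characterization of the extreme values of a bounded self-adjoint operator via its Rayleigh quotient, exploiting that the preceding theorem already guarantees that $\frameop=\ad\analop\analop$ is bounded, self-adjoint, and positive definite.

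First I would reformulate what it means for $\frameA$ and $\frameB$ to be the \emph{tightest} frame bounds. By \fref{eq:framecondS}, a pair with $0<\frameA\le\frameB<\infty$ consists of valid frame bounds if and only if $\frameA\vecnorm{\fun}^2\le\inner{\frameop\fun}{\fun}\le\frameB\vecnorm{\fun}^2$ for all $\fun\in\hilspace$. Since $\inner{\frameop\fun}{\fun}$ is real and strictly positive for $\fun\neq 0$ (self-adjointness and positive definiteness of $\frameop$), dividing by $\vecnorm{\fun}^2$ shows that $\frameA$ is a valid lower bound exactly when $\frameA\le\inf_{\vecnorm{\fun}=1}\inner{\frameop\fun}{\fun}$, and $\frameB$ is a valid upper bound exactly when $\frameB\ge\sup_{\vecnorm{\fun}=1}\inner{\frameop\fun}{\fun}$. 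Hence the tightest (largest) valid lower bound is $m\define\inf_{\vecnorm{\fun}=1}\inner{\frameop\fun}{\fun}$ and the tightest (smallest) valid upper bound is $M\define\sup_{\vecnorm{\fun}=1}\inner{\frameop\fun}{\fun}$, with $0<m\le M<\infty$ because $\frameop$ is positive definite and bounded.

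Second, I would invoke the standard fact (see, e.g.,~\cite[Sec.~9.2]{kreyszig89}) that for a bounded self-adjoint operator on a complex Hilbert space the spectrum is a nonempty compact subset of the real line contained in $[m,M]$, and that moreover both endpoints $m$ and $M$ actually belong to the spectrum $\spectrum(\frameop)$. Consequently $m=\eval_{\min}$ and $M=\eval_{\max}$, the smallest and largest spectral values of $\frameop$, which combined with the first step yields \fref{eq:framaABeval}.

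The step needing genuine work — the others being bookkeeping — is the assertion that $m,M\in\spectrum(\frameop)$, i.e.\ that the infimum and supremum of the Rayleigh quotient are attained \emph{as spectral values} and not merely as bounds; in finite dimensions this is the Rayleigh–Ritz theorem already used around \fref{eq:framematspecbounds}, but the infinite-dimensional version requires an argument. If I wanted the proof self-contained rather than merely citing \cite{kreyszig89}, I would argue as follows for $M$, the case of $m$ being symmetric: the operator $M\iop-\frameop$ is self-adjoint and positive semidefinite, so $(\fun,\altfun)\mapsto\inner{(M\iop-\frameop)\fun}{\altfun}$ is a positive semidefinite sesquilinear form and therefore satisfies a Cauchy–Schwarz inequality; picking unit vectors $\fun_n$ with $\inner{\frameop\fun_n}{\fun_n}\to M$ and applying this inequality forces $\vecnorm{(M\iop-\frameop)\fun_n}\to 0$, so $M\iop-\frameop$ is not boundedly invertible, i.e.\ $M\in\spectrum(\frameop)$. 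I expect this Cauchy–Schwarz estimate for semidefinite forms to be the only delicate point of the whole argument.
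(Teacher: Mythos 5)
Your proposal is correct and follows essentially the same route as the paper: both identify the tightest frame bounds with the infimum and supremum of the Rayleigh quotient $\inner{\frameop\fun}{\fun}/\vecnorm{\fun}^2$ and then appeal to the standard spectral theory of bounded self-adjoint operators (the paper cites \cite[Th. 9.2-1, Th. 9.2-3, Th. 9.2-4]{kreyszig89} for exactly this). The only difference is that you additionally sketch the Cauchy--Schwarz argument for semidefinite forms showing $m,M\in\spectrum(\frameop)$, a step the paper delegates entirely to the citation; that sketch is sound.
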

\begin{proof}
	By standard results on the spectrum of self-adjoint operators~\cite[Th. 9.2-1, Th. 9.2-3, Th. 9.2-4]{kreyszig89}, we have
	\begin{equation}
		\label{eq:frameopspecbounds}
		\eval_{\min}=\inf_{\fun\in\hilspace}\frac{\inner{\frameop\fun}{\fun}}{\vecnorm{\fun}^2} \ \ \ \text{and} \ \ \   \eval_{\max}=\sup_{\fun\in\hilspace}\frac{\inner{\frameop\fun}{\fun}}{\vecnorm{\fun}^2}.
	\end{equation}
	This means that $\eval_{\min}$ and $\eval_{\max}$ are, respectively, the largest and the smallest constants such that
	\begin{equation}
		\label{eq:frameopspecbounds2}
		\eval_{\min}\vecnorm{\fun}^2\leq\inner{\frameop\fun}{\fun} \leq \eval_{\max}\vecnorm{\fun}^2 
	\end{equation}
	is satisfied for every $\fun\in\hilspace$.
	According to \eqref{eq:framecondS} this implies that $\eval_{\min}$ and $\eval_{\max}$ are the tightest possible frame bounds.
\end{proof}	

It is instructive to compare \fref{eq:frameopspecbounds2} to \fref{eq:framematspecbounds}. Remember that  $\frameop=\ad\analop\analop$ corresponds to the matrix $\herm\manalop\manalop$ in the finite-dimensional case considered in \fref{sec:redsigexp}. Thus, $\vecnorm{\vcoef}^2=\herm\vsignal \herm{\manalop}\manalop\vsignal=\inner{\mframeop\vsignal}{\vsignal}$, which upon insertion into \fref{eq:framematspecbounds}, shows that \fref{eq:frameopspecbounds2} is simply a generalization of \fref{eq:framematspecbounds} to the infinite-dimensional case.

\subsection{The Canonical Dual Frame}
\label{sec:dualframe}
 Recall that in the finite-dimensional case considered in \fref{sec:redsigexp}, the canonical dual frame $\{\vsynthframeel_k\}_{k=1}^\framesize$ of the frame $\{\vanalframeel_k\}_{k=1}^\framesize$ can be used to reconstruct the signal $\vsignal$ from the expansion coefficients $\{\inner{\vsignal}{\vanalframeel_k}\}_{k=1}^\framesize$ according to
\begin{equation*}
		\vsignal=\sum_{k=1}^\framesize\inner{\vsignal}{\vanalframeel_k}\vsynthframeel_k.
\end{equation*}
In~\fref{eq:finitedimcanondual} we saw that the canonical dual frame can be computed as follows: 
\begin{equation}
	\label{eq:finitedimcanondual1}
	\vsynthframeel_k=(\herm\manalop\manalop)^{-1}\vanalframeel_k,\quad k=1,\ldots,\framesize.
\end{equation}
We already pointed out that the frame operator $\frameop=\ad\analop\analop$ is represented by the matrix $\herm\manalop\manalop$ in the finite-dimensional case. The matrix $(\herm\manalop\manalop)^{-1}$  therefore corresponds to  the operator $\frameop^{-1}$, which will be studied next.

From~\fref{eq:framaABeval} it follows that $\eval_{\min}$, the smallest spectral value of \frameop, satisfies $\eval_{\min}>0$ if $\setframemm$  is a frame. This implies that zero is a regular value~\cite[Def. 7.2-1]{kreyszig89} of \frameop and hence \frameop is invertible on \hilspace, i.e., there exists a unique operator $\frameop^{-1}$ such  that $\frameop\frameop^{-1}=\frameop^{-1}\frameop=\iop_\hilspace$.  
Next, we summarize the properties of $\frameop^{-1}$.

\begin{thm}
\label{thm:smone}
	The following properties hold:
	\begin{enumerate}
		\item $\frameop^{-1}$ is self-adjoint, i.e., $\ad{\left(\frameop^{-1}\right)}=\frameop^{-1}$;
		\item $\frameop^{-1}$ satisfies
			\begin{equation}
			\label{eq:frameSinvorder1}
			\frac{1}{\frameB}=\inf_{\fun\in\hilspace}\frac{\inner{\frameop^{-1}\fun}{\fun}}{\vecnorm{\fun}^2} \ \ \ \text{and} \ \ \   \frac{1}{\frameA}=\sup_{\fun\in\hilspace}\frac{\inner{\frameop^{-1}\fun}{\fun}}{\vecnorm{\fun}^2},
		\end{equation}
		where $\frameA$ and $\frameB$ are the tightest possible frame bounds of $\frameop$;
		\item $\frameop^{-1}$ is positive definite.
	\end{enumerate}
\end{thm}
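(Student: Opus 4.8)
The plan is to handle the three assertions in order, reducing each as far as possible to facts already established for $\frameop$ itself (self-adjointness, positive definiteness, existence of the square root, and the variational characterisation $\eval_{\min}=\inf_\fun\inner{\frameop\fun}{\fun}/\vecnorm\fun^2$, $\eval_{\max}=\sup_\fun\inner{\frameop\fun}{\fun}/\vecnorm\fun^2$ with $\frameA=\eval_{\min}$, $\frameB=\eval_{\max}$).

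For self-adjointness I would start from the defining relation $\frameop\frameop^{-1}=\iop_\hilspace$ and take adjoints on both sides. Using $\ad\frameop=\frameop$ together with the elementary identity $\ad{(\opA\opB)}=\ad\opB\,\ad\opA$, this gives $\ad{(\frameop^{-1})}\frameop=\iop_\hilspace$; multiplying on the right by $\frameop^{-1}$ then yields $\ad{(\frameop^{-1})}=\frameop^{-1}$. (Equivalently, one invokes uniqueness of the inverse of $\frameop$.) This step is routine.

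The substantive step is the two-sided estimate~\fref{eq:frameSinvorder1}. Here I would exploit the self-adjoint positive-definite square root $\frameop^{1/2}$ from Property~\ref{prop:sqrt} and~\fref{lem:possqrt}. First I would record that $\frameop^{1/2}$ is invertible with bounded inverse: from $\inner{\frameop\fun}{\fun}=\vecnorm{\frameop^{1/2}\fun}^2\ge\frameA\vecnorm\fun^2$ the operator $\frameop^{1/2}$ is bounded below, hence injective with closed range, and being self-adjoint its range is dense; therefore $\frameop^{1/2}$ is a bijection of $\hilspace$, its inverse $\frameop^{-1/2}\define(\frameop^{1/2})^{-1}$ is bounded and self-adjoint (the adjoint of an inverse is the inverse of the adjoint), and $\frameop^{-1/2}\frameop^{-1/2}=\frameop^{-1}$. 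Then for any $\fun\in\hilspace\setminus\{0\}$, the substitution $\fun=\frameop^{1/2}g$, i.e. $g=\frameop^{-1/2}\fun$, together with self-adjointness of $\frameop^{1/2}$, gives $\inner{\frameop^{-1}\fun}{\fun}=\inner{\frameop^{-1/2}g}{\frameop^{1/2}g}=\vecnorm g^2$ and $\vecnorm\fun^2=\inner{\frameop^{1/2}g}{\frameop^{1/2}g}=\inner{\frameop g}{g}$, so that
\begin{equation*}
	\frac{\inner{\frameop^{-1}\fun}{\fun}}{\vecnorm\fun^2}=\frac{\vecnorm g^2}{\inner{\frameop g}{g}}.
\end{equation*}
Since $g\mapsto\frameop^{1/2}g$ is a bijection of $\hilspace$, as $\fun$ ranges over $\hilspace\setminus\{0\}$ so does $g$; taking supremum and infimum, and using $\inf_g\inner{\frameop g}{g}/\vecnorm g^2=\eval_{\min}=\frameA>0$ and $\sup_g\inner{\frameop g}{g}/\vecnorm g^2=\eval_{\max}=\frameB$ (from~\fref{eq:frameopspecbounds} and~\fref{eq:framaABeval}), yields~\fref{eq:frameSinvorder1}. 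A more heavy-handed alternative would be to invoke the spectral mapping theorem for the self-adjoint operator $\frameop$ to get $\lopspec(\frameop^{-1})\subseteq[1/\frameB,1/\frameA]$ and then apply the variational characterisation to $\frameop^{-1}$ directly; I prefer the elementary route above since the square root is already at hand.

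Finally, positive definiteness of $\frameop^{-1}$ is immediate from the lower bound just obtained, $\inner{\frameop^{-1}\fun}{\fun}\ge(1/\frameB)\vecnorm\fun^2>0$ for $\fun\ne0$; alternatively, writing $g=\frameop^{-1}\fun\ne0$ one has $\inner{\frameop^{-1}\fun}{\fun}=\inner{g}{\frameop g}=\inner{\frameop g}{g}>0$ by positive definiteness of $\frameop$. The only point demanding genuine care is the invertibility and self-adjointness of $\frameop^{1/2}$ (and the bookkeeping identity $\frameop^{-1/2}\frameop^{-1/2}=\frameop^{-1}$) underpinning the change of variables; the rest is manipulation of adjoints and reuse of the variational formulas for $\eval_{\min},\eval_{\max}$ already proved.
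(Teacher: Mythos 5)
Your proposal is correct and follows essentially the same route as the paper: adjoint manipulation of $\frameop\frameop^{-1}=\iop_\hilspace$ for self-adjointness, a change of variables through the self-adjoint positive definite square root $\frameop^{1/2}$ to transfer the variational formulas for $\eval_{\min},\eval_{\max}$ from $\frameop$ to $\frameop^{-1}$, and positive definiteness as an immediate consequence of the lower bound $1/\frameB>0$. The only (welcome) difference is that you justify explicitly that $\frameop^{1/2}$ is a bounded bijection of $\hilspace$, which the paper takes for granted when asserting its substitution is one-to-one.
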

\begin{proof}
	\begin{enumerate}
	\item
	To prove that $\frameop^{-1}$ is  self-adjoint we write
	\begin{equation*}
		\ad{(\frameop\frameop^{-1})}=\ad{(\frameop^{-1})}\ad\frameop=\iop_\hilspace.
	\end{equation*}
	Since \frameop is self-adjoint, i.e., $\frameop=\ad\frameop$, we conclude that
	\begin{equation*}
		\ad{(\frameop^{-1})}\frameop=\iop_\hilspace. 
	\end{equation*}
	Multiplying by $\frameop^{-1}$ from the right, we finally obtain
	\begin{equation*}
		\ad{(\frameop^{-1})}=\frameop^{-1}.
	\end{equation*}
	\item
	To prove the first equation in~\fref{eq:frameSinvorder1} we write 
	\begin{align}
		\label{eq:Smoproof}
	\frameB=\sup_{\fun\in\hilspace}\frac{\inner{\frameop\fun}{\fun}}{\vecnorm{\fun}^2}
	&=\sup_{\altfun\in\hilspace}\frac{\inner{\frameop\frameop^{1/2}\frameop^{-1}\altfun}{\frameop^{1/2}\frameop^{-1}\altfun}}{\inner{\frameop^{1/2}\frameop^{-1}\altfun}{\frameop^{1/2}\frameop^{-1}\altfun}}\nonumber\\
	&=\sup_{\altfun\in\hilspace}\frac{\inner{\frameop^{-1}\frameop^{1/2}\frameop\frameop^{1/2}\frameop^{-1}\altfun}{\altfun}}{\inner{\frameop^{-1}\frameop^{1/2}\frameop^{1/2}\frameop^{-1}\altfun}{\altfun}}=\sup_{\altfun\in\hilspace}\frac{\inner{\altfun}{\altfun}}{\inner{\frameop^{-1}\altfun}{\altfun}}
	\end{align}
	where the first equality follows from~\fref{eq:framaABeval} and \fref{eq:frameopspecbounds}; in the second equality we used the fact that the operator $\frameop^{1/2}\frameop^{-1}$ is one-to-one on $\hilspace$ and  changed variables according to $\fun=\frameop^{1/2}\frameop^{-1}\altfun$; in the third equality we used the fact that $\frameop^{1/2}$ and $\frameop^{-1}$ are self-adjoint, and in the fourth equality we used $\frameop=\frameop^{1/2}\frameop^{1/2}$. 
The first equation in~\fref{eq:frameSinvorder1} is now obtained by noting that~\fref{eq:Smoproof} implies
	\begin{equation*}
		\frac{1}{\frameB}=1\left/\left(\sup_{\altfun\in\hilspace}\frac{\inner{\altfun}{\altfun}}{\inner{\frameop^{-1}\altfun}{\altfun}}\right)\right.=
		\inf_{\altfun\in\hilspace}\frac{\inner{\frameop^{-1}\altfun}{\altfun}}{\inner{\altfun}{\altfun}}.
	\end{equation*}
	The second equation in~\fref{eq:frameSinvorder1} is proved analogously.
			\item
		Positive-definiteness of $\frameop^{-1}$ follows from the first equation in~\fref{eq:frameSinvorder1} and the fact that $\frameB<\infty$ so that $1/\frameB>0$.	
	\end{enumerate}
\end{proof}

We are now ready to generalize~\fref{eq:finitedimcanondual1} and state the main result on canonical dual frames in the case of general (possibly infinite-dimensional) Hilbert spaces.
\begin{thm}
\label{thm:dualframe}
	Let $\setframemm$ be a frame for the Hilbert space \hilspace with the frame bounds \frameA and \frameB, and let $\frameop$ be the corresponding frame operator. Then, the set $\setdualframemm$ given by
	\begin{equation}
		\label{eq:canonicaldual}
		\synthframeel_k=\frameop^{-1} \analframeel_k, \quad k\in\frameset,
	\end{equation}
	is a frame for \hilspace with the frame bounds $\frameAdual=1/\frameB$ and $\frameBdual=1/\frameA$. 
	
	The analysis operator associated to $\setdualframemm$ 
	defined as
	\begin{align*}
		&\analdualop:\hilspace \to \hilseqspace\\
		&\analdualop: \fun\to \{\inner{\fun}{\synthframeel_k}\}_{k\in\frameset} 
	\end{align*}
	satisfies
	\begin{equation}
		\label{eq:analdualopdef}
		\analdualop=\analop\frameop^{-1}=\analop\left(\ad\analop\analop\right)^{-1}.
	\end{equation}
\end{thm}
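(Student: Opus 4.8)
The plan is to reduce the theorem to facts about $\frameop^{-1}$ that are already in hand from Theorem~\ref{thm:smone}, together with the identity $\vecnorm{\analop\fun}^2=\inner{\frameop\fun}{\fun}$ from~\fref{eq:frameopanalopconnection}; essentially no new analytic content is needed. First I would prove the operator identity~\fref{eq:analdualopdef}, and along the way check that $\analdualop$ is well defined. For every $k\in\frameset$ and every $\fun\in\hilspace$, self-adjointness of $\frameop^{-1}$ (Theorem~\ref{thm:smone}, part~1) gives
\[
  \inner{\fun}{\synthframeel_k}=\inner{\fun}{\frameop^{-1}\analframeel_k}=\inner{\frameop^{-1}\fun}{\analframeel_k},
\]
which is precisely the $k$-th coordinate of $\analop(\frameop^{-1}\fun)$. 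Since $\frameop^{-1}$ maps $\hilspace$ into $\hilspace$ and $\analop$ maps $\hilspace$ into $\hilseqspace$, we get $\analop\frameop^{-1}\fun\in\hilseqspace$, so that $\sum_{k\in\frameset}\abs{\inner{\fun}{\synthframeel_k}}^2=\vecnorm{\analop\frameop^{-1}\fun}^2<\infty$; hence $\setdualframemm$ is a Bessel sequence, $\analdualop$ is well defined, and the display above identifies $\analdualop=\analop\frameop^{-1}=\analop(\ad\analop\analop)^{-1}$.

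Next I would compute the coefficient energy for the dual set. Using the analogue of~\fref{eq:energyT} for $\analdualop$ together with the identity just proved,
\[
  \sum_{k\in\frameset}\abs{\inner{\fun}{\synthframeel_k}}^2=\vecnorm{\analdualop\fun}^2=\vecnorm{\analop\frameop^{-1}\fun}^2,
\]
and then I would apply~\fref{eq:frameopanalopconnection} with $\frameop^{-1}\fun$ in place of $\fun$, so that the right-hand side becomes $\inner{\frameop(\frameop^{-1}\fun)}{\frameop^{-1}\fun}=\inner{\fun}{\frameop^{-1}\fun}$; this equals $\inner{\frameop^{-1}\fun}{\fun}$ because $\frameop^{-1}$ is self-adjoint (and positive definite, so the quantity is real). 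Thus $\sum_{k\in\frameset}\abs{\inner{\fun}{\synthframeel_k}}^2=\inner{\frameop^{-1}\fun}{\fun}$ for all $\fun\in\hilspace$.

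Finally I would invoke Theorem~\ref{thm:smone}, part~2: taking $\frameA,\frameB$ to be the tightest frame bounds (for any other valid pair the two-sided estimate below still holds by monotonicity), it yields
\[
  \frac{1}{\frameB}\vecnorm{\fun}^2\le\inner{\frameop^{-1}\fun}{\fun}\le\frac{1}{\frameA}\vecnorm{\fun}^2,\qquad \fun\in\hilspace,
\]
with $1/\frameB>0$ since $\frameB<\infty$. Combined with the previous step this is exactly the frame inequality for $\setdualframemm$ with bounds $\frameAdual=1/\frameB$ and $\frameBdual=1/\frameA$, which completes the proof. I do not anticipate a real obstacle here: the heavy lifting already sits in Theorem~\ref{thm:smone} and~\fref{eq:frameopanalopconnection}, and the only points needing any care are confirming that $\analdualop$ genuinely lands in $\hilseqspace$ (handled in the first step) and the self-adjointness bookkeeping that lets one replace $\inner{\fun}{\frameop^{-1}\fun}$ by $\inner{\frameop^{-1}\fun}{\fun}$.
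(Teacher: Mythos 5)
Your proposal is correct and follows essentially the same route as the paper's proof: move $\frameop^{-1}$ across the inner product by self-adjointness, use \fref{eq:frameopanalopconnection} to identify the coefficient energy with $\inner{\frameop^{-1}\fun}{\fun}$, and invoke \fref{eq:frameSinvorder1} for the bounds $1/\frameB$ and $1/\frameA$. The only differences are cosmetic: you establish the identity $\analdualop=\analop\frameop^{-1}$ first (the paper does it last) and add a brief, harmless check that the dual set is a Bessel sequence so $\analdualop$ is well defined.
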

\begin{proof}
	Recall that $\frameop^{-1}$ is self-adjoint. 
	Hence, we have $\inner{\fun}{\synthframeel_k}=\inner{\fun}{\frameop^{-1}\analframeel_k}=\inner{\frameop^{-1}\fun}{\analframeel_k}$ for all $\fun\in\hilspace$.
	Thus, using \fref{eq:frameopanalopconnection}, we obtain
	\begin{align*}
	\sum_{k\in\frameset} \abs{\inner{\fun}{\synthframeel_k}}^2
	&
	= \sum_{k\in\frameset} \abs{\inner{\frameop^{-1}\fun}{\analframeel_k}}^2 \\
	&=\inner{\frameop(\frameop^{-1}\fun)}{\frameop^{-1}\fun}=\inner{\fun}{\frameop^{-1}\fun}=\inner{\frameop^{-1}\fun}{\fun}.
	\end{align*}
	Therefore, we conclude from \fref{eq:frameSinvorder1} that
	\begin{equation*}
		\frac{1}{\frameB}\vecnorm{\fun}^2\le \sum_{k\in\frameset}  \abs{\inner{\fun}{\synthframeel_k}}^2\le\frac{1}{\frameA}\vecnorm{\fun}^2,
	\end{equation*}
	i.e., the set $\setdualframemm$ constitutes a frame for \hilspace with frame bounds $\frameAdual=1/\frameB$ and $\frameBdual=1/\frameA$; moreover, it follows from \fref{eq:frameSinvorder1} that $\frameAdual=1/\frameB$ and $\frameBdual=1/\frameA$ are the tightest possible frame bounds. It remains to show that $\analdualop=\analop\frameop^{-1}$:
	\begin{equation*}
	\analdualop\fun=\left\{\inner{\fun}{\synthframeel_k}\right\}_{k\in\frameset}=\left\{\inner{\fun}{\frameop^{-1}\analframeel_k}\right\}_{k\in\frameset}=\left\{\inner{\frameop^{-1}\fun}{\analframeel_k}\right\}_{k\in\frameset}=\analop\frameop^{-1}\fun.
	\end{equation*}
\end{proof}

We call $\setdualframemm$ the \emph{canonical dual frame} associated to the frame $\setframemm$. 
It is convenient to introduce the \emph{canonical dual frame operator}:
\begin{dfn}
	The frame operator associated to the canonical dual frame, 
	\begin{equation}
		\label{eq:dfndualframeop}
		\dualframeop=\ad\analdualop\analdualop,\quad \dualframeop\fun=\sum_{k\in\frameset} \inner{\fun}{\synthframeel_k}\synthframeel_k
	\end{equation}
	is called the canonical dual frame operator.
\end{dfn}
\begin{thm}
	The canonical dual frame operator $\dualframeop$ satisfies $\dualframeop=\frameop^{-1}.$
\end{thm}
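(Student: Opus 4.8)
The plan is to derive $\dualframeop=\frameop^{-1}$ directly from the identity $\analdualop=\analop\frameop^{-1}$ proved in \fref{thm:dualframe}, the self-adjointness of $\frameop^{-1}$ from \fref{thm:smone}, and the factorization $\frameop=\ad\analop\analop$ of \fref{dfn:frameop}. No genuine obstacle is expected; the argument is a short operator computation, and the only points deserving a line of justification are standard facts about bounded operators that were already recorded earlier in the text.

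First I would compute $\ad\analdualop$. Because $\analdualop=\analop\frameop^{-1}$ is a composition of bounded operators (both $\analop$ and $\frameop^{-1}$ are bounded --- the former by the upper frame bound, the latter by \fref{thm:smone}), the composition rule for adjoints gives $\ad\analdualop=\ad{(\frameop^{-1})}\,\ad\analop$, and since $\frameop^{-1}$ is self-adjoint this equals $\frameop^{-1}\ad\analop$.

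Next I would substitute into the definition \fref{eq:dfndualframeop} of $\dualframeop$:
\[
\dualframeop=\ad\analdualop\,\analdualop=\frameop^{-1}\,\ad\analop\,\analop\,\frameop^{-1}=\frameop^{-1}\,\frameop\,\frameop^{-1}=\frameop^{-1},
\]
where the third equality uses $\frameop=\ad\analop\analop$ and the last uses $\frameop\frameop^{-1}=\iop_\hilspace$. Alternatively, and equivalently, one can work with the series form in \fref{eq:dfndualframeop}: for $\fun\in\hilspace$ use self-adjointness of $\frameop^{-1}$ to write $\inner{\fun}{\synthframeel_k}=\inner{\frameop^{-1}\fun}{\analframeel_k}$, so that $\sum_{k\in\frameset}\inner{\fun}{\synthframeel_k}\synthframeel_k=\sum_{k\in\frameset}\inner{\frameop^{-1}\fun}{\analframeel_k}\,\frameop^{-1}\analframeel_k=\frameop^{-1}\bigl(\sum_{k\in\frameset}\inner{\frameop^{-1}\fun}{\analframeel_k}\analframeel_k\bigr)=\frameop^{-1}\frameop\frameop^{-1}\fun=\frameop^{-1}\fun$, the interchange of $\frameop^{-1}$ with the convergent sum being justified by continuity (boundedness) of $\frameop^{-1}$.

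Since both routes are essentially one line, the ``hard part'' is really just bookkeeping: making sure the composition-of-adjoints identity $\ad{(\analop\frameop^{-1})}=\frameop^{-1}\ad\analop$ is invoked legitimately, which it is because all operators in sight are bounded and $\frameop^{-1}$ is self-adjoint. I would present the first (operator-level) computation as the main proof and note in passing that it is consistent with the series definition of $\dualframeop$.
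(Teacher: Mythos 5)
Your proposal is correct, and your primary route is a slightly different (and arguably cleaner) packaging of the argument than the paper's. The paper works element-wise with the series definition: it writes $\dualframeop\fun=\sum_{k\in\frameset}\inner{\fun}{\frameop^{-1}\analframeel_k}\frameop^{-1}\analframeel_k$, pulls $\frameop^{-1}$ out of the sum and through the inner product using its self-adjointness and boundedness, and recognizes $\frameop\frameop^{-1}\fun$ — i.e., exactly the ``alternative'' computation you sketch in your second display. Your main proof instead operates at the level of the operator factorization: it takes the identity $\analdualop=\analop\frameop^{-1}$ already established in \fref{thm:dualframe}, applies the composition rule for adjoints of bounded operators together with $\ad{(\frameop^{-1})}=\frameop^{-1}$ to get $\ad\analdualop=\frameop^{-1}\ad\analop$, and then collapses $\ad\analdualop\analdualop=\frameop^{-1}\frameop\frameop^{-1}=\frameop^{-1}$ in one line. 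What the operator-level route buys is that all convergence and interchange-of-sum issues are absorbed into facts already proved (boundedness of $\analop$, $\frameop^{-1}$, and the adjoint identity), so nothing new needs justifying; what the paper's element-wise route buys is that it makes transparent, directly from $\synthframeel_k=\frameop^{-1}\analframeel_k$, how the dual frame elements conspire to produce $\frameop^{-1}$, without invoking the formula for $\analdualop$. Both are complete; your bookkeeping on boundedness and self-adjointness is exactly the right set of hypotheses to check.
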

\begin{proof}
	For every $\fun\in\hilspace$, we have
\begin{align*}
	\dualframeop\fun&=\sum_{k\in\frameset} \inner{\fun}{\synthframeel_k}\synthframeel_k=\sum_{k\in\frameset} \inner{\fun}{\frameop^{-1}\analframeel_k}\frameop^{-1}\analframeel_k\\
	&=\frameop^{-1}\sum_{k\in\frameset} \inner{\frameop^{-1}\fun}{\analframeel_k}\analframeel_k=\frameop^{-1}\frameop\frameop^{-1}\fun=\frameop^{-1}\fun,
\end{align*}
where in the first equality we used \fref{eq:dfndualframeop}, in the second we used \fref{eq:canonicaldual}, in the third we made use of the fact that $\frameop^{-1}$ is self-adjoint, and in the fourth we used the definition of $\frameop$.
\end{proof}
Note that canonical duality is a reciprocity relation. If the frame $\setdualframemm$ is the canonical dual of the frame $\setframemm$, then $\setframemm$ is the canonical dual of the frame $\setdualframemm$. 
This can be seen by noting that 
\begin{equation*}
	\dualframeop^{-1}\synthframeel_k=(\frameop^{-1})^{-1}\frameop^{-1}\analframeel_k=\frameop\frameop^{-1}\analframeel_k=\analframeel_k.
\end{equation*}

\subsection{Signal Expansions}
\label{sec:signalexpansions}
The following theorem can be considered as one of the \emph{central results in frame theory.} It states that every signal $\fun\in\hilspace$ can be expanded into a frame. The expansion coefficients can be chosen as the inner products of $\fun$ with the canonical dual frame elements. 
\begin{thm}
	\label{thm:frameexpansion}
	Let $\setframemm$ and $\setdualframemm$ be canonical dual frames for the Hilbert space \hilspace. Every signal $\fun\in\hilspace$ can be decomposed as follows
	\begin{align}
		&\fun=\ad\analop\analdualop\fun=\sum_{k\in\frameset} \inner{\fun}{\synthframeel_k}\analframeel_k\nonumber\\
		\label{eq:signalexp2}
		&\fun=\ad\analdualop\analop\fun=\sum_{k\in\frameset} \inner{\fun}{\analframeel_k}\synthframeel_k.
	\end{align}
	Note that, equivalently, we have
	\begin{equation*}
		\ad\analop\analdualop=\ad\analdualop\analop=\iop_\hilspace.
	\end{equation*}
\end{thm}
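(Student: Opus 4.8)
The plan is to reduce both decomposition formulas to the single operator fact $\frameop^{-1}\frameop=\frameop\frameop^{-1}=\iop_\hilspace$, combined with the factorization $\frameop=\ad\analop\analop$ from \fref{dfn:frameop}, the self-adjointness of $\frameop^{-1}$ (\fref{thm:smone}), and the identity $\analdualop=\analop\frameop^{-1}$ from \fref{thm:dualframe}. Concretely, I would first prove the two series identities by an explicit computation, and then translate them into the operator equation $\ad\analop\analdualop=\ad\analdualop\analop=\iop_\hilspace$.

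For the first line of \fref{eq:signalexp2} I would start from $\fun=\frameop\frameop^{-1}\fun$ and expand the outer $\frameop$ by applying \fref{eq:frameop} to the vector $\frameop^{-1}\fun$, obtaining $\fun=\sum_{k\in\frameset}\inner{\frameop^{-1}\fun}{\analframeel_k}\analframeel_k$; moving $\frameop^{-1}$ across the inner product by self-adjointness and recognizing $\synthframeel_k=\frameop^{-1}\analframeel_k$ turns this into $\fun=\sum_{k\in\frameset}\inner{\fun}{\synthframeel_k}\analframeel_k$. For the second line I would instead start from $\fun=\frameop^{-1}\frameop\fun=\frameop^{-1}\sum_{k\in\frameset}\inner{\fun}{\analframeel_k}\analframeel_k$ and pull the bounded operator $\frameop^{-1}$ inside the series to get $\fun=\sum_{k\in\frameset}\inner{\fun}{\analframeel_k}\frameop^{-1}\analframeel_k=\sum_{k\in\frameset}\inner{\fun}{\analframeel_k}\synthframeel_k$.

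It then remains to put these in operator form. From \fref{eq:analdualopdef} and self-adjointness of $\frameop^{-1}$ we have $\ad\analdualop=\ad{(\analop\frameop^{-1})}=\frameop^{-1}\ad\analop$, hence $\ad\analop\analdualop=\ad\analop\analop\frameop^{-1}=\frameop\frameop^{-1}=\iop_\hilspace$ and $\ad\analdualop\analop=\frameop^{-1}\ad\analop\analop=\frameop^{-1}\frameop=\iop_\hilspace$. Since $\ad\analop$ maps a sequence $\coeflt$ to $\sum_{k\in\frameset}\coef_k\analframeel_k$, and, by the same argument applied to the dual frame (which is a frame by \fref{thm:dualframe}), $\ad\analdualop$ maps $\coeflt$ to $\sum_{k\in\frameset}\coef_k\synthframeel_k$, evaluating $\ad\analop\analdualop\fun$ on $\analdualop\fun=\{\inner{\fun}{\synthframeel_k}\}_{k\in\frameset}$ and $\ad\analdualop\analop\fun$ on $\analop\fun=\{\inner{\fun}{\analframeel_k}\}_{k\in\frameset}$ reproduces exactly the two series above; so the operator statement and the series statements coincide.

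The only genuinely nontrivial point — and the one I would be careful about — is the interchange of $\frameop^{-1}$ with the infinite series (equivalently, that $\ad\analdualop$ really is the $\hilseqspace\to\hilspace$ synthesis map of the dual frame). This is licensed by \fref{thm:dualframe}: the canonical dual is a frame, so $\{\inner{\fun}{\analframeel_k}\}_{k\in\frameset}$ and $\{\inner{\fun}{\synthframeel_k}\}_{k\in\frameset}$ lie in $\hilseqspace$, the corresponding synthesis operators $\ad\analop$ and $\ad\analdualop$ are bounded, and the series converge in $\hilspace$; continuity of the bounded operator $\frameop^{-1}$ then permits the interchange. Beyond that, everything is formal manipulation with $\frameop=\ad\analop\analop$ and $\frameop\frameop^{-1}=\frameop^{-1}\frameop=\iop_\hilspace$.
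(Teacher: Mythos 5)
Your proposal is correct and follows essentially the same route as the paper's proof: both rest on writing $\synthframeel_k=\frameop^{-1}\analframeel_k$, moving the self-adjoint $\frameop^{-1}$ across the inner product, and invoking $\frameop\frameop^{-1}=\frameop^{-1}\frameop=\iop_\hilspace$. The only difference is cosmetic --- you run the chain starting from $\fun=\frameop\frameop^{-1}\fun$ rather than from the series, and you are somewhat more explicit than the paper about why the bounded operator $\frameop^{-1}$ may be interchanged with the infinite sum in the second identity, a point the paper dispatches with ``the proof is similar.''
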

\begin{proof}
	We have 
	\begin{align*}
		\ad\analop\analdualop\fun&=\sum_{k\in\frameset} \inner{\fun}{\synthframeel_k}\analframeel_k=\sum_{k\in\frameset} \inner{\fun}{\frameop^{-1}\analframeel_k}\analframeel_k\\
		&=\sum_{k\in\frameset} \inner{\frameop^{-1}\fun}{\analframeel_k}\analframeel_k=\frameop\frameop^{-1}\fun=\fun.
	\end{align*}
	This proves that $\ad\analop\analdualop=\iop_\hilspace$. The proof of $\ad\analdualop\analop=\iop_\hilspace$ is similar.
\end{proof}
Note that \fref{eq:signalexp2}  corresponds to the decomposition \fref{eq:framedecomp1}  we found in the finite-dimensional case.

It is now natural to ask whether reconstruction of $\fun$ from the coefficients $\inner{\fun}{\analframeel_k},\, k\in\frameset,$ according to~\fref{eq:signalexp2} is the only way of recovering $\fun$ from $\inner{\fun}{\analframeel_k},\, k\in\frameset$. Recall that we showed in the finite-dimensional case (see~\fref{sec:redsigexp})
that for each complete and  redundant set of vectors $\{\vanalframeel_1,\ldots,\vanalframeel_\framesize\}$, there are infinitely many dual sets $\{\vsynthframeel_1,\ldots,\vsynthframeel_\framesize\}$ that can be used to reconstruct a signal $\vsignal$  from the coefficients 
$\inner{\vsignal}{\vanalframeel_k},\, k=1,\ldots,\framesize,$
according to~\fref{eq:framedecomp1}.
These dual sets are obtained by identifying $\{\vsynthframeel_1,\ldots,\vsynthframeel_\framesize\}$ with the columns of $\matL$, where $\matL$ is a left-inverse of the analysis matrix~\manalop. In the infinite-dimensional case the question of finding all dual frames for a given frame boils down to finding, for a given analysis operator \analop, all linear operators \lianalop that satisfy
\begin{equation*}
	\lianalop\analop\fun=\fun
\end{equation*}
for all $\fun\in\hilspace$. In other words, we want to identify all left-inverses $\lianalop$ of the analysis operator $\analop$.
The answer to this question is the infinite-dimensional version of \fref{thm:leftinv} 
that we state here without proof.
\begin{thm}
\label{thm:leftinvop}
	Let $\opA:\hilspace\to\hilseqspace$ be a bounded linear  operator. Assume that $\ad\opA\opA:\hilspace\to\hilspace$ is invertible on $\hilspace$. Then, the operator $\pinv \opA: \hilseqspace\to\hilspace$ defined as $\pinv \opA\define (\ad \opA \opA)^{-1}\ad \opA$ is a left-inverse of $\opA$, i.e., $\pinv \opA\opA=\iop_{\hilspace}$, where $\iop_{\hilspace}$ is the identity operator on \hilspace.
	Moreover, the general solution $\opL$ of the equation $\opL\opA=\iop_{\hilspace}$ is given by
	\begin{equation*}
		\opL=\pinv \opA+\opM(\iop_{\hilseqspace}-\opA\pinv \opA)
	\end{equation*}
	where $\opM:\hilseqspace\to\hilspace$ is an arbitrary  bounded linear operator and $\iop_{\hilseqspace}$ is the identity operator on \hilseqspace.
\end{thm}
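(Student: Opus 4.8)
The plan is to transcribe the three-step argument of \fref{thm:leftinv} into operator language, paying attention only to (a) boundedness and (b) which ambient identity operator occurs where. First I would check that $\pinv\opA$ is a well-defined bounded operator. By hypothesis $\ad\opA\opA$ is invertible on $\hilspace$, which (exactly as for $\frameop^{-1}$ earlier, or via the bounded inverse theorem) means $(\ad\opA\opA)^{-1}$ is a bounded operator on $\hilspace$; since $\opA$ is bounded, so is its adjoint $\ad\opA:\hilseqspace\to\hilspace$, and hence the composition $\pinv\opA=(\ad\opA\opA)^{-1}\ad\opA:\hilseqspace\to\hilspace$ is bounded and linear. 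Then $\pinv\opA\opA=(\ad\opA\opA)^{-1}(\ad\opA\opA)=\iop_{\hilspace}$, which proves the first assertion and justifies calling $\pinv\opA$ the Moore-Penrose inverse of $\opA$.

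Second, I would show that every operator of the displayed form is a left-inverse. Given an arbitrary bounded linear $\opM:\hilseqspace\to\hilspace$, put $\opL=\pinv\opA+\opM(\iop_{\hilseqspace}-\opA\pinv\opA)$; this is a bounded linear operator $\hilseqspace\to\hilspace$, being assembled from bounded operators by composition and addition. Multiplying on the right by $\opA$ and expanding gives $\opL\opA=\pinv\opA\opA+\opM\opA-\opM\opA\pinv\opA\opA$, and substituting $\pinv\opA\opA=\iop_{\hilspace}$ (used twice) this reduces to $\iop_{\hilspace}+\opM\opA-\opM\opA=\iop_{\hilspace}$. So $\opL$ is indeed a left-inverse of $\opA$; this step is verbatim the finite-dimensional computation in the proof of \fref{thm:leftinv}, with $\identity_{\dimension}$ and $\identity_{\framesize}$ replaced by $\iop_{\hilspace}$ and $\iop_{\hilseqspace}$.

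Third, I would prove that there are no other bounded left-inverses. Let $\opL:\hilseqspace\to\hilspace$ be any bounded operator with $\opL\opA=\iop_{\hilspace}$. Multiplying this identity on the right by $\pinv\opA$ yields $\opL\opA\pinv\opA=\pinv\opA$. Adding $\opL$ to both sides and rearranging, $\opL=\pinv\opA+\opL-\opL\opA\pinv\opA=\pinv\opA+\opL(\iop_{\hilseqspace}-\opA\pinv\opA)$, so the choice $\opM=\opL$ — which is bounded by assumption — exhibits $\opL$ in the required form. Combined with the second step, this shows the formula parametrizes precisely the bounded left-inverses of $\opA$.

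There is essentially no obstacle here: the argument is purely formal and mirrors the proof of \fref{thm:leftinv} line for line. The only places that warrant a sentence are the appeal to the bounded inverse theorem, which keeps $(\ad\opA\opA)^{-1}$, and hence $\pinv\opA$ and $\opL$, inside the class of bounded operators, and the careful bookkeeping between $\iop_{\hilspace}$ (on the signal side) and $\iop_{\hilseqspace}$ (on the coefficient side), which in infinite dimensions are genuinely different objects on different spaces rather than both being the identity matrix. It is also worth remarking, as the hypothesis suggests, that invertibility of $\ad\opA\opA$ is exactly the infinite-dimensional analogue of the rank condition $\rank(\mat)=\dimension$: it is equivalent to $\opA$ being bounded below, i.e., to the existence of a lower frame bound when $\opA=\analop$.
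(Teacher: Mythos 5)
Your proof is correct; the paper states this theorem without proof, explicitly presenting it as the infinite-dimensional version of \fref{thm:leftinv}, and your argument is precisely that finite-dimensional proof transcribed into operator language with the appropriate care about boundedness and about which identity ($\iop_{\hilspace}$ versus $\iop_{\hilseqspace}$) appears where. Your closing remark that invertibility of $\ad\opA\opA$ plays the role of the rank condition is also accurate and consistent with the paper's use of the result for the analysis operator of a frame.
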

Applying this theorem to the operator $\analop$ we see that all left-inverses of $\analop$ can be written as 
\begin{equation}
	\label{eq:generalliopframes}
	\opL=\pinv \analop+\opM(\iop_{\hilseqspace}-\analop\pinv \analop)
\end{equation}
where $\opM:\hilseqspace\to\hilspace$ is an arbitrary  bounded linear operator and 
\begin{equation*}
	\pinv \analop=(\ad \analop \analop)^{-1}\ad \analop. 
\end{equation*}
Now, using \fref{eq:analdualopdef}, we obtain the following important identity: 
\begin{equation*}
	\pinv \analop=(\ad \analop \analop)^{-1}\ad \analop=\frameop^{-1}\ad \analop=\ad\analdualop.
\end{equation*}
This shows that reconstruction according to \fref{eq:signalexp2}, i.e., by applying the operator $\ad\analdualop$ to the coefficient sequence $\analop\fun=\{\inner{\fun}{\analframeel_k}\}_{k\in\frameset}$ is nothing but applying the infinite-dimensional analog of the Moore-Penrose inverse $\pinv \manalop=(\herm \manalop \manalop)^{-1}\herm \manalop$.
As already noted in the finite-dimensional case the existence of infinitely many left-inverses of the operator \analop provides us with freedom in designing dual frames. 

We close this discussion with a geometric interpretation of the parametrization~\fref{eq:generalliopframes}. First observe the following.
\begin{thm}
The operator
\begin{equation*}
\pop:\sqsumspace\to\rng(\analop) \subseteq \sqsumspace
\end{equation*} 
defined as
\begin{equation*}
\pop=\analop\frameop^{-1}\ad\analop
\end{equation*} 
satisfies the following properties:
\begin{enumerate}
	\item \pop is the identity  operator $\iop_{\sqsumspace}$ on~$\rng(\analop)$.
	\item \pop is the zero operator on $\ocomp{\rng(\analop)}$, where $\ocomp{\rng(\analop)}$ denotes the orthogonal complement of the space~$\rng(\analop)$.
\end{enumerate}
In other words, \pop is the orthogonal projection operator onto 
$\rng(\analop)=\{\coeflt\given\coeflt=\analop \fun, \fun\in\hilspace\}$, the range space of the operator \analop.
\end{thm}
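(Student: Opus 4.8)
The plan is to establish the two listed properties directly and then read off the conclusion from the projection theorem. A preliminary observation is needed: because $\{\analframeel_k\}_{k\in\frameset}$ is a frame, the lower bound in~\fref{eq:lowerfr} makes $\analop$ bounded below, and hence $\rng(\analop)$ is a \emph{closed} subspace of $\sqsumspace$. Indeed, if $\analop\fun_n\to\coeflt$ in $\sqsumspace$, then $\{\analop\fun_n\}$ is Cauchy, and $\frameA\vecnorm{\fun_n-\fun_m}^2\le\vecnorm{\analop\fun_n-\analop\fun_m}^2$ forces $\{\fun_n\}$ to be Cauchy in $\hilspace$; its limit $\fun$ satisfies $\analop\fun=\coeflt$ by continuity of $\analop$, so $\coeflt\in\rng(\analop)$. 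Consequently $\sqsumspace=\rng(\analop)\oplus\ocomp{\rng(\analop)}$, and it is enough to describe $\pop$ on each summand.

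For Property~1, let $\coeflt\in\rng(\analop)$ and write $\coeflt=\analop\fun$ for some $\fun\in\hilspace$. Using $\frameop=\ad\analop\analop$ and $\frameop^{-1}\frameop=\iop_\hilspace$,
\[
	\pop\coeflt=\analop\frameop^{-1}\ad\analop\analop\fun=\analop\frameop^{-1}\frameop\fun=\analop\fun=\coeflt,
\]
so $\pop$ acts as the identity on $\rng(\analop)$. In particular $\rng(\pop)=\rng(\analop)$, since the inclusion $\rng(\pop)\subseteq\rng(\analop)$ is immediate from the fact that $\pop$ ends with an application of $\analop$.

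For Property~2, let $\coeflt\in\ocomp{\rng(\analop)}$, i.e.\ $\inner{\analop\fun}{\coeflt}=0$ for every $\fun\in\hilspace$. By the defining identity of the adjoint, $\inner{\fun}{\ad\analop\coeflt}=\inner{\analop\fun}{\coeflt}=0$ for all $\fun\in\hilspace$; taking $\fun=\ad\analop\coeflt$ gives $\ad\analop\coeflt=0$, whence $\pop\coeflt=\analop\frameop^{-1}(\ad\analop\coeflt)=0$. Together with the orthogonal decomposition of $\sqsumspace$ recorded above, this shows that $\pop$ is the identity on $\rng(\analop)$ and zero on its orthogonal complement, i.e.\ $\pop$ is the orthogonal projection onto $\rng(\analop)$.

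The computations in Properties~1 and~2 are routine once the earlier facts ($\frameop=\ad\analop\analop$, invertibility and self-adjointness of $\frameop^{-1}$, the adjoint relation) are in hand; the only point requiring a little care is the closedness of $\rng(\analop)$, which is what legitimizes speaking of \emph{the} orthogonal projection onto it. An alternative, slightly slicker route avoids splitting into cases by checking directly that $\pop$ is a self-adjoint idempotent --- $\pop^2=\analop\frameop^{-1}\ad\analop\analop\frameop^{-1}\ad\analop=\analop\frameop^{-1}\frameop\frameop^{-1}\ad\analop=\pop$, and, using self-adjointness of $\frameop^{-1}$ together with $(\ad\analop)^{*}=\analop$, also $\ad\pop=\pop$ --- and then identifying $\rng(\pop)=\rng(\analop)$ as in Property~1; this still relies on $\rng(\analop)$ being closed.
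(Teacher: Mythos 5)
Your proof is correct and follows essentially the same route as the paper: the identity on $\rng(\analop)$ is verified by the same computation $\analop\frameop^{-1}\ad\analop\analop\fun=\analop\fun$, and the vanishing on $\ocomp{\rng(\analop)}$ rests on the same fact $\ocomp{\rng(\analop)}=\nullspace(\ad\analop)$, which you prove inline where the paper merely cites it. Your preliminary observation that the lower frame bound makes $\rng(\analop)$ closed (and hence that $\sqsumspace=\rng(\analop)\oplus\ocomp{\rng(\analop)}$, so that the two properties really do pin down $\pop$ as \emph{the} orthogonal projection) is a worthwhile point of rigor that the paper leaves implicit, and the self-adjoint-idempotent alternative you sketch is also valid.
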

\begin{proof}
\begin{enumerate}
	\item Take a sequence $\coeflt\in\rng(\analop)$ and note that it can be written as $\coeflt=\analop\fun$, where $\fun\in\hilspace$. Then, we have 
	\begin{equation*}
	\pop \coeflt=\analop\frameop^{-1}\ad\analop\analop\fun =\analop\frameop^{-1}\frameop\fun=\analop\iop_\hilspace\fun=\analop\fun=\coeflt.
	\end{equation*}
	This proves that \pop is the identity operator on $\rng(\analop)$.
	\item Next, take a sequence $\coeflt\in\ocomp{\rng(\analop)}$. As the orthogonal complement of the range space of an operator is the null space of its adjoint, 	 we have $\ad\analop \coeflt=0$ and therefore
	\begin{equation*}
	\pop \coeflt= \analop\frameop^{-1}\ad\analop \coeflt=0.
	\end{equation*}
	This proves that \pop is the zero operator on $\ocomp{\rng(\analop)}$.
\end{enumerate}
\end{proof} 
Now using that 
$\analop\pinv \analop=\analop\frameop^{-1}\ad\analop=\pop$ and $\pinv \analop=\frameop^{-1}\ad\analop=\frameop^{-1}\frameop\frameop^{-1}\ad\analop=\frameop^{-1}\ad\analop\analop\frameop^{-1}\ad\analop=\ad\analdualop\pop,$
we can rewrite \fref{eq:generalliopframes} as follows
\begin{equation}
	\label{eq:generalliopframes1}
	\opL=\ad\analdualop \pop +\opM(\iop_{\hilseqspace}-\pop).
\end{equation}
Next, we show that $(\iop_{\hilseqspace}-\pop):\hilseqspace\to\hilseqspace$ is the orthogonal projection onto $\ocomp{\rng(\analop)}$.
Indeed, we can directly verify the following: For every $\coeflt\in \ocomp{\rng(\analop)}$, we have $(\iop_{\hilseqspace}-\pop)\coeflt=\iop_{\hilseqspace}\coeflt-0=\coeflt$, i.e., $\iop_{\hilseqspace}-\pop$ is the identity operator on $\ocomp{\rng(\analop)}$; for every $\coeflt\in \ocomp{(\ocomp{\rng(\analop)})}=\rng(\analop)$, we have $(\iop_{\hilseqspace}-\pop)\coeflt=\iop_{\hilseqspace}\coeflt-\coeflt=0$, i.e., $\iop_{\hilseqspace}-\pop$ is the zero operator on $\ocomp{(\ocomp{\rng(\analop)})}$.  

We are now ready to re-interpret~\fref{eq:generalliopframes1} as follows. Every left-inverse~$\opL$ of $\analop$ acts as $\ad\analdualop$ (the synthesis operator of the canonical dual frame) on the range space of the analysis operator $\analop$, and can act in an arbitrary linear and bounded fashion on the orthogonal complement of the range space of the analysis operator $\analop$.

\subsection{Tight Frames}
The frames considered in Examples~\ref{ex:tightframeido} and~\ref{ex:tightframeidt} above have an interesting property: In both cases the tightest possible frame bounds \frameA and \frameB are equal. Frames with this property  are called tight frames.
\begin{dfn}
\label{dfn:tightframe}
	A frame $\setframemm$ with tightest possible frame bounds $\frameA=\frameB$ is called a tight frame.
\end{dfn}
Tight frames are of significant practical interest because of the following central fact.
\begin{thm}
\label{thm:tightframe}
	Let $\setframemm$ be a frame for the Hilbert space \hilspace. The frame  $\setframemm$ is tight with frame bound $\frameA$ if and only if its corresponding frame  operator satisfies $\frameop=\frameA\iop_\hilspace$,
	or equivalently, if
	\begin{equation}
		\label{eq:rectight}
		\fun=\frac{1}{\frameA} \sum_{k\in\frameset}\inner{\fun}{\analframeel_k}\analframeel_k
	\end{equation}
	for all $\fun\in\hilspace$.
\end{thm}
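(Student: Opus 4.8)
The plan is to route everything through two facts already in hand: the defining formula $\frameop\fun=\sum_{k\in\frameset}\inner{\fun}{\analframeel_k}\analframeel_k$ from \fref{dfn:frameop}, and the identity $\inner{\frameop\fun}{\fun}=\sum_{k\in\frameset}\abs{\inner{\fun}{\analframeel_k}}^2$ from \fref{eq:frameopanalopconnection}. First I would clear away the second stated equivalence, which is essentially a tautology: since $\sum_{k\in\frameset}\inner{\fun}{\analframeel_k}\analframeel_k$ is literally $\frameop\fun$, the reconstruction formula \fref{eq:rectight} holds for every $\fun\in\hilspace$ if and only if $\frameop\fun=\frameA\fun$ for every $\fun$, i.e.\ $\frameop=\frameA\iop_\hilspace$. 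Hence it remains only to show that $\setframemm$ is tight with frame bound $\frameA$ if and only if $\frameop=\frameA\iop_\hilspace$.

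For the ``if'' direction I would assume $\frameop=\frameA\iop_\hilspace$ and substitute into \fref{eq:frameopanalopconnection}: for every $\fun\in\hilspace$ this gives $\sum_{k\in\frameset}\abs{\inner{\fun}{\analframeel_k}}^2=\inner{\frameop\fun}{\fun}=\frameA\vecnorm{\fun}^2$, so both inequalities in the frame condition \fref{eq:framedef} are met with equality and with the common value $\frameA$; hence the frame is tight with frame bound $\frameA$.

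For the ``only if'' direction I would use the frame condition in the operator form \fref{eq:framecondS}. Tightness with frame bound $\frameA$ means $\frameA$ is simultaneously a valid lower and a valid upper bound, so $\frameA\vecnorm{\fun}^2\le\inner{\frameop\fun}{\fun}\le\frameA\vecnorm{\fun}^2$ for all $\fun\in\hilspace$, which forces $\inner{\frameop\fun}{\fun}=\frameA\vecnorm{\fun}^2=\inner{(\frameA\iop_\hilspace)\fun}{\fun}$, i.e.\ $\inner{(\frameop-\frameA\iop_\hilspace)\fun}{\fun}=0$ for all $\fun$. The one step that is not pure bookkeeping is to conclude from this that $\frameop-\frameA\iop_\hilspace=0$: the operator $\frameop-\frameA\iop_\hilspace$ is bounded and self-adjoint (since $\frameop$ is self-adjoint, as shown earlier, and $\frameA$ is real), and over a complex Hilbert space any bounded operator $\opA$ with $\inner{\opA\fun}{\fun}=0$ for all $\fun$ must vanish, because the polarization identity $4\inner{\opA\fun}{\altfun}=\sum_{j=0}^{3}\iu^{j}\inner{\opA(\fun+\iu^{j}\altfun)}{\fun+\iu^{j}\altfun}$ then makes every $\inner{\opA\fun}{\altfun}$ zero. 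Thus $\frameop=\frameA\iop_\hilspace$, closing the loop.

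I expect this last observation — that a uniformly vanishing quadratic form forces the operator itself to be zero — to be the only place where a reader might want a line of justification; everything else is simply substituting \fref{eq:frameopanalopconnection} and \fref{eq:framecondS} into \fref{eq:framedef} and reading off the constants. (If one prefers to avoid polarization, the same conclusion follows from the earlier identification \fref{eq:framaABeval} of the tightest frame bounds with the extreme spectral values of $\frameop$: tightness gives $\eval_{\min}=\eval_{\max}=\frameA$, and for a bounded self-adjoint operator this means $\frameop-\frameA\iop_\hilspace$ has spectral radius, hence norm, equal to zero.)
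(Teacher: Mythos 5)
Your proof is correct and follows essentially the same route as the paper's: the same tautological identification of \fref{eq:rectight} with $\frameop=\frameA\iop_\hilspace$, the same substitution into \fref{eq:frameopanalopconnection} for one direction, and the same reduction to $\inner{(\frameop-\frameA\iop_\hilspace)\fun}{\fun}=0$ for the other. The only difference is that you explicitly justify the final step (via polarization, or via self-adjointness and the spectral characterization \fref{eq:framaABeval}), which the paper asserts without comment — a worthwhile addition, but not a different argument.
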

\begin{proof}
	First observe that $\frameop=\frameA\iop_\hilspace$ is equivalent to $\frameop\fun = \frameA\iop_\hilspace\fun= \frameA\fun$ for all $\fun\in\hilspace$, which, in turn, is equivalent to~\fref{eq:rectight} by definition of the frame operator.
	
	To prove that tightness of $\setframemm$ implies $\frameop=\frameA\iop_\hilspace$, note that by~\fref{dfn:tightframe}, using \fref{eq:framecondS} we can write
	\begin{equation*}
		\inner{\frameop\fun}{\fun} = \frameA\inner{\fun}{\fun}, \ \text{for all}\ \fun\in\hilspace.
	\end{equation*}
	Therefore 
	\begin{equation*}
		\inner{(\frameop-\frameA\iop_\hilspace)\fun}{\fun} = 0,  \ \text{for all}\ \fun\in\hilspace,
	\end{equation*}
	which implies $\frameop=\frameA\iop_\hilspace$.
	
	To prove that $\frameop=\frameA\iop_\hilspace$ implies tightness of $\setframemm$, we take the inner product with \fun on both sides of~\fref{eq:rectight} to obtain
	\begin{equation*}
		\inner{\fun}{\fun}=\frac{1}{\frameA} \sum_{k\in\frameset}\inner{\fun}{\analframeel_k}\inner{\analframeel_k}{\fun}.
	\end{equation*}
	This is equivalent to 
	\begin{equation*}
		\frameA	\vecnorm{\fun}^2= \sum_{k\in\frameset}\abs{\inner{\fun}{\analframeel_k}}^2,
	\end{equation*}
	which shows that $\setframemm$ is a tight frame for \hilspace with frame bound equal to \frameA.
\end{proof}

The practical importance of tight frames lies in the fact that they make the computation of the canonical dual frame, which in the general case requires inversion of an operator and application of this inverse to  all frame elements, particularly simple. Specifically, we have: 
\begin{equation*}
	\synthframeel_k=\frameop^{-1}\analframeel_k=\frac{1}{\frameA}\iop_\hilspace\analframeel_k=\frac{1}{\frameA}\analframeel_k.
\end{equation*}

A well-known example of a tight frame for $\reals^2$ is the following:
\begin{example}[The Mercedes-Benz frame~\cite{kovacevic08}]
	\label{ex:MBframetight}
	The Mercedes-Benz frame (see \Figref{fig:MBframe}) is given by the  following three vectors in $\reals^2$:
	\begin{equation}
		\label{eq:mbframevectors}
		\vanalframeel_1=\begin{bmatrix} 0\\1\end{bmatrix}, \quad \vanalframeel_2=\begin{bmatrix} -\sqrt{3}/2 \\ -1/2 \end{bmatrix}, \quad \vanalframeel_3=\begin{bmatrix} \sqrt{3}/2\\ -1/2\end{bmatrix}.
	\end{equation}
	\begin{figure}[t]
		\centering
		\includegraphics[]{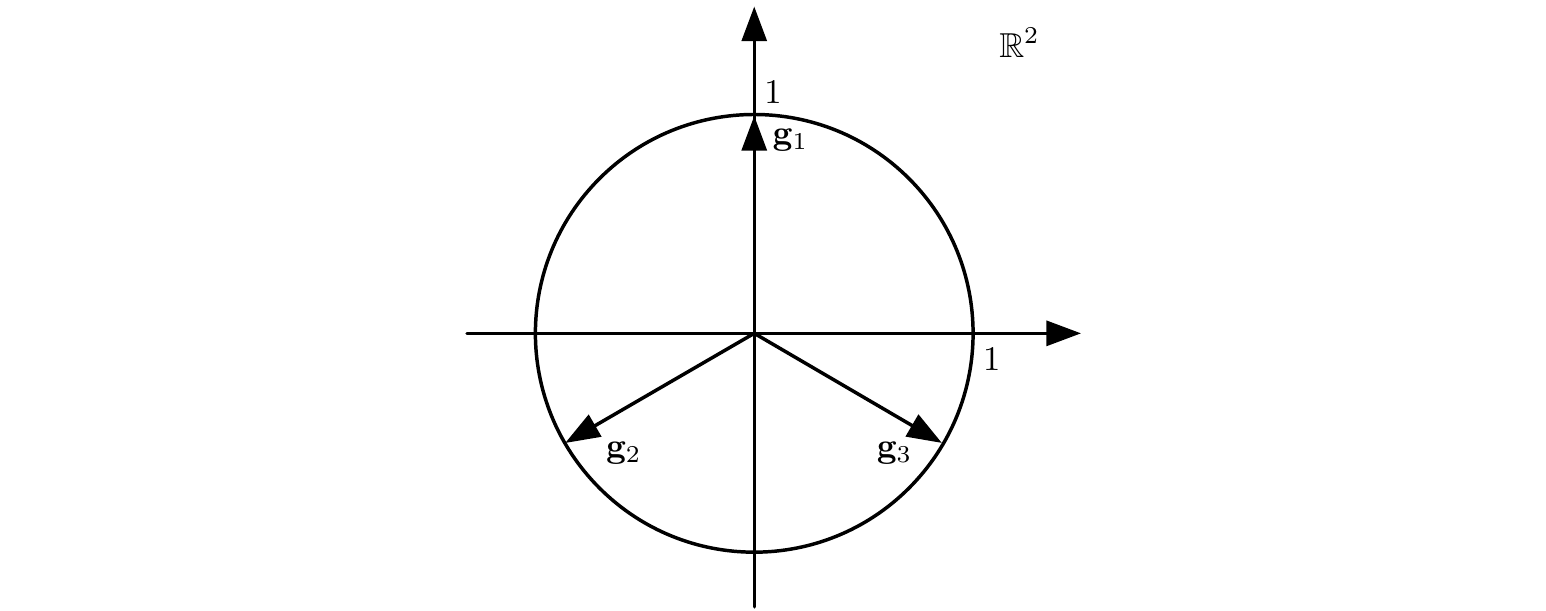}
		\caption{The Mercedes-Benz frame.}
		\label{fig:MBframe}
	\end{figure}
	To see that this frame is indeed tight, note that its analysis operator $\analop$ is given by the matrix
	\begin{equation*}
	\manalop=\begin{bmatrix} 0&1\\-\sqrt{3}/2 & -1/2 \\ \sqrt{3}/2 & -1/2 \end{bmatrix}.
	\end{equation*}
	The adjoint $\ad\analop$ of the analysis operator is given by the matrix
	\begin{equation*}
	\herm\manalop=\begin{bmatrix} 0 & -\sqrt{3}/2 & \sqrt{3}/2\\1 & -1/2 &  -1/2  \end{bmatrix}.	
	\end{equation*}
	Therefore, the frame operator $\frameop$ is represented by the matrix
	\begin{equation*}
		\mframeop=\herm\manalop\manalop=\begin{bmatrix} 0 & -\sqrt{3}/2 & \sqrt{3}/2\\ 1 & -1/2 &  -1/2  \end{bmatrix}\begin{bmatrix} 0&1\\-\sqrt{3}/2 & -1/2 \\ \sqrt{3}/2 & -1/2 \end{bmatrix} =\frac{3}{2} \begin{bmatrix}
			1 & 0\\0 & 1
		\end{bmatrix}=\frac{3}{2} \identity_2,
	\end{equation*}
and hence $\frameop=\frameA\iop_{\reals^2}$ with  $\frameA=3/2$, which implies, by~\fref{thm:tightframe}, that $\{\vanalframeel_1,\vanalframeel_2,\vanalframeel_3\}$ is a tight frame (for $\reals^2$).
\end{example}

The design of tight frames is challenging in general. It is hence interesting to devise simple systematic methods for obtaining tight frames. 
The following theorem shows how we can obtain a tight frame from a given general frame.
\begin{thm}
Let $\setframemm$ be a frame for the Hilbert space \hilspace with frame operator \frameop. Denote the positive definite square root of $\frameop^{-1}$ by $\frameop^{-1/2}$. Then $\{\frameop^{-1/2}\analframeel_k\}_{k\in\frameset}$  is a tight frame for \hilspace with frame bound $\frameA=1$, i.e.,
\begin{equation*}
\fun=\sum_{k\in\frameset} \inner{\fun}{\frameop^{-1/2}\analframeel_k}\frameop^{-1/2}\analframeel_k,\quad\text{for all } \fun\in\hilspace.
\end{equation*}
\end{thm}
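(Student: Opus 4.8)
The plan is to establish the claim by computing directly the quantity $\sum_{k\in\frameset}\abs{\inner{\fun}{\frameop^{-1/2}\analframeel_k}}^2$ and showing it equals $\vecnorm{\fun}^2$ for every $\fun\in\hilspace$. This single identity does two things at once: it verifies the frame inequality \eqref{eq:framedef} for the set $\{\frameop^{-1/2}\analframeel_k\}_{k\in\frameset}$ with $\frameA=\frameB=1$, so that this set is a (tight) frame; and, since its frame operator then satisfies $\inner{\cdot\,}{\cdot}$-form equal to $\vecnorm{\cdot}^2$, \fref{thm:tightframe} immediately yields the stated reconstruction formula $\fun=\sum_{k\in\frameset}\inner{\fun}{\frameop^{-1/2}\analframeel_k}\frameop^{-1/2}\analframeel_k$. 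Before the computation I would record the facts about $\frameop^{-1/2}$ that the argument rests on. By \fref{thm:smone} together with the discussion preceding it, $\frameop^{-1}$ is a bounded, self-adjoint, positive definite operator on $\hilspace$; hence \fref{lem:possqrt} applies to $\frameop^{-1}$ and produces a unique self-adjoint positive definite (in particular bounded) square root $\frameop^{-1/2}$ with $\frameop^{-1/2}\frameop^{-1/2}=\frameop^{-1}$, which moreover commutes with $\frameop^{-1}$. Since $\frameop^{-1}$ is invertible with inverse $\frameop$, commutation with $\frameop^{-1}$ forces commutation with $\frameop$: from $\frameop^{-1/2}\frameop^{-1}=\frameop^{-1}\frameop^{-1/2}$ one multiplies by $\frameop$ on the left and on the right to obtain $\frameop\,\frameop^{-1/2}=\frameop^{-1/2}\frameop$.

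The main computation is then short. Fix $\fun\in\hilspace$. Self-adjointness of $\frameop^{-1/2}$ gives $\inner{\fun}{\frameop^{-1/2}\analframeel_k}=\inner{\frameop^{-1/2}\fun}{\analframeel_k}$ for every $k\in\frameset$, so applying \eqref{eq:frameopanalopconnection} to the vector $\frameop^{-1/2}\fun\in\hilspace$ yields
\begin{equation*}
\sum_{k\in\frameset}\abs{\inner{\fun}{\frameop^{-1/2}\analframeel_k}}^2=\sum_{k\in\frameset}\abs{\inner{\frameop^{-1/2}\fun}{\analframeel_k}}^2=\inner{\frameop\,\frameop^{-1/2}\fun}{\frameop^{-1/2}\fun}.
\end{equation*}
Using self-adjointness of $\frameop^{-1/2}$ once more to move the outer $\frameop^{-1/2}$ across the inner product, and then the commutation relation $\frameop\,\frameop^{-1/2}=\frameop^{-1/2}\frameop$ and $\frameop^{-1/2}\frameop^{-1/2}=\frameop^{-1}$, the right-hand side becomes
\begin{equation*}
\inner{\frameop^{-1/2}\frameop\,\frameop^{-1/2}\fun}{\fun}=\inner{\frameop\,\frameop^{-1/2}\frameop^{-1/2}\fun}{\fun}=\inner{\frameop\frameop^{-1}\fun}{\fun}=\inner{\fun}{\fun}=\vecnorm{\fun}^2.
\end{equation*}
This is precisely \eqref{eq:framedef} with frame bounds $\frameA=\frameB=1$, hence $\{\frameop^{-1/2}\analframeel_k\}_{k\in\frameset}$ is a tight frame with frame bound $1$, and \fref{thm:tightframe} supplies the displayed expansion.

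I do not expect any serious obstacle here: the computation is routine once the algebra of $\frameop^{-1/2}$ is set up. The one place that genuinely needs care is the bookkeeping around the square root — namely that $\frameop^{-1/2}$ is self-adjoint and bounded, and that it commutes with $\frameop$. The self-adjointness and the existence/uniqueness of $\frameop^{-1/2}$ come straight from \fref{lem:possqrt} applied to $\frameop^{-1}$; the commutation with $\frameop$ is the only small deduction, obtained from the commutation of $\frameop^{-1/2}$ with $\frameop^{-1}$ (also given by \fref{lem:possqrt}) by conjugating with $\frameop=(\frameop^{-1})^{-1}$. After that the telescoping $\frameop^{-1/2}\frameop\,\frameop^{-1/2}=\frameop\frameop^{-1}=\iop_\hilspace$ is immediate and the rest follows.
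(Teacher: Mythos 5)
Your proposal is correct and rests on essentially the same core as the paper's proof: \fref{lem:possqrt} applied to $\frameop^{-1}$ to obtain a self-adjoint positive definite $\frameop^{-1/2}$, the commutation $\frameop\frameop^{-1/2}=\frameop^{-1/2}\frameop$ derived in exactly the same way, and the resulting identity $\frameop^{-1/2}\frameop\frameop^{-1/2}=\iop_\hilspace$. The only difference is presentational: the paper applies that operator identity to $\fun$ and expands the middle $\frameop$ to obtain the reconstruction formula directly, whereas you take inner products to verify $\sum_{k\in\frameset}\abs{\inner{\fun}{\frameop^{-1/2}\analframeel_k}}^2=\vecnorm{\fun}^2$ (tightness with $\frameA=\frameB=1$) and then invoke \fref{thm:tightframe} for the expansion --- the two equivalent halves of that theorem entered from opposite ends.
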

\begin{proof}
	Since $\frameop^{-1}$ is self-adjoint and positive definite by~\fref{thm:smone}, it has, by~\fref{lem:possqrt}, a unique self-adjoint positive definite square root $\frameop^{-1/2}$ that commutes with $\frameop^{-1}$. Moreover $\frameop^{-1/2}$ also commutes with $\frameop$, which can be seen as follows: 
	\begin{align*}
	\frameop^{-1/2}\frameop^{-1}=\frameop^{-1}\frameop^{-1/2}\\
	\frameop\frameop^{-1/2}\frameop^{-1}=\frameop^{-1/2}\\
	\frameop\frameop^{-1/2}=\frameop^{-1/2}\frameop.
	\end{align*}
The proof is then effected by noting the following:
	\begin{align*}
	\fun&=\frameop^{-1}\frameop\fun=\frameop^{-1/2}\frameop^{-1/2}\frameop\fun\\
	&=\frameop^{-1/2}\frameop\frameop^{-1/2}\fun\\
	&=\sum_{k\in\frameset}\inner{\frameop^{-1/2}\fun}{\analframeel_k}\frameop^{-1/2}\analframeel_k\\
	&=\sum_{k\in\frameset}\inner{\fun}{\frameop^{-1/2}\analframeel_k}\frameop^{-1/2}\analframeel_k.
	\end{align*}
\end{proof}

It is evident that every \onbac is a tight frame with $\frameA=1$. Note, however, that conversely a tight frame (even with $\frameA=1$) need not be an orthonormal or orthogonal basis, as can be seen from \fref{ex:tightframeidt}. However, as the next theorem shows, a tight frame with~$\frameA=1$ and $\vecnorm{\analframeel_k}=1,$ for all~$k\in\frameset,$ is necessarily an \onbac.

\begin{thm} 
	\label{thm:tightframebasis}
A tight frame \setframemm for the Hilbert space \hilspace with $\frameA=1$ and $\vecnorm{\analframeel_k}=1$, for all~$k\in\frameset,$ is an~\onbac for~\hilspace.
\end{thm}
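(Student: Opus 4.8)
The plan is to first establish pairwise orthogonality of the $\analframeel_k$ and then observe that completeness comes for free. Since the frame is tight with $\frameA=\frameB=1$, Definition~\ref{dfn:tightframe} together with \fref{eq:framecondS} (equivalently, \fref{thm:tightframe}) yields $\frameop=\iop_\hilspace$, and in particular the Parseval-type identity
\[
	\vecnorm{\fun}^2=\sum_{k\in\frameset}\abs{\inner{\fun}{\analframeel_k}}^2
\]
holds for every $\fun\in\hilspace$.

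The key step is to apply this identity to $\fun=\analframeel_j$ for a fixed $j\in\frameset$. Because $\vecnorm{\analframeel_j}=1$, the left-hand side equals $1$, while the $k=j$ summand on the right equals $\abs{\inner{\analframeel_j}{\analframeel_j}}^2=\vecnorm{\analframeel_j}^4=1$. Hence $\sum_{k\ne j}\abs{\inner{\analframeel_j}{\analframeel_k}}^2=0$, which forces $\inner{\analframeel_j}{\analframeel_k}=0$ for all $k\ne j$. Combined with $\vecnorm{\analframeel_k}=1$ for all $k$, this shows that $\setframemm$ is an orthonormal system in $\hilspace$.

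Finally I would invoke completeness. As noted in the excerpt immediately after Definition~\ref{dfn:completeness}, the existence of a positive lower frame bound (here $\frameA=1>0$) already guarantees that $\setframemm$ is complete for $\hilspace$. An orthonormal system that is complete is, by definition, an orthonormal basis, which completes the proof.

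I do not expect a genuine obstacle here; the only point that requires care is not to conflate ``spanning'' with ``complete'' in the (possibly infinite-dimensional) setting --- the relevant notion is the one of Definition~\ref{dfn:completeness}, and it is supplied automatically by the lower frame bound. An equivalent alternative would be to start from the reconstruction formula $\analframeel_j=\sum_{k\in\frameset}\inner{\analframeel_j}{\analframeel_k}\analframeel_k$ provided by \fref{thm:tightframe}, split off the $k=j$ term to get $\sum_{k\ne j}\inner{\analframeel_j}{\analframeel_k}\analframeel_k=0$, and then take the inner product with $\analframeel_j$; but the Parseval-identity route above is the shortest.
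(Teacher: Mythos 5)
Your proof is correct and is essentially the paper's own argument: the paper evaluates $\inner{\frameop\analframeel_k}{\analframeel_k}$ in two ways, which is exactly your application of the Parseval identity $\vecnorm{\fun}^2=\sum_{j\in\frameset}\abs{\inner{\fun}{\analframeel_j}}^2$ to $\fun=\analframeel_k$, yielding $\sum_{j\ne k}\abs{\inner{\analframeel_k}{\analframeel_j}}^2=0$ and hence orthogonality. Your explicit remark that completeness follows from the positive lower frame bound is a point the paper leaves implicit, and is a welcome addition.
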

\begin{proof}
	Combining 
	\begin{equation*}
		\inner{\frameop\analframeel_k}{\analframeel_k}=\frameA\vecnorm{\analframeel_k}^2=\vecnorm{\analframeel_k}^2
	\end{equation*}
	with
	\begin{equation*}
		\inner{\frameop\analframeel_k}{\analframeel_k}=\sum_{j\in\frameset} \abs{\inner{\analframeel_k}{\analframeel_j}}^2=\vecnorm{\analframeel_k}^4+\sum_{j\ne k}\abs{\inner{\analframeel_k}{\analframeel_j}}^2
	\end{equation*}
	we obtain
	\begin{equation*}
		\vecnorm{\analframeel_k}^4+\sum_{j\ne k}\abs{\inner{\analframeel_k}{\analframeel_j}}^2=\vecnorm{\analframeel_k}^2.
	\end{equation*}
	Since $\vecnorm{\analframeel_k}^2=1,$ for all $k\in\frameset,$ it follows that $\sum_{j\ne k}\abs{\inner{\analframeel_k}{\analframeel_j}}^2=0$, for all $k\in\frameset$. This implies that the elements of $\{\analframeel_j\}_{j\in\frameset}$ are necessarily orthogonal to each other. 
\end{proof}

There is an elegant result that tells us that every tight frame with  frame bound  $\frameA=1$ can be realized as an orthogonal projection of an \onbac from a space with larger dimension. This result is known as Naimark's theorem. Here we state the finite-dimensional version of this theorem, for the infinite-dimensional version see\cite{han00}.
\begin{thm}[Naimark, {\cite[Prop. 1.1]{han00}}]
\label{thm:naimark}
	Let $\framesize>\dimension$. Suppose that the set $\{\vanalframeel_1,\ldots,\vanalframeel_\framesize\}$, $\vanalframeel_k\in\hilspace,\, k=1,\ldots,\framesize$, is a tight frame for an $\dimension$-dimensional Hilbert space \hilspace  with frame bound $\frameA=1$. Then, there exists an $\framesize$-dimensional Hilbert space $\setK\supset\hilspace$ and an \onbac $\{\vbasisel_1,\ldots,\vbasisel_\framesize\}$ for $\setK$ 
such that $\opP\vbasisel_k=\vanalframeel_k$, $k=1,\ldots,\framesize$, where $\opP:\setK\to\setK$ is the orthogonal projection onto $\hilspace$.
\end{thm}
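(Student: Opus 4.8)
The plan is to realize the given Parseval ($\frameA=1$) frame as the image of an orthonormal basis under an orthogonal projection, exploiting the fact that the analysis operator of such a frame is an isometry. Write $\analop\colon\hilspace\to\complexset^\framesize$ for the analysis operator $\analop\fun=\{\inner{\fun}{\vanalframeel_k}\}_{k=1}^\framesize$ (the sequence space $\hilseqspace$ of the general theory reduces to $\complexset^\framesize$ here, since the frame is finite). By \fref{thm:tightframe}, tightness with $\frameA=1$ is equivalent to $\frameop=\ad\analop\analop=\iop_\hilspace$, and hence $\vecnorm{\analop\fun}^2=\inner{\ad\analop\analop\fun}{\fun}=\vecnorm{\fun}^2$ for every $\fun\in\hilspace$; that is, $\analop$ is a linear isometry. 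In particular $\analop$ is injective and maps $\hilspace$ unitarily onto the $\dimension$-dimensional subspace $\rng(\analop)\subseteq\complexset^\framesize$. I would then set $\setK\define\complexset^\framesize$, an $\framesize$-dimensional Hilbert space, and use the isometry $\analop$ to identify $\hilspace$ with $\rng(\analop)$, so that $\hilspace$ is regarded as a subspace of $\setK$ and each $\vanalframeel_k$ is regarded as the element $\analop\vanalframeel_k\in\setK$; because $\framesize>\dimension$ this inclusion $\hilspace\subsetneq\setK$ is proper.

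Next I would take $\{\vbasisel_k\}_{k=1}^\framesize$ to be the standard orthonormal basis of $\setK=\complexset^\framesize$, let $\opP\colon\setK\to\setK$ be the orthogonal projection onto $\rng(\analop)=\hilspace$, and verify that $\opP\vbasisel_k=\analop\vanalframeel_k$ (i.e.\ $\opP\vbasisel_k=\vanalframeel_k$ after the identification). Since $\analop\vanalframeel_k\in\rng(\analop)$, it is enough to check that $\vbasisel_k-\analop\vanalframeel_k$ is orthogonal to $\rng(\analop)$, i.e.\ that $\inner{\vbasisel_k}{\analop\fun}=\inner{\analop\vanalframeel_k}{\analop\fun}$ for all $\fun\in\hilspace$. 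The left-hand side is the complex conjugate of the $k$-th coordinate of $\analop\fun$, namely $\overline{\inner{\fun}{\vanalframeel_k}}=\inner{\vanalframeel_k}{\fun}$, while the right-hand side equals $\inner{\ad\analop\analop\,\vanalframeel_k}{\fun}=\inner{\vanalframeel_k}{\fun}$ because $\ad\analop\analop=\iop_\hilspace$. This proves $\opP\vbasisel_k=\vanalframeel_k$ and completes the argument.

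There is no genuine obstacle here; the two points that require care are (i) the set-theoretic reading ``$\setK\supset\hilspace$'', which is only meaningful up to the unitary identification furnished by $\analop$ and should be stated explicitly, and (ii) keeping the complex conjugation straight in the inner-product computation. Equivalently — and perhaps more concretely for the reader — the same proof can be phrased with matrices: $\ad\analop\analop=\iop_\hilspace$ says precisely that the columns of the $\framesize\times\dimension$ analysis matrix $\manalop$ from \fref{eq:manalopdfn} are orthonormal in $\complexset^\framesize$; completing this orthonormal family to an orthonormal basis of $\complexset^\framesize$ (Gram--Schmidt, which produces genuinely new vectors precisely because $\framesize>\dimension$) yields a unitary $\matU\in\complexset^{\framesize\times\framesize}$ whose first $\dimension$ columns are those of $\manalop$, and the conjugated rows of $\matU$ then serve as the $\vbasisel_k$ with $\opP$ the coordinate projection onto the first $\dimension$ coordinates. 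Either route isolates the same single nontrivial ingredient: the extension of an orthonormal set to an orthonormal basis of the ambient space.
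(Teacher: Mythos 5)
The paper deliberately omits a proof of \fref{thm:naimark} (``We omit the proof and illustrate the theorem by an example instead''), offering only the worked illustration in \fref{ex:naimark}, so there is no in-paper argument to compare yours against; your proposal supplies the missing proof, and it is correct and complete. The key observation --- that $\frameA=1$ tightness is, by \fref{thm:tightframe}, equivalent to $\frameop=\ad\analop\analop=\iop_\hilspace$, so that the analysis operator $\analop$ is an isometry of $\hilspace$ onto the $\dimension$-dimensional subspace $\rng(\analop)\subseteq\complexset^\framesize$ --- is exactly the standard route to Naimark's dilation in finite dimensions, and your verification that $\vbasisel_k-\analop\vanalframeel_k\perp\rng(\analop)$ via $\inner{\vbasisel_k}{\analop\fun}=\inner{\vanalframeel_k}{\fun}=\inner{\ad\analop\analop\vanalframeel_k}{\fun}$ is clean and handles the conjugation correctly. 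Your explicit caveat that ``$\setK\supset\hilspace$'' only holds up to the unitary identification furnished by $\analop$ is well taken; the theorem as stated glosses over this, and indeed \fref{ex:naimark} performs precisely such an identification when it embeds the two-dimensional frame vectors into $\reals^3$ with third coordinate zero. The matrix reformulation you give --- $\herm\manalop\manalop=\identity_\dimension$ says the $\dimension$ columns of $\manalop$ are orthonormal in $\complexset^\framesize$, extend them to a unitary $\matU$, and read off the $\vbasisel_k$ as the columns of $\herm\matU$ with $\opP$ the coordinate projection --- matches the structure of \fref{ex:naimark} (there the three vectors $\vbasisel_k$ are visibly the conjugated rows of the unitary completion of the rescaled Mercedes--Benz analysis matrix) and correctly isolates the one nontrivial ingredient, namely completion of an orthonormal set to an \onbac of the ambient space.
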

 We omit the proof and illustrate the theorem by an example instead.
\begin{example}
	\label{ex:naimark}
	Consider the Hilbert space $\setK=\reals^3$,  and assume that $\hilspace\subset\setK$ is the plane spanned by the vectors~$\tp{[1\ 0\ 0]}$ and~$\tp{[0\ 1\ 0]}$, i.e.,
	\begin{equation*}
		\hilspace=\spn\left\{\tp{[1\ 0\ 0]}, \tp{[0\ 1\ 0]} \right\}.
	\end{equation*} 
We can construct a tight frame for $\hilspace$ with three elements and frame bound $\frameA=1$ if we rescale the Mercedes-Benz frame from~\fref{ex:MBframetight}. Specifically, consider the vectors $\vanalframeel_k,\, k=1,2,3,$ defined in \fref{eq:mbframevectors} and let $\vanalframeel'_k\define\sqrt{2/3}\,\vanalframeel_k,\, k=1,2,3$. In the following, we think about the two-dimensional vectors $\vanalframeel'_k$ as being embedded into the three-dimensional space \setK with the third coordinate (in the standard basis of \setK) being equal to zero. Clearly, $\{\vanalframeel'_k\}_{k=1}^3$ is a tight frame for $\hilspace$  with  frame bound $\frameA=1$. 
Now consider the following three vectors in $\setK$:
\begin{equation*}
		\vbasisel_1=\begin{bmatrix} 0\\\sqrt{2/3}\\-1/\sqrt{3}\end{bmatrix}, \quad \vbasisel_2=\begin{bmatrix} -1/\sqrt{2}\\-1/\sqrt{6} \\-1/\sqrt{3} \end{bmatrix}, \quad \vbasisel_3=\begin{bmatrix} 1/\sqrt{2}\\-1/\sqrt{6} \\-1/\sqrt{3} \end{bmatrix}.
\end{equation*}
Direct calculation reveals that $\{\vbasisel_k\}_{k=1}^3$ is an \onbac for $\setK$.
Observe that the frame vectors $\vanalframeel'_k,\, k=1,2,3,$ can be obtained from the \onbac vectors $\vbasisel_k,\, k=1,2,3,$ by applying the   orthogonal projection from $\setK$ onto $\hilspace$:
\begin{equation*}
	\matP\define\begin{bmatrix}
		1& 0& 0\\
		0& 1& 0\\
		0& 0& 0
	\end{bmatrix},
\end{equation*}
according to $\vanalframeel'_k=\matP\vbasisel_k,\ k=1,2,3$. This illustrates Naimark's  theorem. 
\end{example}

\subsection{Exact Frames and Biorthonormality}
\label{sec:exactframes}
In~\fref{sec:genbasis} we studied expansions of signals in $\complexset^\dimension$ into (not necessarily orthogonal) bases.  The main results we established in this context can  be summarized as follows:
\begin{enumerate}
	\item The number of vectors in a basis is always equal to the dimension of the Hilbert space under consideration. Every set of vectors that spans $\complexset^\dimension$ and has more than $M$ vectors is necessarily redundant, i.e., the vectors in this set are linearly dependent. Removal of an arbitrary vector from a basis for $\complexset^\dimension$ leaves a set that no longer spans $\complexset^\dimension$. 
	\item
	For a given basis $\{\vbasisel_k\}_{k=1}^\dimension$ every signal $\vsignal\in\complexset^\dimension$ has a \emph{unique} representation according to 
	\begin{equation}
	\vsignal = \sum_{k=1}^\dimension \inner{\vsignal}{\vbasisel_k}\vsynthbasisel_k.
	\label{eq:vsexpansion}
	\end{equation}
	The basis $\{\vbasisel_k\}_{k=1}^\dimension$ and its dual basis $\{\vsynthbasisel_k\}_{k=1}^\dimension$ satisfy the biorthonormality relation \fref{eq:biorthog}.
\end{enumerate}

The theory of \onbacp in infinite-dimensional spaces is well-developed. 
In this section, we ask how the concept of general (i.e., not necessarily orthogonal) bases can be extended to infinite-dimensional spaces. Clearly, in the infinite-dimensional case, we can not simply say that the number of elements in a basis must be equal to the dimension of the Hilbert space. However, we can use the property that removing an element from a basis, leaves us with an incomplete set of vectors to motivate the following definition.
\begin{dfn}
Let  $\setframemm$ be a frame for the Hilbert space $\hilspace$. We call the frame $\setframemm$ \emph{exact} if, for all $m\in\frameset$, the set $\{\synthframeelrev_{k}\}_{k\ne m}$ is incomplete for $\hilspace$; 
 we call the frame $\setframemm$ \emph{inexact} if there is at least one element $\analframeel_m$ that can be removed from the frame, so that the  set $\{\analframeel_k\}_{k\ne m}$ is again a frame for $\hilspace$.	
\end{dfn}

There are two more properties of general bases in finite-dimensional spaces that carry over to the infinite-dimensional case, namely uniqueness of representation in the sense of \eqref{eq:vsexpansion} and biorthonormality between the frame and its canonical dual. To show that representation of a signal in an exact frame is unique and that an exact frame is biorthonormal to its canonical dual frame, we will need the following two lemmas.

Let $\setdualframerevmm$ and $\setframerevmm$
be canonical dual frames. The first lemma below states that for a fixed~$\fun\in\hilspace$, among all possible expansion coefficient 
sequences $\coeflt$ satisfying $\fun=\sum_{k\in\frameset} \coef_{k} \synthframeelrev_k$, the coefficients $\coef_{k}=\inner{\fun}{\analframeelrev_k}$ have
minimum $\sqsumspace$-norm. 
\begin{lem}[{\citepar{heil89-12}}]
\label{lem:minimumnorm}
Let $\setdualframerevmm$ be a frame for the Hilbert space \hilspace and $\setframerevmm$  its canonical dual frame. For a fixed~$\fun\in\hilspace$, let $\coef_{k}=\inner{\fun}{\analframeelrev_k}$ so that $\fun=\sum_{k\in\frameset} \coef_{k}\synthframeelrev_k$.
If it is possible to find scalars $\{\coefalt_{k}\}_{k\in\frameset}\ne \{\coef_{k}\}_{k\in\frameset}$ such that
$\fun=\sum_{k\in\frameset} \coefalt_{k}\synthframeelrev_k$, then we must have 
\begin{equation}
\sum_{k\in\frameset} \abs{\coefalt_{k}}^{2}=\sum_{k\in\frameset} \abs{\coef_{k}}^{2}+\sum_{k\in\frameset} \abs{\coef_{k}-\coefalt_{k}}^2.
\label{eq:min1}
\end{equation}	
\end{lem}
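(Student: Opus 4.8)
The plan is to read~\fref{eq:min1} as the Pythagorean theorem in~$\hilseqspace$: the distinguished coefficient sequence $\{\coef_k\}_{k\in\frameset}$ lies in the range of the analysis operator of the frame $\setdualframerevmm$, while every competing sequence differs from it by an element of the orthogonal complement of that range. Once these two facts are established, the identity is immediate.

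To set this up, let $\analop:\hilspace\to\hilseqspace$ be the analysis operator of the frame $\setdualframerevmm$, that is, $\analop\fun=\{\inner{\fun}{\synthframeelrev_k}\}_{k\in\frameset}$, and let $\frameop=\ad\analop\analop$ be the corresponding frame operator, so that $\ad\analop$ is the synthesis operator $\ad\analop\coeflt=\sum_{k\in\frameset}\coef_k\synthframeelrev_k$ (cf.~\fref{dfn:analcrossop}). By~\fref{thm:dualframe}, $\setframerevmm$ being the canonical dual means $\analframeelrev_k=\frameop^{-1}\synthframeelrev_k$ and the analysis operator associated to $\setframerevmm$ equals $\analop\frameop^{-1}$. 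Consequently
\begin{equation*}
\{\coef_k\}_{k\in\frameset}=\{\inner{\fun}{\analframeelrev_k}\}_{k\in\frameset}=\analop\frameop^{-1}\fun\in\rng(\analop).
\end{equation*}
For the second ingredient, observe that both expansions of $\fun$ say precisely that $\ad\analop\{\coef_k\}=\ad\analop\{\coefalt_k\}=\fun$, hence $\ad\analop\bigl(\{\coef_k\}-\{\coefalt_k\}\bigr)=0$. Since the orthogonal complement of the range of a bounded operator is the null space of its adjoint (and $\rng(\analop)$ is closed because $\analop$ is bounded below by the lower frame bound), it follows that $\{\coef_k\}-\{\coefalt_k\}\in\ocomp{\rng(\analop)}$, so $\{\coef_k\}$ and $\{\coef_k\}-\{\coefalt_k\}$ are orthogonal in $\hilseqspace$.

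With these two facts in hand, write $\{\coefalt_k\}=\{\coef_k\}-\bigl(\{\coef_k\}-\{\coefalt_k\}\bigr)$ as an orthogonal decomposition and apply the Pythagorean identity in $\hilseqspace$ to obtain
\begin{equation*}
\sum_{k\in\frameset}\abs{\coefalt_k}^2=\sum_{k\in\frameset}\abs{\coef_k}^2+\sum_{k\in\frameset}\abs{\coef_k-\coefalt_k}^2,
\end{equation*}
which is~\fref{eq:min1}. (If $\{\coefalt_k\}\notin\hilseqspace$ then both sides are $+\infty$ and there is nothing to prove.)

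I do not expect a genuine obstacle here; the steps requiring attention are bookkeeping ones. One must keep the roles of the two frames straight — in the statement it is $\setdualframerevmm$ that is the frame and $\setframerevmm$ its canonical dual, so the analysis operator in play is the one built from $\{\synthframeelrev_k\}$ — and one must invoke the identity $\analdualop=\analop\frameop^{-1}$ from~\fref{thm:dualframe} in order to certify that $\{\coef_k\}$ actually lies in $\rng(\analop)$ and not merely in $\hilseqspace$. The identification of $\ocomp{\rng(\analop)}$ with the null space of $\ad\analop$ is the same standard fact already used above in the discussion of the projection operator $\pop$.
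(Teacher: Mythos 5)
Your proof is correct, but it takes a genuinely different route from the one in the paper. You argue structurally: the distinguished sequence $\{\coef_k\}_{k\in\frameset}=\analop\frameop^{-1}\fun$ lies in $\rng(\analop)$, the difference $\{\coef_k-\coefalt_k\}_{k\in\frameset}$ lies in $\kernel(\ad\analop)=\ocomp{\rng(\analop)}$ because both sequences synthesize the same $\fun$, and \fref{eq:min1} is then the Pythagorean identity in $\hilseqspace$. The paper instead introduces the auxiliary element $\tilde\fun=\dualframeoprev^{-1}\fun$, computes $\inner{\fun}{\tilde\fun}$ in two ways to obtain $\sum_{k\in\frameset}\abs{\coef_k}^2=\sum_{k\in\frameset}\coefalt_k\conj{\coef_k}=\sum_{k\in\frameset}\conj{\coefalt_k}\coef_k$, and then verifies \fref{eq:min1} by expanding $\sum_k\abs{\coef_k-\coefalt_k}^2$ algebraically. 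The two arguments pivot on the same fact — the paper's identity $\sum_k\abs{\coef_k}^2=\sum_k\coefalt_k\conj{\coef_k}$ is exactly the statement that $\{\coefalt_k-\coef_k\}$ is orthogonal to $\{\coef_k\}$ in $\hilseqspace$ — but the paper derives it by a self-contained computation that uses only self-adjointness of $\dualframeoprev^{-1}$ and the frame expansion, whereas you invoke the operator-theoretic identification of $\ocomp{\rng(\analop)}$ with the null space of the adjoint (as the paper itself does later for the projection $\pop$). Your version makes the geometry transparent and explains \emph{why} the cross terms cancel; the paper's version is more elementary and avoids any discussion of range spaces. One small bookkeeping remark: the closedness of $\rng(\analop)$ you mention is not actually needed, since $\kernel(\ad\analop)=\ocomp{\rng(\analop)}$ holds for any bounded operator and you only use that $\{\coef_k\}\in\rng(\analop)$ while the difference is orthogonal to all of $\rng(\analop)$. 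Your parenthetical about $\{\coefalt_k\}\notin\hilseqspace$ is a reasonable way to dispose of a degenerate case the paper leaves implicit.
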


\begin{proof}
We have 
\begin{equation*}
\coef_{k}=\inner{\fun}{\analframeelrev_k}=\inner{\fun}{\dualframeoprev^{-1}\synthframeelrev_k}=\inner{\dualframeoprev^{-1}\fun}{\synthframeelrev_k}=\inner{\tilde\fun}{\synthframeelrev_k} 
\end{equation*}
with $\tilde\fun= \dualframeoprev^{-1} \fun$. Therefore,
\begin{equation*}
\inner{\fun}{\tilde\fun}=\inner{\sum_{k\in\frameset} \coef_{k} \synthframeelrev_k }{\tilde\fun}=\sum_{k\in\frameset} \coef_{k} \inner{\synthframeelrev_k }{\tilde\fun}=\sum_{k\in\frameset} \coef_{k} \conj\coef_{k}=\sum_{k\in\frameset} \abs{\coef_{k}}^2
\end{equation*}
and 
\begin{equation*}
\inner{\fun}{\tilde\fun}=\inner{\sum_{k\in\frameset} \coefalt_{k} \synthframeelrev_k }{\tilde\fun}=\sum_{k\in\frameset} \coefalt_{k} \inner{\synthframeelrev_k }{\tilde\fun}=\sum_{k\in\frameset} \coefalt_{k} \conj\coef_{k}.
\end{equation*}
We can therefore conclude that
\begin{equation}
	\label{eq:aux1}
\sum_{k\in\frameset} \abs{\coef_{k}}^2=\sum_{k\in\frameset} \coefalt_{k} \conj\coef_{k}=\sum_{k\in\frameset} \conj\coefalt_{k} \coef_{k}.
\end{equation}
Hence,
\begin{align*}
\sum_{k\in\frameset} \abs{\coef_{k}}^2+\sum_{k\in\frameset} \abs{\coef_{k}-\coefalt_{k}}^2 
&=\sum_{k\in\frameset} \abs{\coef_{k}}^2+\sum_{k\in\frameset} \left(\coef_{k}-\coefalt_{k}\right)\left(\conj\coef_{k}-\conj\coefalt_{k}\right)\\
&=\sum_{k\in\frameset} \abs{\coef_{k}}^2+\sum_{k\in\frameset}\abs{\coef_{k}}^2-\sum_{k\in\frameset}\coef_{k}\conj\coefalt_{k}-\sum_{k\in\frameset}\conj\coef_{k}\coefalt_{k}+\sum_{k\in\frameset} \abs{\coefalt_{k}}^2.
\end{align*}
Using \fref{eq:aux1}, we get
\begin{equation*}
	\sum_{k\in\frameset} \abs{\coef_{k}}^2+\sum_{k\in\frameset} \abs{\coef_{k}-\coefalt_{k}}^2 =\sum_{k\in\frameset} \abs{\coefalt_{k}}^2.
\end{equation*}
\end{proof} 

Note that this lemma implies $\sum_{k\in\frameset} \abs{\coefalt_{k}}^{2} > \sum_{k\in\frameset} \abs{\coef_{k}}^{2}$, i.e., the coefficient sequence $\{\coefalt_{k}\}_{k\in\frameset}$ has larger $l^{2}$-norm than the coefficient sequence $\{\coef_{k}=\inner{\fun}{\analframeelrev_k}\}_{k\in\frameset}$.

\begin{lem}[{\citepar{heil89-12}}]
\label{lem:innerprodexpr}
Let $\setdualframerevmm$ be a frame for the Hilbert space \hilspace and $\setframerevmm$ its canonical dual frame. Then for each $m\in\frameset$,
we have
\begin{equation*}
\sum_{k \ne m}\abs{\inner{\synthframeelrev_m}{\analframeelrev_k}}^2= \frac{	1-\abs{\inner{\synthframeelrev_m}{\analframeelrev_m}}^2-\abs{1-\inner{\synthframeelrev_m}{\analframeelrev_m}}^2}{2}.
\end{equation*}
\end{lem}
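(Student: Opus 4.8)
The plan is to recognize the left-hand sum as the quadratic form of the inverse frame operator evaluated at a frame element, and to read off its value from the reconstruction formula. Throughout, write $\frameop$ for the frame operator of the frame $\setdualframerevmm$, so that by hypothesis $\analframeelrev_k=\frameop^{-1}\synthframeelrev_k$ for all $k\in\frameset$, and abbreviate $z_m\define\inner{\synthframeelrev_m}{\analframeelrev_m}$.

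First I would reuse the computation already carried out in the proof of \fref{thm:dualframe}, applied to the frame $\setdualframerevmm$ with canonical dual $\setframerevmm$: for every $\fun\in\hilspace$,
\[
\sum_{k\in\frameset}\abs{\inner{\fun}{\analframeelrev_k}}^2=\inner{\frameop^{-1}\fun}{\fun}.
\]
Specializing to $\fun=\synthframeelrev_m$ and using $\frameop^{-1}\synthframeelrev_m=\analframeelrev_m$ gives
\[
\sum_{k\in\frameset}\abs{\inner{\synthframeelrev_m}{\analframeelrev_k}}^2=\inner{\frameop^{-1}\synthframeelrev_m}{\synthframeelrev_m}=\inner{\analframeelrev_m}{\synthframeelrev_m}=\conj{z_m}.
\]
Since $\frameop^{-1}$ is self-adjoint and positive definite by \fref{thm:smone}, the scalar $\inner{\frameop^{-1}\synthframeelrev_m}{\synthframeelrev_m}$ is real; hence $z_m=\conj{z_m}=\re z_m$ and the sum above equals $z_m$. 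Isolating the $k=m$ term then yields
\[
\sum_{k\ne m}\abs{\inner{\synthframeelrev_m}{\analframeelrev_k}}^2=z_m-\abs{z_m}^2.
\]

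It remains to match this with the claimed expression. For any $z\in\complexset$ one has $\abs{1-z}^2=1-2\re z+\abs{z}^2$, whence $\tfrac12\bigl(1-\abs{z}^2-\abs{1-z}^2\bigr)=\re z-\abs{z}^2$; taking $z=z_m$ and using $\re z_m=z_m$ turns the right-hand side of the last display into precisely $\tfrac12\bigl(1-\abs{z_m}^2-\abs{1-z_m}^2\bigr)$, which is the assertion. The only step with any real content is the identity $\sum_{k}\abs{\inner{\synthframeelrev_m}{\analframeelrev_k}}^2=\inner{\frameop^{-1}\synthframeelrev_m}{\synthframeelrev_m}$ and the resulting fact that $z_m$ is real; everything else is bookkeeping. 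As an alternative to quoting the proof of \fref{thm:dualframe}, one could instead apply \fref{lem:minimumnorm} to the frame $\setdualframerevmm$ with the two coefficient sequences $\coefalt_k=\delta_{k,m}$ (the trivial expansion $\synthframeelrev_m=\sum_{k}\coefalt_k\synthframeelrev_k$) and $\coef_k=\inner{\synthframeelrev_m}{\analframeelrev_k}$ (the canonical expansion from \fref{thm:frameexpansion}): equation \fref{eq:aux1} of that proof gives $\sum_k\abs{\coef_k}^2=\conj{\coef_m}$, after which one finishes as above, with a trivial separate check in the degenerate case where the two sequences coincide.
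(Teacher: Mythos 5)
Your argument is correct, and it reaches the identity by a genuinely different route than the paper. The paper's proof stays entirely at the level of coefficient sequences: it represents $\synthframeelrev_m$ in two ways --- trivially, with coefficients $\coefalt_k=\delta_{k,m}$, and canonically, with $\coef_k=\inner{\synthframeelrev_m}{\analframeelrev_k}$ --- then invokes the minimum-norm identity \fref{eq:min1} of \fref{lem:minimumnorm}, expands $\abs{\coef_k-\coefalt_k}^2$, and solves for $\sum_{k\ne m}\abs{\coef_k}^2$. You instead evaluate the full sum in closed form, $\sum_{k\in\frameset}\abs{\inner{\synthframeelrev_m}{\analframeelrev_k}}^2=\inner{\frameop^{-1}\synthframeelrev_m}{\synthframeelrev_m}=\conj{\inner{\synthframeelrev_m}{\analframeelrev_m}}$, use self-adjointness and positive definiteness of $\frameop^{-1}$ (\fref{thm:smone}) to conclude that $\inner{\synthframeelrev_m}{\analframeelrev_m}$ is real, and reduce the rest to the algebraic identity $\tfrac{1}{2}\bigl(1-\abs{z}^2-\abs{1-z}^2\bigr)=\Re z-\abs{z}^2$. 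Both proofs are complete; yours is shorter and yields as a byproduct that $\inner{\synthframeelrev_m}{\analframeelrev_m}$ is real and, since the left-hand side is nonnegative, lies in $[0,1]$ --- information the paper's formulation deliberately does not presuppose (its right-hand side is written so as to be correct without knowing $\Re z_m=z_m$). The paper's route has the expository advantage of reusing \fref{lem:minimumnorm}, the thematic centerpiece of the section. Your secondary suggestion --- extracting \fref{eq:aux1} with $\coefalt_k=\delta_{k,m}$ to get $\sum_k\abs{\coef_k}^2=\conj{\coef_m}$ --- is in substance the same computation as your operator identity, since \fref{eq:aux1} is itself derived by pairing against $\tilde{\fun}=\frameop^{-1}\fun$; the only care needed there is the one you already note, namely that \fref{lem:minimumnorm} as stated assumes the two coefficient sequences differ.
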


\begin{proof} 
We can represent $\synthframeelrev_{m}$ in two different ways. Obviously $\synthframeelrev_{m}=\sum_{k\in\frameset} \coefalt_{k} \synthframeelrev_{k}$ with $\coefalt_{m}=1$ and $\coefalt_{k}=0$ for $k\ne m$, so that
$\sum_{k\in\frameset} \abs{\coefalt_{k}}^{2}=1.$ Furthermore, we can write $\synthframeelrev_{m}=\sum_{k\in\frameset} \coef_{k} \synthframeelrev_{k}$ with $\coef_{k}=\inner{\synthframeelrev_m}{\analframeelrev_k}$. From \fref{eq:min1} it then follows that 
\begin{align*}
1&=\sum_{k\in\frameset}\abs{\coefalt_{k}}^{2}=\sum_{k\in\frameset}\abs{\coef_{k}}^{2}+\sum_{k\in\frameset}\abs{\coef_{k}-\coefalt_{k}}^{2}\\
&=\sum_{k\in\frameset}\abs{\coef_{k}}^{2}+\abs{\coef_{m}-\coefalt_{m}}^{2}+\sum_{k\ne m}\abs{\coef_{k}-\coefalt_{k}}^{2}\\
&=\sum_{k\in\frameset}\abs{\inner{\synthframeelrev_m}{\analframeelrev_k}}^{2}+\abs{\inner{\synthframeelrev_m}{\analframeelrev_m}-1}^{2}+\sum_{k\ne m}\abs{\inner{\synthframeelrev_m}{\analframeelrev_k}}^{2}\\
&=2\sum_{k\ne m}\abs{\inner{\synthframeelrev_m}{\analframeelrev_k}}^{2}+\abs{\inner{\synthframeelrev_m}{\analframeelrev_m}}^{2}+\abs{1-\inner{\synthframeelrev_m}{\analframeelrev_m}}^{2}
\end{align*}
and hence
\begin{equation*}
\sum_{k \ne m}\abs{\inner{\synthframeelrev_m}{\analframeelrev_k}}^{2}\,=\,\frac{1-\abs{\inner{\synthframeelrev_m}{\analframeelrev_m}}^{2}-
\abs{1-\inner{\synthframeelrev_m}{\analframeelrev_m}}^{2}}{2}.
\end{equation*}
\end{proof}

We are now able to formulate an equivalent condition for a frame to be exact.

\begin{thm}[{\citepar{heil89-12}}]
\label{thm:exactframecond}
Let $\setdualframerevmm$ be a frame for the Hilbert space \hilspace and $\setframerevmm$ its canonical dual frame. Then, 
\begin{enumerate}
	\item 
	\setdualframerevmm is exact if and only if $\inner{\synthframeelrev_{m}}{\analframeelrev_{m}}=1$ for all $m\in\frameset$;
	\item
	\setdualframerevmm is inexact if and only if there exists at least one $m\in\frameset$ such that $\inner{\synthframeelrev_{m}}{\analframeelrev_{m}}\ne 1$.
\end{enumerate}
\end{thm}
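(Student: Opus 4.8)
The statement is entirely about what happens when a single element is dropped from a frame, so the plan is to isolate two ``one-element'' facts and then read off the four implications. Throughout I would write $\frameop$ for the frame operator of $\setdualframerevmm$, so that $\analframeelrev_k=\frameop^{-1}\synthframeelrev_k$ by \fref{thm:dualframe}, and use from \fref{thm:smone} that $\frameop^{-1}$ is self-adjoint and positive definite (hence $\inner{\synthframeelrev_m}{\analframeelrev_m}=\inner{\synthframeelrev_m}{\frameop^{-1}\synthframeelrev_m}$ is real). The two facts to establish are: \textbf{(A)} $\inner{\synthframeelrev_m}{\analframeelrev_m}=1$ implies $\{\synthframeelrev_k\}_{k\ne m}$ is incomplete for $\hilspace$; and \textbf{(B)} $\inner{\synthframeelrev_m}{\analframeelrev_m}\ne1$ implies $\{\synthframeelrev_k\}_{k\ne m}$ is a frame for $\hilspace$. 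Given (A) and (B), the rest is pure logic: if $\inner{\synthframeelrev_m}{\analframeelrev_m}=1$ for all $m$, (A) makes the frame exact; if the frame is exact but $\inner{\synthframeelrev_m}{\analframeelrev_m}\ne1$ for some $m$, (B) makes $\{\synthframeelrev_k\}_{k\ne m}$ a frame, hence complete (every frame is complete, as shown after \fref{dfn:completeness}), contradicting exactness; if the frame is inexact then some $\{\synthframeelrev_k\}_{k\ne m}$ is a frame, hence complete, hence not incomplete, so (A) forces $\inner{\synthframeelrev_m}{\analframeelrev_m}\ne1$; and the remaining direction of part~2 is (B) itself. (In passing, (A) and (B) show that exact and inexact are complementary, which the definitions alone do not make obvious.)

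For (A), set $c:=\inner{\synthframeelrev_m}{\analframeelrev_m}=1$. \fref{lem:innerprodexpr} gives $\sum_{k\ne m}\abs{\inner{\synthframeelrev_m}{\analframeelrev_k}}^2=\frac{1-\abs{c}^2-\abs{1-c}^2}{2}=0$, so $\inner{\synthframeelrev_m}{\analframeelrev_k}=0$ for all $k\ne m$; then, using $\analframeelrev_k=\frameop^{-1}\synthframeelrev_k$ and self-adjointness of $\frameop^{-1}$,
\begin{equation*}
	\inner{\analframeelrev_m}{\synthframeelrev_k}=\inner{\frameop^{-1}\synthframeelrev_m}{\synthframeelrev_k}=\inner{\synthframeelrev_m}{\frameop^{-1}\synthframeelrev_k}=\inner{\synthframeelrev_m}{\analframeelrev_k}=0,\qquad k\ne m .
\end{equation*}
Since $\inner{\synthframeelrev_m}{\analframeelrev_m}=1$ forces $\analframeelrev_m\ne0$, this exhibits a nonzero vector orthogonal to all $\synthframeelrev_k$ with $k\ne m$, i.e.\ $\{\synthframeelrev_k\}_{k\ne m}$ is incomplete. (Alternatively, invoke reciprocity of canonical duality and apply \fref{lem:innerprodexpr} to the reversed pair.)

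For (B), put $c:=\inner{\synthframeelrev_m}{\analframeelrev_m}$ with $c\ne1$. Applying the expansion $\fun=\sum_{k\in\frameset}\inner{\fun}{\analframeelrev_k}\synthframeelrev_k$ of \fref{thm:frameexpansion} to $\fun=\synthframeelrev_m$ and isolating $k=m$ gives $(1-c)\synthframeelrev_m=\sum_{k\ne m}\inner{\synthframeelrev_m}{\analframeelrev_k}\synthframeelrev_k$, so $\synthframeelrev_m=\sum_{k\ne m}\coefalt_k\synthframeelrev_k$ with $\coefalt_k:=\inner{\synthframeelrev_m}{\analframeelrev_k}/(1-c)$ and $\{\coefalt_k\}_{k\ne m}\in\hilseqspace$ (since $\setframerevmm$, the canonical dual of a frame, is itself a frame by \fref{thm:dualframe}, hence Bessel). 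Substituting this representation of $\synthframeelrev_m$ into the expansion of an arbitrary $\fun$,
\begin{equation*}
	\fun=\inner{\fun}{\analframeelrev_m}\synthframeelrev_m+\sum_{k\ne m}\inner{\fun}{\analframeelrev_k}\synthframeelrev_k=\sum_{k\ne m}\beta_k\,\synthframeelrev_k,\qquad \beta_k:=\inner{\fun}{\analframeelrev_k}+\coefalt_k\inner{\fun}{\analframeelrev_m},
\end{equation*}
and a crude $\abs{a+b}^2\le2\abs{a}^2+2\abs{b}^2$ estimate, together with an upper frame bound $\tilde B$ of $\setframerevmm$, gives $\sum_{k\ne m}\abs{\beta_k}^2\le D\vecnorm{\fun}^2$ with $D:=2\tilde B+2\vecnorm{\analframeelrev_m}^2\sum_{k\ne m}\abs{\coefalt_k}^2<\infty$. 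Pairing the displayed identity with $\fun$ and applying Cauchy--Schwarz,
\begin{equation*}
	\vecnorm{\fun}^2=\sum_{k\ne m}\beta_k\conj{\inner{\fun}{\synthframeelrev_k}}\le\bigl(D\vecnorm{\fun}^2\bigr)^{1/2}\Bigl(\sum_{k\ne m}\abs{\inner{\fun}{\synthframeelrev_k}}^2\Bigr)^{1/2},
\end{equation*}
so $\sum_{k\ne m}\abs{\inner{\fun}{\synthframeelrev_k}}^2\ge D^{-1}\vecnorm{\fun}^2$; the matching upper bound is inherited from $\setdualframerevmm$. Hence $\{\synthframeelrev_k\}_{k\ne m}$ is a frame.

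The hard part will be (B): its content is not completeness of $\{\synthframeelrev_k\}_{k\ne m}$ but the existence of a \emph{positive lower frame bound}, and the mechanism producing it is precisely the $\hilseqspace$-representation of the removed vector $\synthframeelrev_m$ (available exactly because $c\ne1$), which lets one rewrite every frame expansion without $\synthframeelrev_m$ and then extract the lower bound by Cauchy--Schwarz. By contrast (A) is two short inner-product manipulations on top of \fref{lem:innerprodexpr}, and the assembly of the four implications is routine once ``frame $\Rightarrow$ complete'' is available.
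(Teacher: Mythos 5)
Your proposal is correct, and its skeleton coincides with the paper's: part (A) is exactly the paper's argument (apply \fref{lem:innerprodexpr}, conclude $\inner{\synthframeelrev_m}{\analframeelrev_k}=0$ for $k\ne m$, note $\analframeelrev_m\ne 0$, hence incompleteness — your explicit use of self-adjointness of the inverse frame operator to pass from $\inner{\synthframeelrev_m}{\analframeelrev_k}=0$ to $\inner{\analframeelrev_m}{\synthframeelrev_k}=0$ is a small step the paper asserts without comment), and part (B) starts from the same key representation $\synthframeelrev_m=\frac{1}{1-c}\sum_{k\ne m}\inner{\synthframeelrev_m}{\analframeelrev_k}\synthframeelrev_k$. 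Where you diverge is in how the lower frame bound for $\{\synthframeelrev_k\}_{k\ne m}$ is extracted. The paper stays on the analysis side: it bounds $\abs{\inner{\fun}{\synthframeelrev_m}}^2$ by Cauchy--Schwarz against $\sum_{k\ne m}\abs{\inner{\fun}{\synthframeelrev_k}}^2$, absorbs the removed term multiplicatively to get $\sum_{k\in\frameset}\abs{\inner{\fun}{\synthframeelrev_k}}^2\le C\sum_{k\ne m}\abs{\inner{\fun}{\synthframeelrev_k}}^2$, and divides the original lower bound to obtain the explicit constant $\frameA/C$ with $C=1+\abs{1-c}^{-2}\sum_{k\ne m}\abs{\inner{\synthframeelrev_m}{\analframeelrev_k}}^2$. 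You instead work on the synthesis side: you re-expand an arbitrary $\fun$ entirely without $\synthframeelrev_m$, control the new coefficients in $\hilseqspace$, and recover a lower bound $1/D$ by pairing with $\fun$ and applying Cauchy--Schwarz. Both are valid; the paper's route is shorter and needs only the frame bounds of $\setdualframerevmm$ itself, whereas yours additionally invokes an upper frame bound of the dual and $\vecnorm{\analframeelrev_m}$, and yields a less sharp constant — but it has the conceptual merit of exhibiting the reduced family as still spanning every $\fun$ with square-summable coefficients, which is the synthesis-side picture of inexactness. Your assembly of the four implications is also slightly cleaner than the paper's, since you make explicit that exactness and inexactness exclude each other because a frame is complete.
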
	

\begin{proof}
We first show that if $\inner{\synthframeelrev_{m}}{\analframeelrev_{m}}=1$ for all $m\in\frameset,$ then $\{\synthframeelrev_{k}\}_{k\ne m}$ is incomplete for \hilspace (for all $m\in\frameset$) and hence $\setdualframerevmm$ is
an exact frame for \hilspace. Indeed, fix an arbitrary $m\in\frameset.$
From \fref{lem:innerprodexpr} we have
\begin{equation*}
\sum_{k\ne m}\abs{\inner{\synthframeelrev_m}{\analframeelrev_k}}^{2}\,=\,\frac{1-\abs{\inner{\synthframeelrev_m}{\analframeelrev_m}}^{2}-
\abs{1-\inner{\synthframeelrev_m}{\analframeelrev_m}}^{2}}{2}.
\end{equation*}
Since  $\inner{\synthframeelrev_{m}}{\analframeelrev_{m}}=1,$ we have 
$\sum_{k\ne m}\abs{\inner{\synthframeelrev_m}{\analframeelrev_k}}^{2}=0$ so that $\inner{\synthframeelrev_m}{\analframeelrev_k}=\inner{\analframeelrev_m}{\synthframeelrev_k}=0$ for all
$k \ne m$. 
But~$\analframeelrev_m \ne 0$ since $\inner{\synthframeelrev_m}{\analframeelrev_m}=1.$
Therefore, $\{\synthframeelrev_{k}\}_{k\ne m}$ is incomplete for \hilspace, because 
$\analframeelrev_m\ne 0$ is orthogonal to all elements of the set
$\{\synthframeelrev_{k}\}_{k\ne m}$.

Next, we show that if there exists at least one $m\in\frameset$ such that $\inner{\synthframeelrev_{m}}{\analframeelrev_{m}} \ne 1$, then $\setdualframerevmm$ is inexact. More specifically, we will show that
$\{\synthframeelrev_{k}\}_{k \ne m}$ is still a frame for \hilspace if $\inner{\synthframeelrev_{m}}{\analframeelrev_{m}} \ne 1$.
We start by noting that
\begin{equation}
	\label{eq:gmdecomp}
\synthframeelrev_{m}=\sum_{k\in\frameset} \big<\synthframeelrev_{m},\analframeelrev_{k}\big>\synthframeelrev_{k}=\big<\synthframeelrev_{m},\analframeelrev_{m}\big>\synthframeelrev_{m}+\sum_{k\ne m}\big<\synthframeelrev_{m},\analframeelrev_{k}\big>\synthframeelrev_{k}.
\end{equation}
If $\inner{\synthframeelrev_{m}}{\analframeelrev_{m}} \ne 1$, \fref{eq:gmdecomp} can be rewritten as 
\begin{equation*}
\synthframeelrev_{m}=\frac{1}{1-\inner{\synthframeelrev_{m}}{\analframeelrev_{m}}}\sum_{k \ne m}\inner{\synthframeelrev_{m}}{\analframeelrev_{k}}\synthframeelrev_{k},
\end{equation*}
and for every $\fun \in \hilspace$ we have
\begin{align*}
\abs{\inner{\fun}{\synthframeelrev_{m}}}^{2}&=\abs{\frac{1}{1-\inner{\synthframeelrev_{m}}{\analframeelrev_{m}}}}^{2}\abs{\sum_{k \ne m}
\inner{\synthframeelrev_{m}}{\analframeelrev_{k}}\inner{\fun}{\synthframeelrev_{k}}}^{2}\\
&\le\frac{1}{\abs{1-\inner{\synthframeelrev_{m}}{\analframeelrev_{m}}}^{2}}\left[ \sum_{k \ne m}
\abs{\inner{\synthframeelrev_{m}}{\analframeelrev_{k}}}^{2}\right] \left[\sum_{k \ne m} \abs{\inner{\fun}{\synthframeelrev_{k}}}^{2}\right].
\end{align*}
Therefore
\begin{align*}
\sum_{k\in\frameset}\abs{\inner{\fun}{\synthframeelrev_{k}}}^{2}&=\abs{\inner{\fun}{\synthframeelrev_{m}}}^{2}+\sum_{k \ne
m}\abs{\inner{\fun}{\synthframeelrev_{k}}}^{2}\\
&\le\frac{1}{\abs{1-\inner{\synthframeelrev_{m}}{\analframeelrev_{m}}}^{2}}\left[\sum_{k\ne m}\abs{\inner{\synthframeelrev_{m}}{\analframeelrev_{k}}}^{2}\right]\left[\sum_{k\ne m}\abs{\inner{\fun}{\synthframeelrev_{k}}}^{2}\right]+\sum_{k\ne m}\abs{\inner{\fun}{\synthframeelrev_{k}}}^{2}\\
&=\sum_{k\ne m}\abs{\inner{\fun}{\synthframeelrev_{k}}}^{2}\underbrace{\left[1+\frac{1}{\abs{1-\inner{\synthframeelrev_{m}}{\analframeelrev_{m}}}^{2} }\sum_{k\ne m}\abs{\inner{\synthframeelrev_{m}}{\analframeelrev_{k}}}^{2}\right]}_C\\
&=C\sum_{k\ne m}\abs{\inner{\fun}{\synthframeelrev_{k}}}^{2}
\end{align*}
or equivalently
\begin{equation*}
\frac{1}{C}\sum_{k\in\frameset}\abs{\inner{\fun}{\synthframeelrev_{k}}}^{2} \le \sum_{k\ne m} \abs{\inner{\fun}{\synthframeelrev_{k}}}^{2}.
\end{equation*}
With \fref{eq:framedef} it follows that
\begin{equation}
	\label{eq:somethingisframe}
\frac{\frameA}{C}\vecnorm{\fun}^{2} \le \frac{1}{C}\sum_{k\in\frameset} \abs{\inner{\fun}{\synthframeelrev_{k}}}^{2} \le
\sum_{k\ne m}\abs{\inner{\fun}{\synthframeelrev_{k}}}^{2} \le \sum_{k\in\frameset} \abs{\inner{\fun}{\synthframeelrev_{k}}}^{2} \le
\frameB\vecnorm{\fun}^{2},
\end{equation}
where $\frameA$ and $\frameB$ are the frame bounds of the frame $\{\synthframeelrev_{k}\}_{k\in\frameset}$.
Note that (trivially) $C>0$; moreover~$C<\infty$ since $\inner{\synthframeelrev_{m}}{\analframeelrev_{m}}\ne 1$ and  $\sum_{k\ne m}\abs{\inner{\synthframeelrev_{m}}{\analframeelrev_{k}}}^{2}<\infty$ as a consequence of $\setframerevmm$ being a frame for~\hilspace. This implies that~$\frameA/C>0$, and, therefore, \fref{eq:somethingisframe} shows 
 that $\{\synthframeelrev_{k}\}_{k\ne m}$ is a frame with frame
bounds~$\frameA/C$ and~$\frameB.$

To see that, conversely, exactness of $\setdualframerevmm$  implies that $\inner{\synthframeelrev_{m}}{\analframeelrev_{m}}=1$ for all $m\in\frameset$, we
suppose that $\setdualframerevmm$ is exact and $\inner{\synthframeelrev_{m}}{\analframeelrev_{m}} \ne 1$ for at least one $m\in\frameset$. But the condition $\inner{\synthframeelrev_{m}}{\analframeelrev_{m}}
\ne 1$ for at least one $m\in\frameset$
implies that $\setdualframerevmm$ is inexact, which results in a contradiction.
It remains to show that $\setdualframerevmm$ inexact implies $\inner{\synthframeelrev_{m}}{\analframeelrev_{m}} \ne 1$ for at least one $m\in\frameset$.
Suppose that $\setdualframerevmm$ is inexact and $\inner{\synthframeelrev_{m}}{\analframeelrev_{m}}=1$ for all $m\in\frameset$. But the condition $\inner{\synthframeelrev_{m}}{\analframeelrev_{m}}=1$ for all $m\in\frameset$
implies that $\setdualframerevmm$ is exact, which again results in a contradiction.
\end{proof}

Now we are ready to state the two main results of this section. The first result generalizes the biorthonormality relation \fref{eq:biorthog} to the  infinite-dimensional setting.
\begin{cor}[{\citepar{heil89-12}}]
\label{cor:exactbiorth}
Let $\setdualframerevmm$ be a frame for the Hilbert space \hilspace. If $\setdualframerevmm$ is  exact, then
$\setdualframerevmm$ and its canonical dual $\setframerevmm$ are biorthonormal, i.e.,
\begin{equation*}
\inner{\synthframeelrev_{m}}{\analframeelrev_{k}}=\begin{cases}
1,\mbox{ if } k=m\\
0,\mbox{ if } k\ne m. \end{cases}
\end{equation*}
Conversely, if $\setdualframerevmm$ and  $\setframerevmm$ are biorthonormal, then $\setdualframerevmm$ is exact.
\end{cor}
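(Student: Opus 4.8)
The plan is to obtain both implications almost immediately from the machinery already in place, namely \fref{thm:exactframecond} and \fref{lem:innerprodexpr}; no new estimates are needed.

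For the forward direction, suppose $\setdualframerevmm$ is exact. By \fref{thm:exactframecond} this is equivalent to $\inner{\synthframeelrev_{m}}{\analframeelrev_{m}}=1$ for every $m\in\frameset$, which already settles the diagonal case $k=m$. To handle $k\ne m$, fix an arbitrary $m\in\frameset$ and apply \fref{lem:innerprodexpr}, which gives
\begin{equation*}
\sum_{k\ne m}\abs{\inner{\synthframeelrev_m}{\analframeelrev_k}}^{2}=\frac{1-\abs{\inner{\synthframeelrev_m}{\analframeelrev_m}}^{2}-\abs{1-\inner{\synthframeelrev_m}{\analframeelrev_m}}^{2}}{2}.
\end{equation*}
Substituting $\inner{\synthframeelrev_{m}}{\analframeelrev_{m}}=1$ makes the numerator equal to $1-1-0=0$, so the sum of nonnegative terms on the left vanishes, forcing $\inner{\synthframeelrev_m}{\analframeelrev_k}=0$ for all $k\ne m$. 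Since $m\in\frameset$ was arbitrary, this is precisely the asserted biorthonormality relation, the exact analogue of \fref{eq:biorthog}.

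For the converse, assume $\setdualframerevmm$ and $\setframerevmm$ are biorthonormal. Then in particular $\inner{\synthframeelrev_{m}}{\analframeelrev_{m}}=1$ for all $m\in\frameset$, which is exactly the criterion in part~1 of \fref{thm:exactframecond}; hence $\setdualframerevmm$ is exact. The only point requiring a little care is bookkeeping: one must make sure that in applying \fref{thm:exactframecond} and \fref{lem:innerprodexpr} the roles of ``frame'' and ``canonical dual'' are assigned consistently with the statements of those results, since canonical duality is reciprocal but the inner product $\inner{\synthframeelrev_m}{\analframeelrev_k}$ is not a priori symmetric in $m$ and $k$. Once the indexing is matched, both directions are immediate, and there is no substantive obstacle.
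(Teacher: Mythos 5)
Your proof is correct and follows essentially the same route as the paper's: the forward direction combines \fref{thm:exactframecond} (diagonal) with \fref{lem:innerprodexpr} (off-diagonal terms vanish once the diagonal equals $1$), and the converse uses only the diagonal part of biorthonormality together with \fref{thm:exactframecond}. Your added remark about keeping the roles of the frame and its canonical dual consistent is a sensible bookkeeping caution but does not change the argument.
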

\begin{proof}
If $\setdualframerevmm$ is exact, then biorthonormality follows by noting that \fref{thm:exactframecond} implies
$\inner{\synthframeelrev_{m}}{\analframeelrev_{m}}=1$ for
all $m\in\frameset$, and \fref{lem:innerprodexpr} implies $\sum_{k\ne m}\abs{\inner{\synthframeelrev_{m}}{\analframeelrev_{k}}}^{2}=0$  for
all $m\in\frameset$ and thus $\inner{\synthframeelrev_{m}}{\analframeelrev_{k}}=0$ for all
$ k\ne m $.
To  show that, conversely,  biorthonormality of $\setdualframerevmm$ and  $\setframerevmm$ implies that the frame $\setdualframerevmm$ is exact, we simply note that $\inner{\synthframeelrev_{m}}{\analframeelrev_{m}}=1$ for all $m\in\frameset$, by \fref{thm:exactframecond}, implies that $\setdualframerevmm$ is exact.
\end{proof}

The second main result in this section states that the expansion into an exact frame is unique and, therefore, the concept of an exact frame generalizes that of a basis to infinite-dimensional spaces. 
\begin{thm}[\citepar{heil89-12}]
If $\setdualframerevmm$ is an exact frame for the Hilbert space \hilspace and $\fun=\sum_{k\in\frameset} \coef_k\synthframeelrev_k$ with~$\fun\in\hilspace$, then the coefficients $\{\coef_k\}_{k\in\frameset}$ are unique and are given by
\begin{equation*}
\coef_k=\inner{\fun}{\analframeelrev_k},
\end{equation*}
where $\setframerevmm$ is the canonical dual frame to $\setdualframerevmm$.
\end{thm}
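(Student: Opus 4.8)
The plan is to establish existence and uniqueness of the representing coefficient sequence separately, both times using results already proved in this section.

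Existence is immediate. Since $\setdualframerevmm$ is a frame for $\hilspace$ with canonical dual $\setframerevmm$, \fref{thm:frameexpansion} --- applied with $\setdualframerevmm$ in the role of the frame and $\setframerevmm$ in the role of its canonical dual (the two roles are interchangeable by the reciprocity of canonical duality noted after that theorem) --- gives $\fun=\sum_{k\in\frameset}\inner{\fun}{\analframeelrev_k}\synthframeelrev_k$ for every $\fun\in\hilspace$, with the series converging in $\hilspace$. So $\coef_k=\inner{\fun}{\analframeelrev_k}$ is always an admissible choice of coefficients; it remains only to show it is the unique one. (Equivalently, this admissibility is already recorded in the hypothesis of \fref{lem:minimumnorm}.)

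For uniqueness I would exploit exactness through biorthonormality. Let $\fun=\sum_{k\in\frameset}\coef_k\synthframeelrev_k$ be an arbitrary norm-convergent representation. Because $\setdualframerevmm$ is exact, \fref{cor:exactbiorth} yields that $\setdualframerevmm$ and $\setframerevmm$ are biorthonormal, i.e.\ $\inner{\synthframeelrev_k}{\analframeelrev_m}=1$ if $k=m$ and $0$ otherwise. Fix $m\in\frameset$ and pair both sides of $\fun=\sum_{k\in\frameset}\coef_k\synthframeelrev_k$ with $\analframeelrev_m$; since $\altfun\mapsto\inner{\altfun}{\analframeelrev_m}$ is a continuous linear functional it may be moved inside the convergent sum, giving $\inner{\fun}{\analframeelrev_m}=\sum_{k\in\frameset}\coef_k\inner{\synthframeelrev_k}{\analframeelrev_m}=\coef_m$. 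Hence $\coef_m=\inner{\fun}{\analframeelrev_m}$ for every $m\in\frameset$, which both identifies the coefficients and, in particular, proves their uniqueness.

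The argument is short and I do not expect a genuine obstacle; the two points needing a little care are (i) matching the orientation of the biorthonormality relation delivered by \fref{cor:exactbiorth} (stated for $\inner{\synthframeelrev_m}{\analframeelrev_k}$) to the pairing used above, which is just a relabeling of indices, and (ii) justifying the exchange of the inner product with the infinite series, which is legitimate because norm convergence of $\sum_{k}\coef_k\synthframeelrev_k$ is part of the hypothesis and the inner product is continuous in each argument. As an aside, one could instead run the uniqueness step through \fref{lem:minimumnorm} combined with the exactness criterion \fref{thm:exactframecond}, but the biorthonormality route above is the most direct and self-contained.
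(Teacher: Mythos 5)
Your proposal is correct and follows essentially the same route as the paper's own proof: existence via the canonical-dual expansion of \fref{thm:frameexpansion} and uniqueness by pairing an arbitrary representation with $\analframeelrev_m$ and invoking the biorthonormality supplied by exactness (\fref{cor:exactbiorth}). Your added remarks on continuity of the inner product and the index orientation are fine but do not change the argument.
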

\begin{proof}
	We know from~\fref{eq:signalexp2} that $\fun$ can be written as $\fun=\sum_{k\in\frameset} \inner{\fun}{\analframeelrev_k}\synthframeelrev_k$. Now assume that there is another set of coefficients $\{\coef_k\}_{k\in\frameset}$ 
	such that 
	\begin{equation}
		\label{eq:xdeccomp}
	\fun=\sum_{k\in\frameset} \coef_k\synthframeelrev_k.	
	\end{equation}
	Taking the inner product of both sides of~\fref{eq:xdeccomp} with $\analframeelrev_m$ and using the biorthonormality relation 
	\begin{equation*}
		\inner{\synthframeelrev_k}{\analframeelrev_m}=\begin{cases}1,\quad k=m\\0,\quad k\neq m\end{cases}
	\end{equation*}
	we obtain
	\begin{equation*}
		\inner{\fun}{\analframeelrev_m}=\sum_{k\in\frameset} \coef_k\inner{\synthframeelrev_k}{\analframeelrev_m}=\coef_m.
	\end{equation*}
	Thus, $\coef_m=\inner{\fun}{\analframeelrev_m}$ for all $m\in\frameset$ and the proof is completed. 
\end{proof}

\section{The Sampling Theorem}
\label{sec:shannonthm}
We now discuss one of the most important results in signal processing---the sampling theorem. 
We will then  show how the sampling theorem can be interpreted as a frame decomposition.

Consider a signal $\fun(\time)$ in the space of square-integrable functions $\sqintspace$. 
In general, we can not expect this signal to be uniquely specified by its samples $\{\fun(k\sinterval)\}_{k\in\integers}$, where $\sinterval$ is the sampling period. The sampling theorem tells us, however, that if a signal is %
strictly bandlimited, i.e., its Fourier transform  vanishes outside a certain finite interval, and if \sinterval is chosen small enough (relative to the signal's bandwidth), then the samples $\{\fun(k\sinterval)\}_{k\in\integers}$ do uniquely specify the signal and we can reconstruct $\fun(\time)$ from $\{\fun(k\sinterval)\}_{k\in\integers}$ perfectly. The process of obtaining the samples $\{\fun(k\sinterval)\}_{k\in\integers}$ from the continuous-time signal $\fun(\time)$ is called \adac conversion\footnote{Strictly speaking \adac conversion also involves quantization of the samples.}; the process of reconstruction of the signal $\fun(\time)$ from its samples is called \daac conversion.  We shall now formally state and prove the sampling theorem. 

Let $\ft\fun(\freq)$ denote the Fourier transform of the signal $\fun(\time)$, i.e.,
\begin{equation*}
	\ft\fun(\freq)=\int_{-\infty}^\infty \fun(\time)e^{-\iu2\pi \time\freq}d\time.
\end{equation*}
We say that $\fun(\time)$ is bandlimited to $\bandwidth\Hz$ if
$\ft\fun(\freq)=0$ for  $\abs{\freq}>\bandwidth$. Note that this implies that the total bandwidth of $\fun(\time)$, counting positive and negative frequencies, is $2\bandwidth$.  The Hilbert space of  $\sqintspace$ functions that are bandlimited to $\bandwidth\Hz$ is denoted as $\sqintspaceBL$.

Next, consider the sequence of samples $\bigl\{\fun[k]\define \fun(k\sinterval)\bigr\}_{k\in\integers}$ of the signal $\fun(\time)\in\sqintspaceBL$ and compute its \dtftac:
\begin{align}
	\ftd{\fun}(\freq)&\define\sum_{k=-\infty}^\infty \fun[k]e^{-\iu 2\pi k\freq}\nonumber\\
	&=\sum_{k=-\infty}^\infty \fun(k\sinterval) e^{-\iu 2\pi k\freq}\nonumber\\
	&=\frac{1}{T}\sum_{k=-\infty}^\infty \ft\fun\lefto(\frac{\freq+k}{\sinterval}\right),
	\label{eq:dftsampled}
\end{align}
where in the last step we used the Poisson summation formula\footnote{\label{ft:poisson} Let $\fun(\time)\in\hilfunspace$ with Fourier transform $\ft\fun(\freq)=\int_{-\infty}^{\infty}\fun(\time) e^{-\iu  2\pi  \time \freq} d\time$. The Poisson summation formula states that $\sum_{k=-\infty}^\infty \fun(k)=\sum_{k=-\infty}^\infty \ft\fun(k)$.}~\cite[Cor. 2.6]{stein71}. 

\begin{figure}
	\centering
	\subfigure[]{\label{fig:initsignal}\includegraphics[]{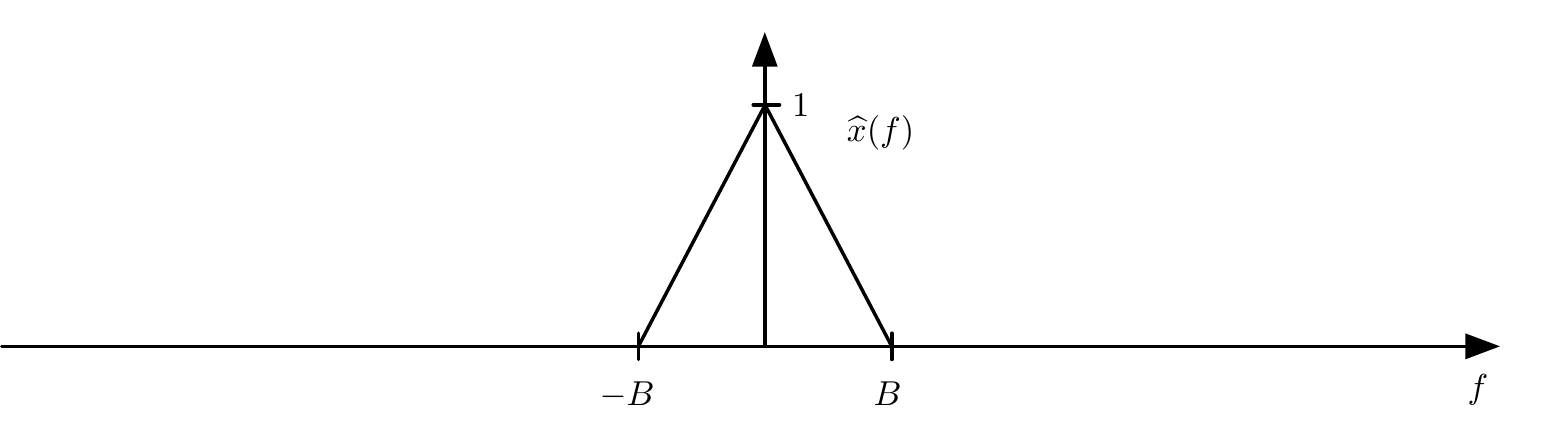}}\\
	\subfigure[]{\label{fig:nooverlap}\includegraphics[]{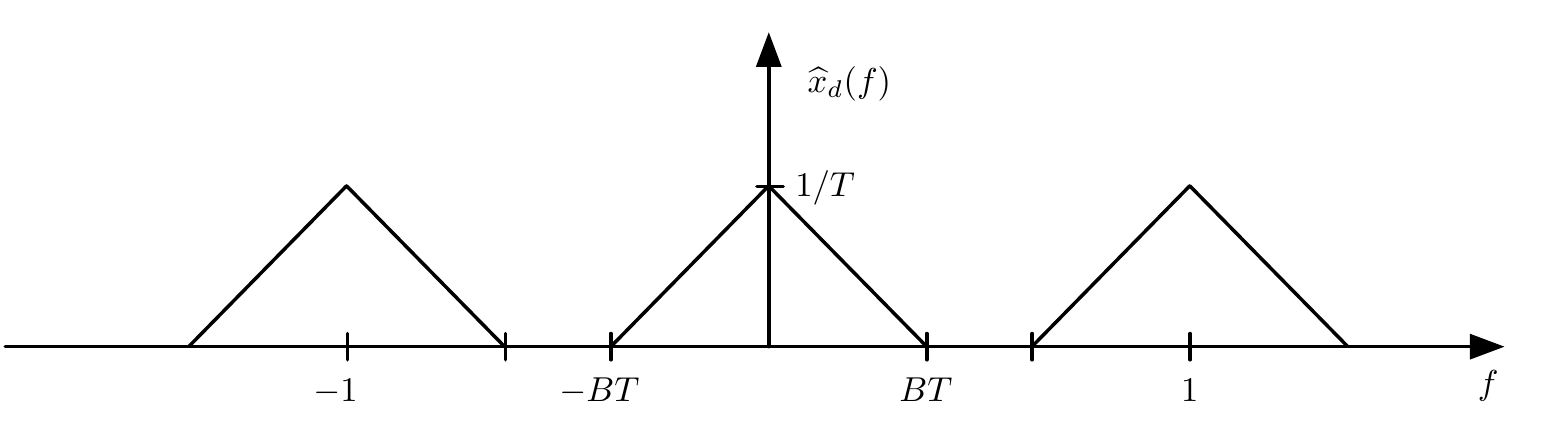}}\\
	\subfigure[]{\label{fig:overlap}\includegraphics[]{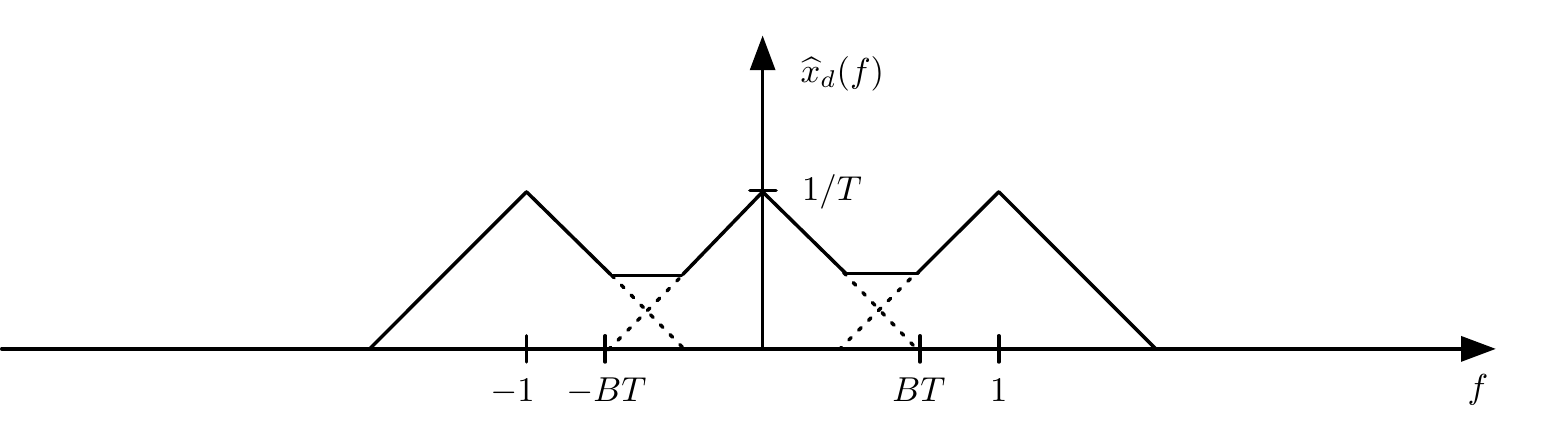}}
	\caption{Sampling of a signal that is band-limited to $\bandwidth\Hz$: (a) spectrum of the original signal; (b) spectrum of the sampled signal for $1/\sinterval>2\bandwidth$; (c) spectrum of the sampled signal for $1/\sinterval<2\bandwidth$, where aliasing occurs.}
	\label{fig:smapledspectrum}
\end{figure}
We can see that $\ftd{\fun}(\freq)$ is simply a periodized version of $\ft\fun(\freq)$. Now, it follows that for  $1/\sinterval\ge 2\bandwidth$ there is no overlap between the shifted replica of $\ft\fun(\freq/\sinterval)$, whereas for $1/\sinterval<2\bandwidth$, we do get the different shifted versions to overlap (see~\Figref{fig:smapledspectrum}). 
We can therefore conclude that for $1/\sinterval\ge 2\bandwidth$, $\ft\fun(\freq)$ can be recovered exactly from $\ftd{\fun}(\freq)$ by means of applying an ideal lowpass filter with gain $\sinterval$ and cutoff frequency $\bandwidth \sinterval$ to $\ftd{\fun}(\freq)$.
Specifically, we find that
\begin{equation}
	\label{eq:scaledspec}
	\ft\fun(\freq/\sinterval)=\ftd{\fun}(\freq)\, \sinterval\,\LPfilterf(\freq)
\end{equation}
with
\begin{equation}
	\label{eq:LPfilterf}
	\LPfilterf(\freq)=\begin{cases}1, \quad \abs{\freq}\le \bandwidth \sinterval \\0,\quad \text{otherwise}.\end{cases}
\end{equation}
From~\eqref{eq:scaledspec}, using \fref{eq:dftsampled}, we immediately see that we can recover the Fourier transform of $\fun(\time)$ from the sequence of samples $\{\fun[k]\}_{k\in\integers}$ according to 
\begin{equation}
	\ft\fun(\freq)
	=\sinterval \,\LPfilterf(\freq\sinterval)\sum_{k=-\infty}^\infty \fun[k]e^{-\iu 2\pi k\freq\sinterval}.
\end{equation}
We can therefore recover $\fun(\time)$ as follows:
\begin{align}
	\fun(\time)&=\int_{-\infty}^\infty \ft\fun(\freq) e^{\iu 2\pi \time\freq} d\freq\nonumber\\
	&=\int_{-\infty}^\infty \sinterval \LPfilterf(\freq\sinterval) \sum_{k=-\infty}^\infty  \fun[k]e^{-\iu 2\pi k\freq\sinterval} e^{\iu 2\pi \freq\time} d\freq\nonumber\\
	&=\sum_{k=-\infty}^\infty \fun[k]  \int_{-\infty}^\infty \LPfilterf(\freq\sinterval) e^{\iu 2\pi \freq\sinterval\left(\time/\sinterval-k\right) } d(\freq\sinterval)\nonumber\\
	\label{eq:reconstrLPfiltert}
	&=\sum_{k=-\infty}^\infty \fun[k]  \LPfiltert\lefto(\frac{\time}{\sinterval}-k\right)\\
	&=2\bandwidth \sinterval\sum_{k=-\infty}^\infty \fun[k] \sinc(2\bandwidth (\time-k\sinterval)),\nonumber
\end{align}
where $\LPfiltert(\time)$ is the inverse Fourier transform of $\LPfilterf(\freq)$, i.e,
\begin{equation*}
	\LPfiltert(\time)=\int_{-\infty}^\infty \LPfilterf(\freq)e^{\iu2\pi \time\freq}d\freq,
\end{equation*}
 and
\begin{equation*}
	\sinc(x)\define \frac{\sin(\pi x)}{\pi x}.
\end{equation*}
Summarizing our findings, we obtain the following theorem.
\begin{thm}[Sampling theorem~{\cite[Sec. 7.2]{oppenheim96}}]
	Let $\fun(\time)\in\sqintspaceBL$. Then 
	$\fun(\time)$ is uniquely specified by its samples $\fun(k\sinterval),\, k\in\integers,$ if $1/\sinterval\ge 2\bandwidth$.
	Specifically, we can reconstruct $\fun(\time)$ from $\fun(k\sinterval),\, k\in\integers,$ according to
	\begin{equation}
		\label{eq:samplingtheorem}
		\fun(\time)=2\bandwidth \sinterval\sum_{k=-\infty}^\infty \fun(k\sinterval) \sinc(2\bandwidth (\time-k\sinterval)).
	\end{equation}
\end{thm}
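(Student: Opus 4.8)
The plan is to assemble the proof directly from the computation carried out in the paragraphs preceding the theorem: the statement is essentially a summary of that derivation, so the real content lies in checking that each step is legitimate for an arbitrary $\fun(\time)\in\sqintspaceBL$.

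First I would note that since $\ft\fun(\freq)$ is supported on $[-\bandwidth,\bandwidth]$ and belongs to $\sqintspace$, it is also in $\banachfunspace{1}$; hence the Fourier inversion formula $\fun(\time)=\int_{-\infty}^\infty \ft\fun(\freq)e^{\iu 2\pi\time\freq}d\freq$ holds pointwise, $\fun$ is continuous, and the samples $\fun(k\sinterval)$ are well defined. Moreover $\{\fun(k\sinterval)\}_{k\in\integers}\in\banachseqspace{2}$, so the \dtftac $\ftd\fun$ is a well-defined $\sqintspace$ function on the period $[-1/2,1/2]$.

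Next I would form the sample sequence $\fun[k]\define\fun(k\sinterval)$ and compute its \dtftac, using the Poisson summation formula (see \Ftref{ft:poisson} and \cite[Cor. 2.6]{stein71}) to obtain the periodization identity \fref{eq:dftsampled}, i.e. $\ftd\fun(\freq)=\frac{1}{\sinterval}\sum_{k} \ft\fun((\freq+k)/\sinterval)$. Under the hypothesis $1/\sinterval\ge 2\bandwidth$ each shifted copy $\ft\fun((\freq+k)/\sinterval)$ is supported on an interval of length $2\bandwidth\sinterval\le 1$, so the copies are pairwise non-overlapping (up to endpoints), and on $|\freq|\le 1/2$ the sum collapses to the single term $k=0$. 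Multiplying by the ideal lowpass response $\sinterval\LPfilterf(\freq)$ of \fref{eq:LPfilterf} therefore recovers $\ft\fun(\freq/\sinterval)$ exactly. This already establishes the uniqueness claim: the samples determine $\ftd\fun$, which determines $\ft\fun$, which determines $\fun$.

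Finally I would substitute this expression for $\ft\fun$ into the inversion integral and interchange summation and integration; this is justified because $\LPfilterf$ has compact support and $\{\fun[k]\}\in\banachseqspace{2}$, so the symmetric partial sums converge to $\ftd\fun$ in $\sqintspace$ over the period and pass to the limit under the integral on this finite-measure set. Evaluating the remaining integral of $\LPfilterf(\freq\sinterval)e^{\iu 2\pi\freq\sinterval(\time/\sinterval-k)}$ gives $\LPfiltert(\time/\sinterval-k)$, and computing the inverse Fourier transform of the rectangle \fref{eq:LPfilterf} yields $\LPfiltert(\time/\sinterval-k)=2\bandwidth\sinterval\,\sinc(2\bandwidth(\time-k\sinterval))$, which is \fref{eq:samplingtheorem}. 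The two points to state with care are the application of Poisson summation to a general bandlimited $\sqintspace$ signal and the exchange of sum and integral in the reconstruction step; I expect the latter to be the main obstacle, since it is what turns the formal manipulation into a genuine $\sqintspace$ identity and hence, by continuity of both sides, a pointwise one.
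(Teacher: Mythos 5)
Your proposal follows essentially the same route as the paper: Poisson summation to obtain the periodized spectrum \fref{eq:dftsampled}, the non-overlap observation for $1/\sinterval\ge 2\bandwidth$, recovery of $\ft\fun$ by ideal lowpass filtering, and inversion with an interchange of sum and integral leading to \fref{eq:samplingtheorem}. The extra care you take about $\ft\fun\in\banachfunspace{1}$, the validity of Poisson summation, and the justification of the sum--integral exchange only tightens steps the paper carries out formally.
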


\subsection{Sampling Theorem as a Frame Expansion}
\label{sec:samplingframes}
We shall next show how the representation~\fref{eq:samplingtheorem} can be interpreted as a frame expansion.
The samples~$\fun(k\sinterval)$ can be written as the inner product of the signal $\fun(\time)$ with the functions 
\begin{equation}
	\label{eq:sincframe}
	\analframeel_k(\time)=2\bandwidth \sinc(2\bandwidth(\time-k\sinterval)),\quad k\in\integers.
\end{equation} 
Indeed, using the fact that the signal $\fun(\time)$ is band-limited to $\bandwidth\Hz$, we get
\begin{equation*}
	\fun(k\sinterval)=\int_{-\bandwidth}^\bandwidth \ft\fun(\freq) e^{\iu2\pi k \freq \sinterval } d\freq=\inner{\ft\fun}{\ft\analframeel_k},
\end{equation*}
where
\begin{equation*}
	\ft\analframeel_k(\freq)=\begin{cases}e^{-\iu2\pi k \freq \sinterval },\, \abs{\freq}\le\bandwidth \\0,\ \text{otherwise}\end{cases}
\end{equation*}
is the Fourier transform of $\analframeel_k(\time)$. We can thus rewrite \fref{eq:samplingtheorem} as
\begin{equation*}
	\fun(\time)= \sinterval \sum_{k=-\infty}^\infty \inner{\fun}{\analframeel_k}  \analframeel_k(\time).
\end{equation*}
Therefore, the interpolation of an analog signal from its samples $\{\fun(k\sinterval)\}_{k\in\integers}$ can be interpreted as the reconstruction of $\fun(\time)$ from its expansion coefficients $\fun(k\sinterval)=\inner{\fun}{\analframeel_k}$ in the
function set $\{\analframeel_k(\time)\}_{k\in\integers}$. 
We shall next prove that $\{\analframeel_k(\time)\}_{k\in\integers}$ is a frame for the space $\sqintspaceBL$. Simply note that for  $\fun(\time)\in\sqintspaceBL$, we have
\begin{equation*}
\vecnorm{\fun}^2=\inner{\fun}{\fun}=\inner{\sinterval \sum_{k=-\infty}^\infty \inner{\fun}{\analframeel_k}  \analframeel_k(\time)}{\fun}=\sinterval\sum_{k=-\infty}^\infty \abs{\inner{\fun}{\analframeel_k}}^2
\end{equation*}
and therefore
\begin{equation*}
	\frac{1}{\sinterval}\vecnorm{\fun}^2=\sum_{k=-\infty}^\infty \abs{\inner{\fun}{\analframeel_k}}^2.
\end{equation*}
This shows that $\{\analframeel_k(\time)\}_{k\in\integers}$ is, in fact, a tight frame for $\sqintspaceBL$ with frame bound $\frameA=1/\sinterval.$ 
We emphasize that the frame is tight irrespective of the sampling rate (of course, as long as $1/\sinterval>2\bandwidth$).

The analysis operator corresponding to this frame is given by $\analop:\sqintspaceBL\to\hilseqspace$ as
\begin{equation}
	\label{eq:analopsampling}
	\analop:\fun\to\{\inner{\fun}{\analframeel_k}\}_{k\in\integers},
\end{equation}
i.e., \analop maps the signal $\fun(\time)$ to the sequence of samples $\{\fun(k\sinterval)\}_{k\in\integers}$.

The action of the adjoint of the analysis operator $\ad\analop:\hilseqspace\to\sqintspaceBL$ is to perform interpolation according to 
\begin{equation*}
	\ad\analop:\{\coef_k\}_{k\in\integers}\to \sum_{k=-\infty}^\infty \coef_k \analframeel_k.
\end{equation*}
The frame  operator $\frameop:\sqintspaceBL\to\sqintspaceBL$ is given by $\frameop=\ad\analop\analop$ and acts as follows
\begin{equation*}
	\frameop:\fun(\time)\to \sum_{k=-\infty}^\infty \inner{\fun}{\analframeel_k} \analframeel_k(\time).
\end{equation*}
Since $\{\analframeel_k(\time)\}_{k\in\integers}$ is a tight frame for $\sqintspaceBL$ with frame bound $\frameA=1/\sinterval$, as already shown, it follows that $\frameop=(1/\sinterval)\iop_{\sqintspaceBL}$.

The canonical dual frame can be computed easily by applying the inverse of the frame operator to the frame functions $\{\analframeel_k(\time)\}_{k\in\integers}$ according to
\begin{equation*}
	\synthframeel_k(\time)=\frameop^{-1}	\analframeel_k(\time)=\sinterval\iop_{\sqintspaceBL}	\analframeel_k(\time)=\sinterval\analframeel_k(\time),\quad k\in\integers.
\end{equation*}

Recall that exact frames have 
a minimality property in the following sense: If we remove anyone element from an exact frame, the resulting set will be incomplete. In the case of sampling, we have an analogous situation: In the proof of the sampling theorem we saw that if we sample at a rate smaller than the \emph{critical sampling rate} $1/\sinterval=2\bandwidth$, we cannot recover the signal $\fun(\time)$ from its samples~$\{\fun(k\sinterval)\}_{k\in\integers}$. In other words, the set $\{\analframeel_k(\time)\}_{k\in\integers}$
in \fref{eq:sincframe} is \emph{not} complete for $\sqintspaceBL$ when $1/\sinterval<2\bandwidth$. This suggests that critical sampling $1/\sinterval=2\bandwidth$ could implement an exact frame decomposition. We show now that this is, indeed, the case.
Simply note that
\begin{equation*}
\inner{\analframeel_k}{\synthframeel_k}=\sinterval \inner{\analframeel_k}{\analframeel_k}=\sinterval\vecnorm{\analframeel_k}^2=\sinterval\vecnorm{\ft\analframeel_k}^2=2\bandwidth\sinterval,\quad \text{for all}\ k\in\integers.
\end{equation*}
For critical sampling $2\bandwidth\sinterval=1$ and, hence, $\inner{\analframeel_k}{\synthframeel_k}=1$, for all $k\in\integers$. \fref{thm:exactframecond} therefore allows us to conclude that $\{\analframeel_k(\time)\}_{k\in\integers}$ is an exact frame for $\sqintspaceBL$.

Next, we show that $\{\analframeel_k(\time)\}_{k\in\integers}$ is not only an exact frame, but, when properly normalized, even an \onbac for $\sqintspaceBL$, a fact well-known in sampling theory. To this end, we first renormalize the frame functions $\analframeel_k(\time)$  according to
\begin{equation*}
\analframeel'_k(\time)=\sqrt{\sinterval}\analframeel_k(\time)
\end{equation*}
so that
\begin{equation*}
\fun(\time)=\sum_{k=-\infty}^\infty \inner{\fun}{\analframeel'_k}\analframeel'_k(\time).
\end{equation*}
We see that $\{\analframeel'_k(\time)\}_{k\in\integers}$ is a tight frame for $\sqintspaceBL$ with $\frameA=1$. 
Moreover, we have
\begin{equation*}
	\vecnorm{\analframeel'_k}^2=\sinterval\vecnorm{\analframeel_k}^2=2\bandwidth\sinterval.
\end{equation*}
Thus, in the case of critical sampling,  $\vecnorm{\analframeel'_k}^2=1$, for all $k\in\integers$, and \fref{thm:tightframebasis} allows us to conclude that $\{\analframeel'_k(\time)\}_{k\in\integers}$ is an \onbac for $\sqintspaceBL$.

In contrast to  exact frames, inexact frames are redundant, in the sense that there is at least one element  that can be removed with the resulting set still being complete.
The situation is similar in the \emph{oversampled} case, i.e., when the sampling rate satisfies $1/\sinterval>2\bandwidth$. In this case, we collect more samples than actually needed for perfect reconstruction of $\fun(\time)$ from its samples. This suggests that $\{\analframeel_k(\time)\}_{k\in\integers}$  could be  an inexact frame for $\sqintspaceBL$ in the oversampled case. Indeed, according to \fref{thm:exactframecond} the condition 
\begin{equation}
	\label{eq:smone}
\inner{\analframeel_m}{\synthframeel_m}=2\bandwidth\sinterval<1,\quad \text{for all}\ m\in\integers,
\end{equation}
implies that the frame $\{\analframeel_k(\time)\}_{k\in\integers}$ is inexact for $1/\sinterval>2\bandwidth$. In fact, as can be seen from the proof of \fref{thm:exactframecond}, \fref{eq:smone} guarantees even more: for \emph{every} $m\in\integers$, the set $\{\analframeel_k(\time)\}_{k\ne m}$ is complete for~$\sqintspaceBL$. Hence,
the removal of \emph{any} sample $\fun(m\sinterval)$ from the set of samples $\{\fun(k\sinterval)\}_{k\in\integers}$ still leaves us with a frame decomposition so that $\fun(t)$ can, in theory, be recovered from the samples $\{\fun(k\sinterval)\}_{k\ne m}$. The resulting frame $\{\analframeel_k(\time)\}_{k\ne m}$ will, however, no longer be tight, which makes the  computation of the canonical dual frame  complicated, in general.

\subsection{Design Freedom in Oversampled A/D Conversion}
\label{sec:designfreedom}
In the critically sampled case, $1/\sinterval= 2\bandwidth$, the ideal lowpass filter of bandwidth $\bandwidth \sinterval$ with the transfer function specified in~\fref{eq:LPfilterf} is the only filter that provides perfect reconstruction of the spectrum~$\ft\fun(\freq)$ of~$\fun(\time)$ according to~\fref{eq:scaledspec} (see~\Figref{fig:filtersharp}). In the oversampled case, there is, in general, an infinite number of reconstruction filters that provide perfect reconstruction. The only requirement the reconstruction filter has to satisfy is that its transfer function be constant within the frequency range $-\bandwidth \sinterval\le\freq\le\bandwidth \sinterval$ 
(see~\Figref{fig:filternonsharpe}). Therefore, in the oversampled case one has more freedom in designing the reconstruction filter.  In \adac converter practice this design freedom is exploited to design reconstruction filters with desirable filter characteristics, like, e.g., rolloff in the transfer function.  
\begin{figure}[t]
	\centering
	\includegraphics[]{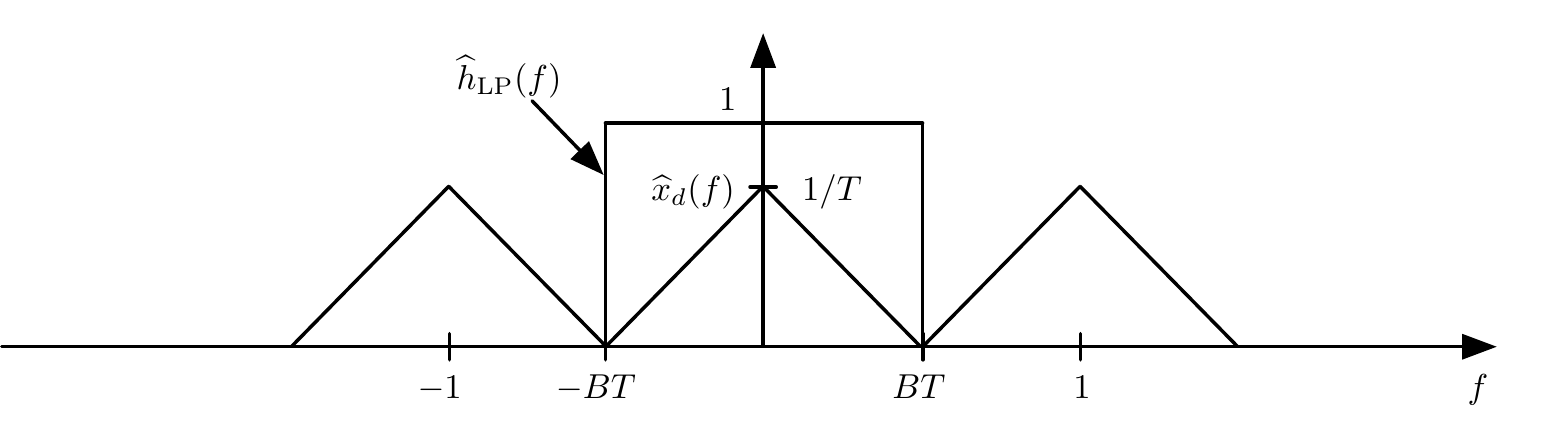}
	\caption{Reconstruction filter in the critically sampled case.}
	\label{fig:filtersharp}
\end{figure}

Specifically, repeating the steps leading from \fref{eq:scaledspec} to \fref{eq:reconstrLPfiltert}, we see that
\begin{equation}
	\label{eq:generalreconst}
	\fun(\time)
	=\sum_{k=-\infty}^\infty \fun[k]  \arbfiltert\lefto(\frac{\time}{\sinterval}-k\right),
\end{equation}
where the Fourier transform of $\arbfiltert(\time)$ is given by
\begin{equation}
	\label{eq:arbfilter}
\ft \arbfiltert(\freq)=\begin{cases} 1,\quad &\abs{\freq}\leq \bandwidth\sinterval \\ \arb(\freq), &\bandwidth\sinterval<\abs{\freq}\le \frac{1}{2} \\0,&\abs{f}>\frac{1}{2}\end{cases}.
\end{equation}
Here and in what follows $\arb(\cdot)$ denotes an arbitrary bounded function.
\begin{figure}[t]
	\centering
	\includegraphics[]{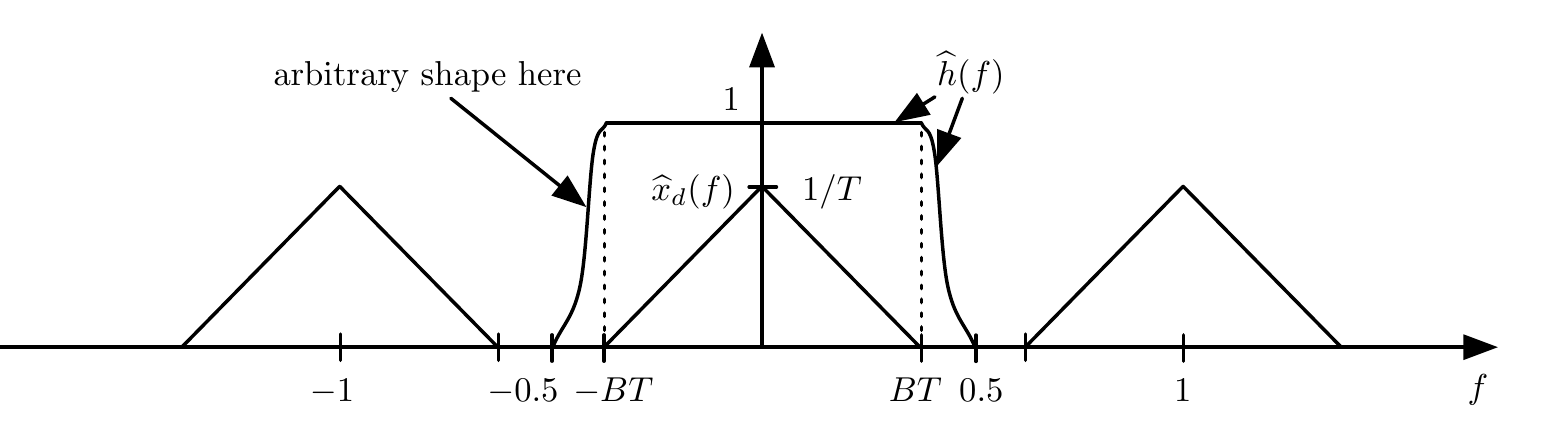}
	\caption{Freedom in the design of the reconstruction filter.}
	\label{fig:filternonsharpe}
\end{figure}%
In other words, every set $\{\arbfiltert(\time/\sinterval-k)\}_{k\in\integers}$ with the Fourier transform of $\arbfiltert(\time)$ satisfying~\fref{eq:arbfilter} is a valid dual frame for the frame 
$\{\analframeel_k(\time)=2\bandwidth \sinc(2\bandwidth(t-k\sinterval))\}_{k\in\integers}$. Obviously, there are infinitely many dual frames in the oversampled case.

\begin{figure}[t]
		\centering
		\includegraphics[]{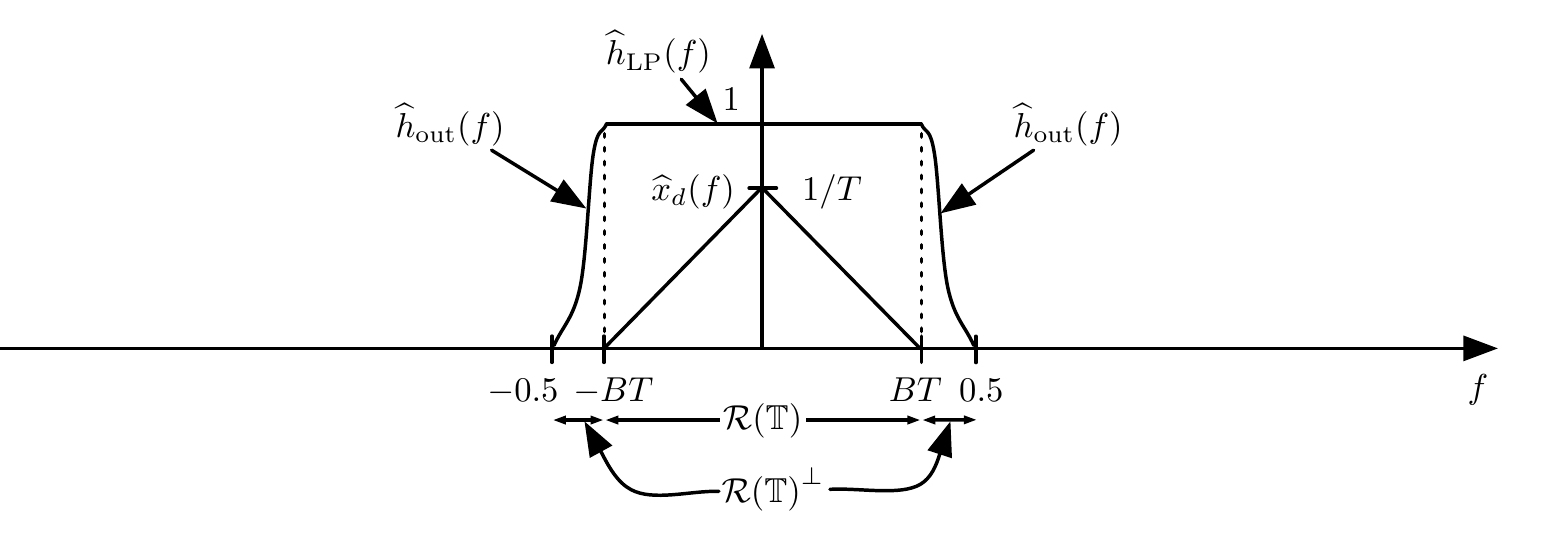}
		\caption{The reconstruction filter as a parametrized left-inverse of the analysis operator.}
		\label{fig:leftinvsamplingdecomp}
\end{figure}
We next show how the freedom in  the design of the reconstruction filter with transfer function specified in~\fref{eq:arbfilter} can be interpreted in terms of the freedom in choosing the left-inverse $\opL$ of the analysis operator $\analop$ as discussed in \fref{sec:signalexpansions}. Recall the parametrization~\fref{eq:generalliopframes1} of all left-inverses of the operator $\analop$:
\begin{equation}
	\label{eq:leftinvsampling}
		\opL=\ad\analdualop \pop +\opM(\iop_{\hilseqspace}-\pop),
\end{equation}
where $\opM:\hilseqspace\to\hilspace$ is an arbitrary bounded linear operator and $\pop:\sqsumspace\to\sqsumspace$ is the orthogonal projection operator onto the range space of $\analop$.
In~\fref{eq:dftsampled} we saw that  the \dtftac\footnote{The \dtftac is a periodic function with period one. From here on, we consider  the \dtftac as a function supported on its fundamental period $[-1/2,1/2]$.} of the sequence $\bigl\{\fun[k]= \fun(k\sinterval)\bigr\}_{k\in\integers}$ is compactly supported on the frequency interval $[-\bandwidth\sinterval,\bandwidth\sinterval]$ (see~\Figref{fig:leftinvsamplingdecomp}). In other words, the range space of the analysis operator \analop defined in \fref{eq:analopsampling} is the space of $\hilseqspace$-sequences with \dtftac supported on the interval  
$[-\bandwidth\sinterval,\bandwidth\sinterval]$  (see~\Figref{fig:leftinvsamplingdecomp}). 
It is left as an exercise to the reader to verify (see~\fref{xca:ocrange}), using Parseval's theorem,\footnote{\label{ft:parseval} Let $\{a_k\}_{k\in\integers}, \{b_k\}_{k\in\integers}\in\hilseqspace$ with \dtftac  $\ft a(\freq)=\sum_{k=-\infty}^\infty a_k e^{-\iu  2\pi  k \freq}$ and $\ft b(\freq)=\sum_{k=-\infty}^\infty b_k e^{-\iu  2\pi  k \freq} $, respectively. Parseval's theorem states that $\sum_{k=-\infty}^\infty a_k \conj{b}_k=\int_{-1/2}^{1/2} \ft a(\freq)\conj{\ft b}(\freq) d\freq$. In particular, $\sum_{k=-\infty}^\infty \abs{a_k}^2=\int_{-1/2}^{1/2} \abs{\ft a(\freq)}^2 d\freq$.} that the orthogonal complement of the range space of \analop
is the space of  $\hilseqspace$-sequences with \dtftac supported on the set $[-1/2,-\bandwidth\sinterval]\cup[\bandwidth\sinterval,1/2]$ (see~\Figref{fig:leftinvsamplingdecomp}). 
Thus, in the case of oversampled \adac conversion, the operator $\pop:\hilseqspace\to\hilseqspace$ is the orthogonal projection operator onto the subspace of $\hilseqspace$-sequences with \dtftac supported on the interval  
$[-\bandwidth\sinterval,\bandwidth\sinterval]$; the  operator $(\iop_{\hilseqspace}-\pop):\hilseqspace\to\hilseqspace$ is the orthogonal projection operator onto the subspace of  $\hilseqspace$-sequences with \dtftac supported on the set $[-1/2,-\bandwidth\sinterval]\cup[\bandwidth\sinterval,1/2]$. 

To see the parallels between~\fref{eq:generalreconst} and~\fref{eq:leftinvsampling}, let us decompose~$\arbfiltert\lefto(\time\right)$ as follows (see~\Figref{fig:leftinvsamplingdecomp})
\begin{equation}
	\label{eq:filterdecomp}
	\arbfiltert\lefto(\time\right) 
	=\LPfiltert\lefto(\time\right) + \outfiltert\lefto(\time\right),
\end{equation}
where the Fourier transform of $\LPfiltert(\time)$ is given by~\fref{eq:LPfilterf} and  the Fourier transform of $\outfiltert(\time)$ is
\begin{equation}
	\label{eq:zeropartfilter}
\outfilterf(\freq)=\begin{cases}  \arb(\freq), &\bandwidth\sinterval\le\abs{\freq}\le \frac{1}{2} \\0,\ \text{otherwise.}\end{cases}
\end{equation}
Now it is clear, and it is left to the reader to verify formally (see~\fref{xca:opA}), that the operator
$\opA:\hilseqspace\to\sqintspaceBL$ defined as  
\begin{equation}
	\label{eq:opAdfn}
	\opA:\{\coef_k\}_{k\in\integers}\to \sum_{k=-\infty}^\infty \coef_k \LPfiltert\lefto(\frac{\time}{\sinterval}-k\right)
\end{equation}
acts by first projecting the sequence $\{\coef_k\}_{k\in\integers}$ onto the subspace of  $\hilseqspace$-sequences with \dtftac supported on the interval  
$[-\bandwidth\sinterval,\bandwidth\sinterval]$ and then performs interpolation using the canonical dual frame elements $\synthframeel_k(\time)=\LPfiltert\lefto(\time/\sinterval-k\right)$. In other words $\opA=\ad\analdualop \pop$. 
Similarly, it is left to the reader to verify formally (see~\fref{xca:opA}), that the operator
$\opB:\hilseqspace\to\sqintspaceBL$ defined as
\begin{equation}
	\label{eq:opBdfn}
	\opB:\{\coef_k\}_{k\in\integers}\to \sum_{k=-\infty}^\infty \coef_k \outfiltert\lefto(\frac{\time}{\sinterval}-k\right)
\end{equation}
can be written as $\opB=\opM(\iop_{\hilseqspace}-\pop)$. Here, $(\iop_{\hilseqspace}-\pop):\sqsumspace\to \sqsumspace$ is the projection operator onto the subspace of  $\hilseqspace$-sequences with \dtftac supported on the set $[-1/2,-\bandwidth\sinterval]\cup[\bandwidth\sinterval,1/2]$;
the operator $\opM:\hilseqspace\to\hilfunspace$ is defined as
\begin{equation}
	\label{eq:opBdfn1}
	\opM:\{\coef_k\}_{k\in\integers}\to \sum_{k=-\infty}^\infty \coef_k h_M\lefto(\frac{\time}{\sinterval}-k\right)
\end{equation}
with the Fourier transform of $h_M(\time)$ given by
\begin{equation}
	\label{eq:zeropartfilter1}
\ft h_M(\freq)=\begin{cases}  \arb_2(\freq), &-\frac{1}{2}\le\abs{\freq}\le \frac{1}{2} \\0, &\text{otherwise},\end{cases}
\end{equation}
where $\arb_2(\freq)$ is an arbitrary bounded function that equals $\arb(\freq)$ for $\bandwidth\sinterval\le\abs{\freq}\le \frac{1}{2}$.
To summarize, we note that the operator $\opB$ corresponds to the second term on the right-hand side of~\fref{eq:leftinvsampling}. 

We can therefore write the decomposition~\fref{eq:generalreconst} as
\begin{align*}
	\fun(\time)
	&=\sum_{k=-\infty}^\infty \fun[k]  \arbfiltert\lefto(\frac{\time}{\sinterval}-k\right)\\
	&=\underbrace{\sum_{k=-\infty}^\infty \fun[k]  \LPfiltert\lefto(\frac{\time}{\sinterval}-k\right)}_{\ad\analdualop \pop\analop\fun(\time)} + \underbrace{\sum_{k=-\infty}^\infty \fun[k]\outfiltert\lefto(\frac{\time}{\sinterval}-k\right)}_{\opM(\iop_{\hilseqspace}-\pop)\analop\fun(\time)}\\
	&=\opL\analop\fun(\time).
\end{align*}

\subsection{Noise Reduction in Oversampled A/D Conversion}
\label{sec:noisead}
Consider again the bandlimited signal $\fun(\time)\in\sqintspaceBL$. Assume, as before, that the signal is sampled at a rate $1/\sinterval\ge 2\bandwidth$. 
Now assume that the corresponding samples $\fun[k]= \fun(k\sinterval),\, k\in\integers,$ are subject to noise, i.e., we observe
\begin{equation*}
	\fun'[k]= \fun[k]+\qnoise[k],\ k\in\integers,
\end{equation*} 
where the $\qnoise[k]$ are independent identically distributed zero-mean random variables, with variance $\Exop{\abs{\qnoise[k]}^2}=\qnoisevar$.
Assume that reconstruction is performed from the noisy samples $\fun'[k],\ k\in\integers,$ using the ideal lowpass filter with transfer function $\LPfilterf(\freq)$ of bandwidth $\bandwidth\sinterval$ specified in~\fref{eq:LPfilterf}, i.e., we reconstruct using the canonical dual frame according to
\begin{equation*}
	\fun'(\time)
	=\sum_{k=-\infty}^\infty \fun'[k]  \LPfiltert\lefto(\frac{\time}{\sinterval}-k\right).
\end{equation*}
Obviously, the presence of noise precludes perfect reconstruction. It is, however, interesting to assess the impact of oversampling on the variance of the reconstruction error defined as 
\begin{equation}
	\label{eq:noisevar}
	\mseo\define \Exop_\qnoise{\abs{\fun(\time)-\fun'(\time)}^2},
\end{equation}
where the expectation is with respect to the random variables $\qnoise[k],\ k\in\integers,$ and the right-hand side of \fref{eq:noisevar} does not depend on $\time$, as we shall see below.
 If we decompose $\fun(\time)$ as in~\fref{eq:reconstrLPfiltert}, 
we  see that
\begin{align}
	\label{eq:mseo}
	\mseo&= \Exop_\qnoise{\abs{\fun(\time)-\fun'(\time)}^2} \\
	&= \Exop_\qnoise{\abs{\sum_{k=-\infty}^\infty \qnoise[k]  \LPfiltert\lefto(\frac{\time}{\sinterval}-k\right)}^2} \nonumber\\
		&= \sum_{k=-\infty}^\infty \sum_{k'=-\infty}^\infty\Exop_\qnoise\{\qnoise[k]\conj\qnoise[k']\} \,\LPfiltert\lefto(\frac{\time}{\sinterval}-k\right)\conj{\LPfiltert}\lefto(\frac{\time}{\sinterval}-k'\right) \nonumber\\
		\label{eq:mseolast1}	&= \qnoisevar \sum_{k=-\infty}^\infty \abs{\LPfiltert\lefto(\frac{\time}{\sinterval}-k\right)}^2.
\end{align}
Next applying the Poisson summation formula (as stated in \Ftref{ft:poisson}) to the function $l(\time')\define \LPfiltert\lefto(\frac{\time}{\sinterval}-\time'\right) e^{-2\pi \iu \time'\freq}$ with Fourier transform $\ft l(\freq')= \LPfilterf(-\freq-\freq')e^{-2\pi \iu (\time/\sinterval) (\freq+\freq')}$,  we have
\begin{equation}
	\label{eq:psform}
	\sum_{k=-\infty}^\infty \LPfiltert\lefto(\frac{\time}{\sinterval}-k\right) e^{-2\pi \iu k\freq}=\sum_{k=-\infty}^\infty l(k)=\sum_{k=-\infty}^\infty \ft l(k)= \sum_{k=-\infty}^\infty \LPfilterf(-\freq-k)e^{-2\pi \iu (\time/\sinterval) (\freq+k)}.
\end{equation}
Since $\LPfilterf(\freq)$ is zero outside the interval $-1/2\le \freq\le 1/2$, it follows that
\begin{equation}
	\label{eq:psform2}
	\sum_{k=-\infty}^\infty \LPfilterf(-\freq-k)e^{-2\pi \iu (\time/\sinterval) (\freq+k)}=\LPfilterf(-\freq)e^{-2\pi \iu (\time/\sinterval) \freq},\quad \text{for}\ \freq\in[-1/2,1/2].
\end{equation}
We conclude from~\fref{eq:psform} and~\fref{eq:psform2} that the \dtftac of the sequence $\bigl\{a_k\define\LPfiltert\lefto(\time/\sinterval-k\right)\bigr\}_{k\in\integers}$ is given (in the fundamental interval $\freq\in[-1/2,1/2]$) by $\LPfilterf(-\freq)e^{-2\pi \iu (\time/\sinterval) \freq}$ and hence we can apply Parseval's theorem (as stated in \Ftref{ft:parseval}) and rewrite~\fref{eq:mseolast1} according to 
\begin{align}
	\mseo&= \qnoisevar \sum_{k=-\infty}^\infty \abs{\LPfiltert\lefto(\frac{\time}{\sinterval}-k\right)}^2\nonumber\\
	&=\qnoisevar\int_{-1/2}^{1/2} \abs{\LPfilterf(-\freq)e^{-2\pi \iu (\time/\sinterval) \freq}}^2 d \freq\nonumber\\
	&=\qnoisevar\int_{-1/2}^{1/2} \abs{\LPfilterf(\freq)}^2 d \freq\nonumber\\
	&= \qnoisevar 2 \bandwidth \sinterval. \label{eq:mseolast}
\end{align}
We see that the average mean squared reconstruction error is inversely proportional to the oversampling factor $1/(2 \bandwidth \sinterval)$. Therefore, each doubling of the oversampling factor decreases the mean squared error by  3\dB.

Consider now reconstruction performed using a general filter that provides perfect reconstruction in the noiseless case. Specifically, we have
\begin{equation*}
	\fun'(\time)
	=\sum_{k=-\infty}^\infty \fun'[k]  \arbfiltert\lefto(\frac{\time}{\sinterval}-k\right),
\end{equation*}
where $\arbfiltert\lefto(\time\right)$ is given by~\fref{eq:filterdecomp}.
 In this case, the average mean squared reconstruction error
can be computed repeating the steps leading from~\fref{eq:mseo} to~\fref{eq:mseolast} and is given by
\begin{equation}
	\label{eq:mseononidz}
	\mseo=\qnoisevar\int_{-1/2}^{1/2} \abs{\ft\arbfiltert(\freq)}^2 d \freq 
\end{equation}
where $\ft\arbfiltert(\freq)$ is the Fourier transform of $\arbfiltert(\time)$ and is specified in~\fref{eq:arbfilter}. 
Using~\fref{eq:filterdecomp}, we can now decompose $\mseo$ in~\fref{eq:mseononidz} into two terms according to
\begin{equation}
	\label{eq:mseononid}
	\mseo=\qnoisevar\underbrace{\int_{-\bandwidth\sinterval}^{\bandwidth\sinterval} \abs{\LPfilterf(\freq)}^2 d \freq}_{2 \bandwidth \sinterval}+\qnoisevar\int_{\bandwidth\sinterval\le \abs{\freq}\le 1/2} \abs{\outfilterf(\freq)}^2 d \freq.
\end{equation}
We see that two components contribute to the reconstruction error. 
Comparing~\fref{eq:mseononid} to~\fref{eq:mseolast}, we conclude that the first term in~\fref{eq:mseononid} corresponds to the error due to noise in the signal-band  $ \abs{\freq}\le \bandwidth\sinterval$ picked up by the ideal lowpass filter with transfer function $\LPfilterf(\freq)$. The second term in~\fref{eq:mseononid} is due to the fact that a generalized inverse passes some of the noise in the out-of-band region $\bandwidth\sinterval\le \abs{\freq}\le 1/2$.
The amount of additional noise in the reconstructed signal is determined by the bandwidth and the shape of the reconstruction filter's transfer function in the out-of-band region. In this sense, there exists a tradeoff between noise reduction and design freedom in oversampled \adac conversion. Practically desirable (or realizable) reconstruction filters (i.e., filters with rolloff) lead to additional reconstruction error.

\section{Important Classes of Frames}
There are two important classes of structured signal expansions that have found widespread use in practical applications, namely Weyl-Heisenberg (or Gabor) expansions and affine
(or wavelet) expansions.
Weyl-Heisenberg expansions provide a decomposition into time-shifted and modulated versions of a ``window function'' $g(t)$.
Wavelet expansions realize decompositions into time-shifted and dilated versions of a mother wavelet $g(t)$. 
Thanks to the strong structural properties of  Weyl-Heisenberg and wavelet expansions, there are efficient algorithms for applying the corresponding analysis and synthesis operators. 
Weyl-Heisenberg and wavelet expansions have been successfully used in signal detection, image representation, object recognition, and wireless communications.
We shall next show that these signal expansions can be cast into the language of frame theory. For a 
detailed analysis of these classes of frames, we refer the interested reader to~\cite{daubechies92}.

\subsection{Weyl-Heisenberg Frames}
We start by defining a linear operator that realizes time-frequency shifts when applied to a given function.
\begin{dfn}
\label{dfn:weylop}
The Weyl operator $\weylop^{(T,F)}_{m,n}:\hilfunspace\to\hilfunspace$ is defined as
\begin{equation*} 
\weylop^{(T,F)}_{m,n}: \fun(\time)\to e^{\iu2\pi  nF\time}\fun(\time-mT), 
\end{equation*}
where $m,n \in\integers$, and $T>0$ and $F>0$ are fixed time and frequency shift parameters,
respectively.
\end{dfn}

Now consider some prototype (or window)  function  $\window(\time)\in\hilfunspace$. Fix the parameters~$T>0$ and~$F>0$. By shifting the window function $\window(\time)$ in time by integer multiples of~$T$ and in frequency by integer multiples of~$F$, we get a highly-structured set of functions according to
\begin{equation*}
	\window_{m,n}(\time)\define \weylop^{(T,F)}_{m,n}\window(\time)= e^{\iu2\pi  nF\time}\window(\time-mT),\quad m\in\integers,\ n\in\integers.
\end{equation*}
The set $\left\{\window_{m,n}(\time)=  e^{\iu2\pi  nF\time}\window(\time-mT)\right\}_{m\in\integers,\, n\in\integers}$ is referred to as a \emph{\whac set} and is denoted by $(\window,T,F)$. 	When the \whac set $(\window,T,F)$ is a frame for $\hilfunspace$, it is called a \emph{\whac frame} for \hilfunspace.

Whether or not a \whac set $(\window,T,F)$ is a frame for $\hilfunspace$ is, in general, difficult to answer. The answer depends on the window function $\window(\time)$ as well as on the shift parameters $T$ and $F$. Intuitively, if the parameters $T$ and $F$ are ``too large'' for a given window function $\window(\time)$, the \whac set $(\window,T,F)$ cannot be a  frame for $\hilfunspace$. This is because a \whac set $(\window,T,F)$ with ``large'' parameters $T$ and $F$ ``leaves holes in the time-frequency plane'' or equivalently in the Hilbert space $\hilfunspace$. Indeed, this intuition is correct and the following fundamental result formalizes it:
\begin{thm}[{\hspace{-0.06mm}\cite[Thm. 8.3.1]{christensen96}}]
\label{thm:negativeframe}
	Let $\window(\time)\in\hilfunspace$ and $T,F>0$ be given. Then the following holds:
	\begin{itemize}
		\item If $TF>1$, then $(\window,T,F)$ is \emph{not} a frame for \hilfunspace.
		\item If $(\window,T,F)$ is a frame for $\sqintspace$, then $(\window,T,F)$ is an exact frame if and only if $TF=1$.
	\end{itemize}
\end{thm}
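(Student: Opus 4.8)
The plan is to reduce the whole statement to a single quantitative fact about the \emph{canonical tight} (Parseval) Weyl--Heisenberg frame and then read off both bullets from the abstract machinery of \fref{sec:exactframes}. First I would observe that the frame operator $\frameop=\ad\analop\analop$ of a WH frame $(\window,T,F)$ commutes with every lattice time--frequency shift: a direct computation gives $\weylop^{(T,F)}_{m,n}\window_{k,l}=c\,\window_{m+k,n+l}$ with $\abs c=1$, so $\weylop^{(T,F)}_{m,n}$ permutes the set $\{\window_{k,l}\}$ up to unimodular phases, which leaves $\frameop=\sum_{k,l}\inner{\cdot}{\window_{k,l}}\window_{k,l}$ (and hence $\frameop^{-1}$ and $\frameop^{-1/2}$) unchanged, i.e. $\frameop\weylop^{(T,F)}_{m,n}=\weylop^{(T,F)}_{m,n}\frameop$. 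Consequently the canonical tight frame $\{\frameop^{-1/2}\window_{m,n}\}$ (a Parseval frame, by the theorem stating that $\{\frameop^{-1/2}\analframeel_k\}$ is a tight frame with bound $\frameA=1$) equals $\{\weylop^{(T,F)}_{m,n}h\}$ up to phases, where $h\define\frameop^{-1/2}\window$; that is, $(h,T,F)$ is again a WH system and it is a Parseval frame.

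The quantitative input I need is the \emph{density inequality}: for any WH frame $(\window,T,F)$ with frame bounds $\frameA,\frameB$,
\begin{equation*}
\frameA\,TF\le\vecnorm{\window}^2\le\frameB\,TF.
\end{equation*}
Granting this, the Parseval window satisfies $\vecnorm{h}^2=TF$. Moreover $\vecnorm{h}^2=\inner{\frameop^{-1/2}\window}{\frameop^{-1/2}\window}=\inner{\frameop^{-1}\window}{\window}=\inner{\synthframeel_{0,0}}{\window_{0,0}}$, where $\synthframeel_{m,n}\define\frameop^{-1}\window_{m,n}$ are the canonical dual frame elements and $\window_{0,0}=\window$; by the commutation property $\inner{\synthframeel_{m,n}}{\window_{m,n}}=\inner{\frameop^{-1}\window}{\window}=TF$ for \emph{every} $(m,n)$. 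Now both bullets follow immediately. For the first: suppose $TF>1$ and, for contradiction, that $(\window,T,F)$ is a frame; applying the Parseval identity of $(h,T,F)$ to the vector $h$ gives $\vecnorm{h}^2=\sum_{m,n}\abs{\inner{h}{h_{m,n}}}^2\ge\abs{\inner{h}{h}}^2=\vecnorm{h}^4$, so $\vecnorm{h}^2\le1$, contradicting $\vecnorm{h}^2=TF>1$. For the second: if $(\window,T,F)$ is a frame, then by \fref{thm:exactframecond} it is exact if and only if $\inner{\synthframeel_{m,n}}{\window_{m,n}}=1$ for all $(m,n)$, and this quantity equals $TF$; hence exactness is equivalent to $TF=1$. (As a byproduct, when $TF=1$ the diagonal estimate forces all off-diagonal inner products $\inner{h}{h_{m,n}}$ to vanish, so $(h,T,F)$ is an orthonormal basis by \fref{thm:tightframebasis} --- the familiar fact that a critically sampled WH Riesz basis is, after normalization, an orthonormal WH basis.)

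\textbf{The main obstacle} is the density inequality itself; it is the only step that genuinely needs Fourier analysis rather than the frame calculus already at hand. I would derive it from Moyal's orthogonality relation for the short-time Fourier transform,
\begin{equation*}
\iint_{\reals^2}\abs{\inner{\fun}{U_{(\tau,\xi)}\window}}^2\,d\tau\,d\xi=\vecnorm{\fun}^2\vecnorm{\window}^2,\qquad U_{(\tau,\xi)}\fun(\time)\define e^{\iu2\pi\xi\time}\fun(\time-\tau),
\end{equation*}
which itself follows from Parseval's theorem applied twice, together with a tiling argument: fix $\fun_0\ne0$, apply the frame inequality to the norm-preserved vectors $U_{(\tau,\xi)}\fun_0$ for $(\tau,\xi)$ ranging over a fundamental domain $[0,T]\times[0,F]$ of the lattice, and integrate. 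Since $\abs{\inner{U_{(\tau,\xi)}\fun_0}{\window_{m,n}}}^2=\abs{\inner{\fun_0}{U_{(mT-\tau,\,nF-\xi)}\window}}^2$ and the points $(mT-\tau,\,nF-\xi)$ with $(m,n)\in\integers^2$, $(\tau,\xi)\in[0,T]\times[0,F]$ cover $\reals^2$ exactly once, Tonelli's theorem plus Moyal's formula turn the integrated sum into $\vecnorm{\fun_0}^2\vecnorm{\window}^2$, while the frame bounds pinch it between $\frameA\,TF\,\vecnorm{\fun_0}^2$ and $\frameB\,TF\,\vecnorm{\fun_0}^2$; dividing by $\vecnorm{\fun_0}^2$ gives the inequality. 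The only technical care needed is the interchange of summation and integration (justified by nonnegativity) and, if desired, proving Moyal's identity first on a dense subclass (e.g.\ Schwartz functions) and then extending by continuity.
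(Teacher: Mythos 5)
The paper itself gives no proof of this theorem --- it is quoted from Christensen's textbook --- so there is nothing in the text to compare your argument against; it has to stand on its own, and it does. Your structure is the standard one and each step is sound: (i) the commutation $\weylop^{(T,F)}_{m,n}\frameop=\frameop\weylop^{(T,F)}_{m,n}$, which is exactly the content of the paper's exercise on dual WH frames, so that the canonical Parseval system $(h,T,F)$ with $h=\frameop^{-1/2}\window$ is again a WH system; (ii) the density inequality $\frameA\,TF\le\vecnorm{\window}^2\le\frameB\,TF$ via Moyal's formula and the exact tiling of $\reals^2$ by the translates of $[0,T)\times[0,F)$, with Tonelli justified by nonnegativity; and (iii) the reduction of both bullets to the single number $\inner{\frameop^{-1}\window}{\window}=\vecnorm{h}^2=TF$, using the Parseval identity at $h$ for the first bullet and the paper's exactness criterion (\fref{thm:exactframecond}) for the second. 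Two points deserve to be made explicit rather than asserted: the passage from commutation of $\frameop$ to commutation of $\frameop^{-1/2}$ requires the uniqueness of the positive square root (conjugating $\frameop^{-1/2}$ by the unitary $\weylop^{(T,F)}_{m,n}$ produces another self-adjoint positive square root of $\frameop^{-1}$, hence the same one, by the paper's \fref{lem:possqrt}); and in the first bullet one should note that a frame window is necessarily nonzero, so $h\ne0$ and the division of $\vecnorm{h}^4\le\vecnorm{h}^2$ by $\vecnorm{h}^2$ is legitimate. The only genuinely analytic ingredient is the Moyal/tiling step, and your treatment of it is correct; everything else is frame calculus already available in the paper.
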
 
We see that $(\window,T,F)$ can be a frame for $\sqintspace$ only if $TF\le 1$, i.e., when the shift parameters~$T$ and~$F$ are such that the grid they induce in the time-frequency plane is sufficiently dense. Whether or not a \whac set $(\window,T,F)$ with $TF\le 1$ is a frame for $\hilfunspace$ depends on the window function $\window(\time)$ and on the  values of~$T$ and~$F$. There is an important special case where a simple answer can be given.

\begin{example}[Gaussian, {\cite[Thm. 8.6.1]{christensen96}}]
Let $T,F>0$ and take $\window(\time)=e^{-\time^2}$. Then the \whac set
\begin{equation*}
	\left\{\weylop^{(T,F)}_{m,n}\window(\time)\right\}_{m\in\integers,\, n\in\integers}
\end{equation*}
is a frame for \hilfunspace if and only if $TF<1$.
\end{example}

\subsection{Wavelets}
Both for wavelet frames and \whac frames we deal with function sets that are obtained by letting a special class of parametrized operators act on a fixed function. In the case of \whac frames 
the underlying operator realizes time and frequency shifts. In the case of wavelets, the generating operator realizes time-shifts and scaling. Specifically, we have the following definition.
\begin{dfn}
\label{dfn:waveletop}
The operator $\waveop^{(T,S)}_{m,n}:\hilfunspace\to\hilfunspace$ is defined as
\begin{equation*} 
\waveop^{(T,S)}_{m,n}: \fun(\time)\to S^{n/2} \fun(S^n\time-mT), 
\end{equation*}
where $m,n \in\integers$, and $T>0$ and $S>0$ are fixed time and scaling parameters,
respectively.
\end{dfn}
Now, just as in the case of \whac expansions, consider a prototype function (or mother wavelet)  $\window(\time)\in\hilfunspace$. Fix the parameters $T>0$ and $S>0$ and consider the set of functions
\begin{equation*}
	\window_{m,n}(\time)\define \waveop^{(T,S)}_{m,n}\window(\time)= S^{n/2} \window(S^n\time-mT) ,\quad m\in\integers,\ n\in\integers.
\end{equation*}
This set is referred to as a \emph{wavelet set}.
	When the wavelet set
		$\left\{\window_{m,n}(\time)=  S^{n/2} \window(S^n\time-mT) \right\}_{m\in\integers,\ n\in\integers}$
	with parameters $T,S>0$ is a frame for $\hilfunspace$, it is called a \emph{wavelet frame}.

Similar to the case of Weyl-Heisenberg sets it is hard to say, in general, whether a given wavelet set forms a frame for \hilfunspace or not. The answer depends on the window function $\window(\time)$ and on the parameters $T$ and $S$ and explicit results are known only in certain  cases.
We conclude this section by detailing such a case.
\begin{example}[Mexican hat, {\cite[Ex. 11.2.7]{christensen96}}]
	Take $S=2$ and consider the mother wavelet
	\begin{equation*}
		\window(\time)=\frac{2}{\sqrt{3}}\pi^{-1/4}(1-\time^2) e^{-\frac{1}{2}\time^2}.
	\end{equation*}
	Due to its shape, $\window(\time)$ is called the Mexican hat function. It turns out that for each $T<1.97$, the wavelet set 
	\begin{equation*}
		\left\{\waveop^{(T,S)}_{m,n}\window(\time)\right\}_{m\in\integers,\ n\in\integers}
	\end{equation*}
	is a frame for $\hilfunspace$~\cite[Ex. 11.2.7]{christensen96}.
\end{example}	
\section{Exercises}
\begin{xca}[Tight frames~\cite{mallat09}]
	{\ } 
\begin{enumerate}
\item
Prove that if $K\in \integers\setminus\{0\}$, then the set of vectors 
\begin{equation*}
	\left\{\vanalframeel_{k}=\tp{\left[1\ \ \cex{k /(K\dimension)}\ \ \cdots\ \ \cex{k (\dimension-1)/(K\dimension)}\right]}\right\}_{0\le k<K\dimension} 
\end{equation*}
is a tight frame for $\complexset^{\dimension}$. Compute the frame bound.
\item
Fix $T\in \reals\setminus\{0\}$ and  define for every $k\in\integers$
\begin{equation*}
	\analframeel_{k}(t)=\begin{cases}
		\cex{k t/T}, &t\in [0,T]\\
		0,&\text{otherwise}.
	\end{cases}
\end{equation*}
Prove that $\{\analframeel_{k}(t)\}_{k\in\integers}$ is a tight frame for $\hilfunspace([0,T])$, the space of square integrable functions supported on the interval $[0,T]$. Compute the frame bound.
\end{enumerate}
\end{xca}
\begin{xca}[\dftac as a signal expansion]
The \dftac of an $\dimension$-point signal $\fun[\dtime],\, \dtime=0,\ldots,\dimension-1,$ is defined as 
\begin{equation*}
	\ft{\fun}[\dfreq]=\frac{1}{\sqrt{\dimension}}\sum_{\dtime=0}^{\dimension-1} \fun[\dtime] \cexn{\frac{\dfreq}{\dimension}\dtime}.
\end{equation*}
Find the corresponding inverse transform and show that the \dftac can be interpreted as a frame expansion in $\complexset^{\dimension}$. Compute the frame bounds. Is the underlying frame special?
\end{xca}
\begin{xca}[Unitary transformation of a frame]
Let $\{\analframeel_{k}\}_{k\in \frameset}$ be a frame for the Hilbert space \hilspace with frame bounds $\frameA$ and $\frameB$. Let 
$\opU:\hilspace\to\hilspace$ be a unitary operator. 
Show that the set $\{\opU\analframeel_{k}\}_{k\in \frameset}$ is again a frame for \hilspace and compute the corresponding frame bounds.
\end{xca}
\begin{xca}[Redundancy of a frame]
Let $\{\vanalframeel_{k}\}_{k=1}^\framesize$ be a frame for $\complexset^\dimension$ with $\framesize>\dimension$. Assume that the frame vectors are normalized such that $\vecnorm{\vanalframeel_{k}}=1,\, k=1,\ldots,\framesize$. The ratio $\framesize/\dimension$ is called the redundancy of the frame.  
\begin{enumerate}
	\item
	Assume that $\{\vanalframeel_{k}\}_{k=1}^\framesize$ is a tight frame with frame bound $\frameA$. 
Show that $\frameA=\framesize/\dimension$.
    \item 
Now assume that $\frameA$ and $\frameB$ are the frame bounds of $\{\vanalframeel_{k}\}_{k=1}^\framesize$. Show that $\frameA\le \framesize/\dimension\le \frameB$. 
\end{enumerate}
\end{xca}

\begin{xca}[Frame bounds~\cite{christensen96}]
Prove that the upper and the lower frame bound are unrelated: In an arbitrary Hilbert space \hilspace find a set $\{\analframeel_{k}\}_{k\in \frameset}$ with an upper frame bound $\frameB<\infty$ but with the tightest lower frame bound $\frameA=0$; find another set $\{\analframeel_{k}\}_{k\in \frameset}$ with lower frame bound $\frameA>0$ but with the tightest upper frame bound $\frameB=\infty$. Is it possible to find corresponding examples in the finite-dimensional space $\complexset^\dimension$?
\end{xca}

\begin{xca}[Tight frame as an orthogonal projection of an \onbac]
Let $\{\vbasisel_{k}\}_{k=1}^\framesize$ be an \onbac for an $\framesize$-dimensional Hilbert space \hilspace. For~$\dimension<\framesize$, let $\hilspace'$ be an $\dimension$-dimensional subspace of \hilspace. Let $\opP:\hilspace\to\hilspace$ be the orthogonal projection onto $\hilspace'$. Show that $\{\opP\vbasisel_{k}\}_{k=1}^\framesize$ is a tight frame for  $\hilspace'$. Find the corresponding frame bound.
\end{xca}

\begin{xca}[]
	\label{xca:ocrange}
Consider the space of  $\hilseqspace$-sequences with \dtftac supported on the interval $[\freq_1,\freq_2]$ with $-1/2<\freq_1<\freq_2<1/2$. Show that the orthogonal complement (in $\hilseqspace$) of this space is the space of  $\hilseqspace$-sequences with \dtftac supported on the set $[-1/2,1/2]\setdiff[\freq_1,\freq_2]$.	
[Hint: Use the definition of the orthogonal complement and apply Parseval's theorem.]
\end{xca}
	
\begin{xca}[]
	\label{xca:opA}
Refer to \fref{sec:designfreedom} and consider the operator $\opA$ in \fref{eq:opAdfn} and the operator $\analop$ in~\fref{eq:analopsampling}. 
\begin{enumerate}
	\item 
	Consider a sequence $\{a_k\}_{k\in\integers}\in \rng(\analop)$ and show that $\opA\{a_k\}_{k\in\integers}=\ad\analdualop\{a_k\}_{k\in\integers}$.
	\item
	Consider a sequence $\{b_k\}_{k\in\integers}\in \ocomp{\rng(\analop)}$ and show that $\opA\{b_k\}_{k\in\integers}=0$.
	\item
	Using the fact that every sequence $\{c_k\}_{k\in\integers}$ can be decomposed as $\{c_k\}_{k\in\integers}=\{a_k\}_{k\in\integers}+\{b_k\}_{k\in\integers}$ with $\{a_k\}_{k\in\integers}\in \rng(\analop)$ and $\{b_k\}_{k\in\integers}\in \ocomp{\rng(\analop)}$, show that $\opA=\ad\analdualop\pop$
	where $\pop:\hilseqspace\to \hilseqspace$ is the orthogonal projection operator onto $\rng(\analop)$.
\end{enumerate}
[Hints: Use the fact that $\rng(\analop)$  is the space of  $\hilseqspace$-sequences with \dtftac supported on the interval  
	$[-\bandwidth\sinterval,\bandwidth\sinterval]$; use the characterization of $\ocomp{\rng(\analop)}$ developed in~\fref{xca:ocrange}; work in the \dtftac domain.]
\end{xca}
	
\begin{xca}[]
	Refer to \fref{sec:designfreedom} and use the ideas from~\fref{xca:opA} to show that the  operator~$\opB$ in \fref{eq:opBdfn} can be written as $\opB=\opM(\iop_{\hilseqspace}-\pop)$, where  $\pop:\hilseqspace\to \hilseqspace$ is the orthogonal projection operator onto~$\rng(\analop)$ with $\analop$ defined in~\fref{eq:analopsampling}; and $\opM:\hilseqspace\to \hilfunspace$ is the interpolation operator defined in~\fref{eq:opBdfn1}.
	\end{xca}

\begin{xca}[Weyl operator~\cite{daubechies86-05}]
Refer to~\fref{dfn:weylop} and show the following properties of the Weyl operator.
\begin{enumerate}
 	\item 
 	The following equality holds:
	\begin{equation*}
	\weylop_{m,n}^{(T,F)} \weylop_{k,l}^{(T,F)}=e^{-\iu 2\pi mlTF} \weylop_{m+k,n+l}^{(T,F)}.
	\end{equation*}
	\item
	The adjoint operator of $\weylop_{m,n}^{(T,F)}$ is given by
	\begin{equation*}
	\ad{\left(\weylop_{m,n}^{(T,F)}\right)}=e^{-\iu 2\pi m n T F}\weylop_{-m,-n}^{(T,F)}.
	\end{equation*}
	\item
	The Weyl operator is unitary on \hilfunspace, i.e.,
	\begin{equation*}
	\weylop_{m,n}^{(T,F)}\ad{\left(\weylop_{m,n}^{(T,F)}\right)}=\ad{\left(\weylop_{m,n}^{(T,F)}\right)}\weylop_{m,n}^{(T,F)}=\opI_{\hilfunspace}.
	\end{equation*}
 \end{enumerate}
\end{xca}  

\begin{xca}[Dual WH frame~\cite{daubechies90-09}]
\label{ex:dualWHframe}
Assume that the \whac set $\left\{\window_{m,n}(\time)=  \weylop_{m,n}^{(T,F)}\analframeel(\time)\right\}_{m\in\integers,\ n\in\integers}$ is a frame for \hilfunspace with frame operator \frameop.
\begin{enumerate}
	\item 
	Show that the frame operator \frameop and its inverse $\frameop^{-1}$ commute with the Weyl operators, i.e.,
	\begin{align*}
	&\weylop_{m,n}^{(T,F)}\frameop=\frameop \weylop_{m,n}^{(T,F)}\\
	&\weylop_{m,n}^{(T,F)}\frameop^{-1}=\frameop^{-1} \weylop_{m,n}^{(T,F)}
	\end{align*}
	for $m,n\in \integers.$
	\item
	Show that the minimal dual frame $\{\synthframeel_{m,n}(\time)=(\frameop^{-1} \window_{m,n})(\time)\}$ is a \whac frame with prototype function~$\synthframeel(\time)=(\frameop^{-1} \window)(\time)$, i.e., that
	\begin{equation*}
	\synthframeel_{m,n}(\time)=\weylop_{m,n}^{(T,F)} \synthframeel(\time).
	\end{equation*}
\end{enumerate}
\end{xca} 

\begin{xca}[WH frames in finite dimensions]
This is a Matlab exercise. The point of the exercise is to understand what the abstract concept of the frame operator and the dual frame mean in linear algebra terms.

Consider the space $\complexset^\dimension$. Take $\dimension$ to be a large number, such that your signals resemble continuous-time waveforms, but small enough such that your Matlab program works. Take a prototype vector $\vanalframeel=\tp{[\vanalframeelc[1] \cdots \vanalframeelc[\dimension]]}\in \complexset^\dimension$. You can choose, for example, the $fir1(.)$ function in Matlab, or discrete samples of the continuous-time Gaussian waveform $e^{-x^2/2}$.
Next, fix the shift parameters $T,K\in \naturals$ in such a way that $L\define \dimension/T\in\naturals$. Now define
\begin{equation*}
	\vanalframeelc_{k,l}[n]\define \vanalframeelc[(n-lT)\!\!\! \mod \dimension]\, e^{\iu 2\pi k n/K}, \ k=0,\ldots, K-1, \ l=0,\ldots, L-1,\ n=0,\ldots, \dimension-1
\end{equation*}
and construct a discrete-time \whac set according to
\begin{equation*}
\left\{\vanalframeel_{k,l}=\tp{\left[\vanalframeelc_{k,l}[0] \ \cdots \ \vanalframeelc_{k,l}[\dimension-1]\right]}\right\}_{k=0,\ldots, K-1,\, l=0,\ldots, L-1}.
\end{equation*}

\begin{enumerate}
	\item 
	Show that the analysis operator $\analop:\complexset^\dimension\to\complexset^{KL}$ can be viewed as a $(KL\times \dimension)$-dimensional matrix. Specify this matrix in terms of $\vanalframeel, T, K,$ and $\dimension$. 
	\item
	Show that the adjoint of the analysis operator $\ad\analop:\complexset^{KL}\to \complexset^\dimension$ can be viewed as an $(\dimension\times KL)$-dimensional matrix.  Specify this matrix in terms of $\vanalframeel, T, K,$ and $\dimension$.
	\item 
	Specify the matrix corresponding to the frame operator $\fop$ in terms of $\vanalframeel, T, K,$ and $\dimension$. Call this matrix \matS. Compute and store this $\dimension\times \dimension$ matrix in Matlab.
	\item
	Given the matrix $\matS$, check, if the \whac system $\left\{\vanalframeel_{k,l}\right\}_{k=0,\ldots, K-1,\, l=0,\ldots, L-1}$ you started from  is a frame. Explain, how you can verify this. 

	\item
	Prove that for  $K=\dimension$ and $T=1$ and for every prototype vector $\vanalframeel\ne \veczero$, the set $\left\{\vanalframeel_{k,l}\right\}_{k=0,\ldots, K-1,\, l=0,\ldots, L-1}$ is a frame for $\complexset^\dimension$.
	\item
	For the prototype vector $\vanalframeel$ you have chosen, find two pairs of shift parameters $(T_1, K_1)$ and $(T_2, K_2)$ such that $\left\{\vanalframeel_{k,l}\right\}_{k=0,\ldots, K-1,\, l=0,\ldots, L-1}$ is a frame for $T=T_1$ and $K=K_1$ and is not a frame for $T=T_2$ and $K=K_2$. 
	For the case where $\left\{\vanalframeel_{k,l}\right\}_{k=0,\ldots, K-1,\, l=0,\ldots, L-1}$ is a frame, compute the frame bounds.
	\item
	Compute the dual prototype vector $\tilde \vanalframeel=\tp{[\tilde\vanalframeelc[1] \cdots \tilde\vanalframeelc[\dimension]]}=\matS^{-1} \vanalframeel$.
	Show that the dual frame $\left\{\tilde\vanalframeel_{k,l}\right\}_{k=0,\ldots, K-1,\, l=0,\ldots, L-1}$ is given by time-frequency shifts of  $\tilde \vanalframeel$, i.e.,
	\begin{equation*}
	\left\{\tilde\vanalframeel_{k,l}=\tp{\left[\tilde\vanalframeelc_{k,l}[0] \ \cdots\ \tilde\vanalframeelc_{k,l}[\dimension-1]\right]}\right\}_{k=0,\ldots, K-1,\, l=0,\ldots, L-1}
	\end{equation*}
	with 
	\begin{equation*}
		\tilde\vanalframeelc_{k,l}[n]\define \tilde\vanalframeelc[(n-lT)\!\!\! \mod \dimension]\, e^{\iu 2\pi k n/K}, \ k=0,\ldots, K-1, \ l=0,\ldots, L-1,\ n=0,\ldots, \dimension-1.
	\end{equation*}
\end{enumerate}
\end{xca} 

\newpage 
\section*{Notation}

\begin{tabular}[t]{@{}>{$}p{0.22\linewidth}<{$}@{}>{\raggedright\arraybackslash}p{0.78\linewidth}@{}}

	\veca, \vecb, \dots & vectors\\
	\matA, \matB, \dots & matrices\\
	\tp{\veca},\; \tp{\matA} & transpose of the vector~\veca and the
 		matrix~\matA\\
	\conj{\sca},\; \conj{\veca},\; \conj{\matA} & complex conjugate
 		of the scalar~\sca, element-wise complex conjugate of the vector~\veca, and the matrix~\matA\\
	\herm{\veca},\; \herm{\matA} & Hermitian transpose of the vector~\veca and
 		the matrix~\matA\\
 	\identity_N & identity matrix of size $N\times N$ \\
 	\rank(\matA)	& rank of the matrix~\matA\\
 	\eval(\matA) & eigenvalue of the matrix~\matA \\
 	\eval_{\min}(\matA), \eval_{\min}(\opA) &  smallest eigenvalue of the matrix~$\matA$,  smallest spectral value of the self-adjoint operator~$\opA$\\
 	\eval_{\max}(\matA), \eval_{\max}(\opA) &  largest eigenvalue of the matrix~$\matA$,  largest spectral value of the self-adjoint operator~$\opA$\\
	 		\iu & $\sqrt{-1}$\\
	 		\define & definition\\
			\setA,\setB,\dots	& sets\\
	 		\reals, \complexset, \integers, \naturals	& real line, complex plane, set of all integers, set of natural numbers (including zero)\\
	 		\hilfunspace & Hilbert space of complex-valued finite-energy functions\\
		\sqintspaceBL & space of square-integrable functions bandlimited to \bandwidth\Hz\\
			\hilspace & abstract Hilbert space\\
			\hilseqspace & Hilbert space of square-summable sequences\\
	 		\inner{\veca}{\vecb}	& inner product of the vectors~\veca and~\vecb: $\inner{\veca}{\vecb}\define\sum_{\sci} \elementof{\veca}{\sci}\conj{(\elementof{\vecb}{\sci})} $ \\
	 		\inner{\fun}{\altfun} & depending on the context: inner product in the abstract Hilbert space \hilspace or inner product of the functions~$\fun(\time)$ and~$\altfun(\time)$: $\inner{\fun}{\altfun}\define\int_{-\infty}^\infty \fun(\time) \conj{\altfun}(\time) d \time$\\
	 		\vecnorm{\veca}^2 & squared $\ell^{2}$-norm of the vector~$\veca$: $\vecnorm{\veca}^2\define\sum_{\sci} \abs{\elementof{\veca}{\sci}}^2$\\
	\vecnorm{\altfun}^2 & depending on the context: squared norm in the abstract Hilbert space \hilspace or squared $\hilfunspace$-norm of the function~$\altfun(\time)$: $\vecnorm{\altfun}^2 \define\int_{-\infty}^\infty \abs{\altfun(\time)}^2 d \time$\\
		\iop_\hilspace, \iop_{\hilseqspace}, \iop_{\hilfunspace},  \iop_{\sqintspaceBL} & identity operator in the corresponding space\\
		\rng(\opA) & range space of operator \opA\\
		\ad\opA & adjoint of operator \opA\\
		\ft{\fun}(\freq) & Fourier transform of $\fun(\time)$: $\ft{\fun}(\freq)\define\int_{-\infty}^{\infty}\fun(\time) e^{-\iu  2\pi  \time\freq} d\time$\\
		\ftd{\fun}(\freq)& Discrete-time Fourier transform of $\fun[k]$: $\ftd{\fun}(\freq)\define\sum_{k=-\infty}^\infty \fun[k]e^{-\iu 2\pi k\freq}$  
\end{tabular}
\newpage 
\bibliographystyle{IEEEtran}
\bibliography{IEEEabrv,publishers,confs-jrnls,vebib}

\end{document}